\def\Dscr{{\mathcal D}}
\def\Fscr{{\mathcal F}}
\def\Gscr{{\mathcal G}}
\def\Lscr{{\mathcal L}}
\def\Mscr{{\mathcal M}}
\def\Pscr{{\mathcal P}}
\def\Rscr{{\mathcal R}}
\def\Sscr{{\mathcal S}}
\def\Tscr{{\mathcal T}}
\newcommand{\Momega}{\mathcal{M}_{\omega}}
\newcommand{\FOL   }{FO\lambda}
\newcommand{\FOLDN }{\FOL^{\Delta\N}}
\newcommand{\FOLNb }{\FOL^{\nabla}}
\newcommand{\Linc}{{\rm Linc}^-}
\newcommand{\fLinc}{{\rm Linc}}
\newcommand{\N}{{\rm I} \! {\rm N}}
\newcommand{\Seq}[2]{#1\longrightarrow #2}
\newcommand{\botL}{\bot{\cal L}}
\newcommand{\bulletL}{\bullet{\cal L}}
\newcommand{\cL}{\hbox{\sl c}{\cal L}}
\newcommand{\circL}{\circ{\cal L}}
\newcommand{\circR}{\circ{\cal R}}
\newcommand{\defeq}{\stackrel{\scriptscriptstyle\triangle}{=}}
\newcommand{\defmu}{\stackrel{\mu}{=}}
\newcommand{\defnu}{\stackrel{\nu}{=}}
\newcommand{\eqL}{{\rm eq}{\cal L}}
\newcommand{\eqR}{{\rm eq}{\cal R}}
\newcommand{\existsL}{\exists{\cal L}}
\newcommand{\existsR}{\exists{\cal R}}
\newcommand{\forallL}{\forall{\cal L}}
\newcommand{\forallR}{\forall{\cal R}}
\newcommand{\indR}{{\rm I}{\cal R}}
\newcommand{\indRP}{{\rm I}{\cal R}_p}
\newcommand{\indL}{{\rm I}{\cal L}}
\newcommand{\coindR}{{\rm CI}{\cal R}}
\newcommand{\coindL}{{\rm CI}{\cal L}}
\newcommand{\coindLP}{{\rm CI}{\cal L}_p}
\newcommand{\init}{\hbox{\sl init}}
\newcommand{\landL}{\land{\cal L}}
\newcommand{\landR}{\land{\cal R}}
\newcommand{\level}[1]{{\rm lvl}(#1)}
\newcommand{\lorL}{\lor{\cal L}}
\newcommand{\lorR}{\lor{\cal R}}
\newcommand{\lub}[1]{{\rm lub}(#1)}
\newcommand{\mc}{\hbox{\sl mc}}
\newcommand{\measure}[1]{{\rm ht}(#1)}
\newcommand{\oimpL}{\oimp{\cal L}}
\newcommand{\oimpR}{\oimp{\cal R}}
\newcommand{\oimp}{\supset}
\newcommand{\ra}{\to}
\newcommand{\topR}{\top{\cal R}}
\newcommand{\wL}{\hbox{\sl w}{\cal L}}
\def\RED{{\mathbf{RED}}}
\def\NM{{\mathbf{NM}}}
\def\idrv{{\mathrm{Id}}}
\newtheorem{thm}{Theorem}
\newdefinition{definition}{Definition }
\newproof{proof}{Proof}
\newtheorem{lemma}[thm]{Lemma}
\newtheorem{corollary}[thm]{Corollary}
\newdefinition{remark}{Remark }
\def\qed{$\hfill\Box$}
\newenvironment{lemmacp}[1]{\noindent {\bf Lemma~\ref{#1}.} \em}{}
\long\def\ignore#1{}
\newlength{\infwidthi}
\journal{Journal of Applied Logic}
\begin{document}

\begin{frontmatter}

\title{Cut Elimination for a Logic with Induction and Co-induction}

\author{Alwen Tiu\corref{cor1}}

\ead{Alwen.Tiu@anu.edu.au }
\address{Logic and Computation Group\\
College of Engineering and Computer Science\\
The Australian National University}

\author{Alberto Momigliano\fnref{f1}}
\ead{amomigl1@inf.ed.ac.uk}
\address{Laboratory for Foundations of Computer Science\\
School of Informatics\\ The University of Edinburgh}

\cortext[cor1]{{Corresponding author. The Australian National
    University, Canberra, ACT 0200, Australia.  Phone: +61 (0)2 6125
    5992.  Fax: +61 (0)2 6125 8645}}

\fntext[f1]{Present address: Dipartimento di Scienze
  dell'Informazione, Universit\`{a} degli Studi di Milano, Italy.}

\begin{abstract}
  Proof search has been used to specify a wide range of computation
  systems.  In order to build a framework for reasoning about such
  specifications, we make use of a sequent calculus involving
  induction and co-induction.  These proof principles are based on a
  proof theoretic (rather than set-theoretic) notion of
  \emph{definition}~\cite{PID,eriksson91elp,schroeder-heister93lics,mcdowell00tcs}.
  Definitions are akin to logic programs, where the left and right
  rules for defined atoms allow one to view theories as ``closed'' or
  defining fixed points.  The use of definitions and free equality
  makes it possible to reason intentionally about syntax.  We add in a
  consistent way rules for pre and post fixed points, thus allowing
  the user to reason inductively and co-inductively about properties
  of computational system making full use of higher-order abstract
  syntax.  Consistency is guaranteed via cut-elimination, where we
  give the first, to our knowledge, cut-elimination procedure in the
  presence of general inductive and co-inductive definitions.
\end{abstract}

\begin{keyword}
logical frameworks\sep  (co)-induction\sep higher-order abstract syntax\sep
cut-elimination\sep parametric reducibility.
\end{keyword}
\end{frontmatter}

\section{Introduction}
\label{sec:intro}

A common approach to specifying computation systems is via deductive
systems. Those are used to specify and reason about various logics, as
well as aspects of programming languages such as operational
semantics, type theories, abstract machines \etc.  Such specifications
can be represented as logical theories in a suitably expressive formal
logic where \emph{proof-search} can then be used to model the
computation.  A logic used as a specification language is known as a
\emph{logical frameworks}~\cite{pfenning01handbook}, which comes
equipped with a representation methodology.  The encoding of the
syntax of deductive systems inside formal logic can benefit from the
use of \emph{higher-order abstract syntax} (HOAS) 
a high-level and declarative treatment of object-level bound variables
and substitution. At the same time, we want to use such a logic to
reason over the \emph{meta-theoretical} properties of object
languages, for example type preservation in operational
semantics~\cite{mcdowell02tocl}, soundness and completeness of
compilation~\cite{Momigliano03fos} or congruence of bisimulation in
transition systems~\cite{mcdowell03tcs}. Typically this involves
reasoning by (structural) induction and, when dealing with infinite
behavior, co-induction~\cite{Jacobs97}.

The need to support both inductive and co-inductive reasoning and some
form of HOAS requires some careful design decisions, since the two are
prima facie notoriously incompatible. While any meta-language based on
a $\lambda$-calculus can be used to specify and animate HOAS
encodings, meta-reasoning has traditionally involved (co)inductive
specifications both at the level of the syntax and of the judgements
--- which are of course unified at the type-theoretic level. The first
provides crucial freeness properties for datatypes constructors, while
the second offers principles of case analysis and (co)induction. This
is well-known to be problematic, since HOAS specifications may lead to
non-monotone (co)inductive operators, which by cardinality and
consistency reasons are not permitted in inductive logical
frameworks. Moreover, even when HOAS is weakened so as to be made
compatible with standard proof assistants~\cite{despeyroux94lpar} such
as HOL or Coq, the latter suffer the fate of allowing the existence of
too many functions and yielding the so called \emph{exotic}
terms. Those are canonical terms in the signature of an HOAS encoding
that do not correspond to any term in the deductive system under
study. This causes a loss of adequacy in HOAS specifications, which is
one of the pillar of formal verification, and it undermines the trust
in formal derivations. On the other hand, logics such as
LF~\cite{harper93jacm} that are weak by design 
in order to support this style of syntax are not directly endowed with
(co)induction principles.


The contribution of this paper lies in the design of a new logic,
called $\Linc$ (for a logic with $\lambda$-terms, induction and
co-induction),\footnote{The ``minus'' in the terminology refers to the
  lack of the $\nabla$-quantifier \wrt the eponymous logic in Tiu's
  thesis~\cite{tiu04phd}.} which carefully adds principles of
induction and co-induction to a higher-order intuitionistic logic
based on a proof theoretic notion of \emph{definition}, following on
work (among others) by Lars Halln{\"{a}}s~\cite{PID}, Eriksson
\cite{eriksson91elp}, Schroeder-Heister~\cite{schroeder-heister93lics}
and McDowell and Miller~\cite{mcdowell00tcs}.  Definitions are akin to
logic programs, but allow us to view theories as ``closed'' or
defining fixed points.  This alone permits to perform case analysis
independently from induction principles.  Our approach to formalizing
induction and co-induction is via the least and greatest solutions of
the fixed point equations specified by the definitions. 
The proof rules for induction and co-induction make use of the notion
of \emph{pre-fixed points} and \emph{post-fixed points}
respectively. In the inductive case, this corresponds to the induction
invariant, while in the co-inductive one to the so-called simulation.
Judgements are encoded as definitions accordingly to their informal
semantics, either inductive or co-inductive.

The simply typed language and the notion of free equality underlying
$\Linc$, enforced via (higher-order) unification in an inference rule,
make it possible to reason \emph{intensionally} about syntax.
In fact, we can support HOAS encodings of constants
and we can \emph{prove} the freeness properties of those constants,
namely injectivity, distinctness and case exhaustion, although they
cannot be the constructors of a (recursive)
datatype.  


$\Linc$ can be proved to be a conservative extension of
$\FOLDN$~\cite{mcdowell00tcs} and a generalization (with a term
language based on simply typed $\lambda$-calculus) of Martin-L\"of
first-order theory of iterated inductive
definitions~\cite{martin-lof71sls}. Moreover, to the best of our
knowledge, it is the first sequent calculus with a syntactical
cut-elimination theorem for co-inductive definitions.  In recent
years, several logical systems have been designed that build on the
core features of $\Linc$. In particular, one interesting, and
orthogonal, extension is the addition of the
$\nabla$-quantifier~\cite{miller05tocl,tiu04phd,Tiu07,gacek08lics},
which allows one to reason about the intentional aspects of
\emph{names and bindings} in object syntax specifications (see,
e.g.,~\cite{AbellaSOS,TiuMillerpi}).  The cut elimination
proof presented in this paper can be used as a springboard towards cut
elimination procedures for more expressive (conservative) extensions
of $\Linc$. 

In fact, the possibility of adapting the cut elimination proof for
$\Linc$ to various extensions of $\Linc$ with $\nabla$ is one of the
main reasons to introduce a \emph{direct} syntactic cut elimination
proof.  We note that there are at least a couple of indirect methods
to prove cut elimination in a logic with inductive and/or co-inductive
definitions.  The first of such methods relies on encodings of
inductive and co-inductive definitions as second-order (or
higher-order) formulae. This approach is followed in a recent work by
Baelde and Miller~\cite{baelde07lpar} where a  logic similar to $\Linc$
is considered. Cut elimination in their work is proved indirectly via
an encoding into higher-order linear logic. However, in the presence
of $\nabla$, the existence of such an encoding is presently unknown.
The second approach is via semantical methods. This approach is taken
in a recent work by Brotherston and Simpson~\cite{BrotherstonS07},
which provide a model for
a classical first-order logic with inductive definitions, hence, cut
elimination follows  by the semantical completeness of the cut
free fragment.  It is not obvious how such semantical methods can be
adapted to prove cut elimination for extensions of $\Linc$ with
$\nabla$.  This is 
 because the semantics of $\nabla$ itself is not yet
very well understood, although there have been some recent attempts,
see~\cite{Miculan05FOSSACS,Schoepp07LFMTP,GabbayFL}.

The present paper is an extended and revised version
of
~\cite{Momigliano03TYPES}.  In the conference paper, the co-inductive
rule had a technical side condition that is restrictive and
unnatural. The restriction was essentially imposed by the particular
cut elimination proof technique outlined in that paper.  This
restriction has been removed in the present version, and the
(co-)induction rules have been generalized. For the latter, the
formulation of the rules is inspired by a second-order encoding of
least and greatest fixed points.  Consequently, we now develop a new
cut elimination proof, which is radically different from the previous
proof, using a reducibility-candidate technique, which is influenced
by Girard's strong normalisation proof for System
F~\cite{girard89book}.  This paper is concerned only with the cut
elimination proof of $\Linc$.  For examples and applications of
$\Linc$ and its extensions with $\nabla$, we refer the interested reader
to~\cite{tiu04phd,Bedwyr,gacek08lics,Abella,AbellaSOS,TiuMillerpi}.


The rest of the paper is organized as follows. Section~\ref{sec:linc}
introduces the sequent calculus for the logic.  Section~\ref{sec:drv}
presents two transformations of derivations that are essential to the
cut reduction rules and the cut elimination proof in subsequent
sections. Section~\ref{sec:cut-elim} is the heart of the paper: we
first (Subsection~\ref{sec:reduc}) give a (sub)set of reduction rules
that transform a derivation ending with a cut rule to another
derivation. The complete set of reduction can be found in
Appendix~\ref{app:reduc}. We then introduce the crucial notions of
\emph{normalizability} (Subsection~\ref{sec:norm}) and of
\emph{parametric reducibility} after Girard
(Subsection~\ref{sec:red}). Detailed proofs of the main lemma related
to reducibility candidates are in Appendix~\ref{app:red}.  The
central result of this paper, i.e., cut elimination, is proved in
details in Subsection~\ref{sec:ceproof}. 
Section~\ref{sec:lrel} surveys the related work and
concludes the paper.

\section{The Logic $\Linc$}
\label{sec:linc}

\begin{figure}
$$
\begin{array}{cc}
  \multicolumn{2}{c}{
    \infer[init]{\Seq{C}{C}}{} 
    \quad
    \infer[\cL]{\Seq{B,\Gamma}{C}}
    {\Seq{B,B,\Gamma}{C}}
    \quad
    \infer[\wL]{\Seq{B,\Gamma}{C}}{\Seq{\Gamma}{C}}
  }
  \\ \\
  \multicolumn{2}{c}{       
    \infer[\begin{array}{l}
      \mc, 
      \mbox{where } n > 0 
    \end{array}]
    {\Seq{\Delta_1,\dots,\Delta_n, \Gamma}{C}}
    {\Seq{\Delta_1}{B_1}
      & \cdots &
      \Seq{\Delta_n}{B_n} &
      \Seq{B_1,\dots,B_n, \Gamma}{C}}}        
  \\ \\
  \infer[\botL]{\Seq{\bot,\Gamma}{B}}{\rule{0pt}{6pt}}
  & \infer[\topR]{\Seq{\Gamma}{\top}}{}
  \\ \\
  \infer[\landL, i \in \{1,2\}]{\Seq{B_1 \land B_2,\Gamma}{D}}
  {\Seq{B_i,\Gamma}{D}}
  & 
  \infer[\landR]{\Seq{\Gamma}{B \land C}}
  {\Seq{\Gamma}{B}
    & \Seq{\Gamma}{C}}
  \\ \\
  \infer[\lorL]{\Seq{B \lor C,\Gamma}{D}}
  {\Seq{B,\Gamma}{D}
    & \Seq{C,\Gamma}{D}}
  & 
  \infer[\lorR, i \in \{1,2\}]{\Seq{\Gamma}{B_1 \lor B_2}}
  {\Seq{\Gamma}{B_i}}
  \\ \\
  \infer[\oimpL]{\Seq{B \oimp C,\Gamma}{D}}
  {\Seq{\Gamma}{B}
    & \Seq{C,\Gamma}{D}}
  & \infer[\oimpR]{\Seq{\Gamma}{B \oimp C}}
  {\Seq{B,\Gamma}{C}}
  \\ \\
  \infer[\forallL]{\Seq{\forall x.B\,x,\Gamma}{C}}
  {\Seq{B\,t,\Gamma}{C}}
  & \infer[\forallR]{\Seq{\Gamma}{\forall x.B\,x}}
  {\Seq{\Gamma}{B\,y}}
  \\ \\
  \infer[\existsL]{\Seq{\exists x.B\,x,\Gamma}{C}}
  {\Seq{B\,y,\Gamma}{C}}
  & 
  \infer[\existsR]{\Seq{\Gamma}{\exists x.B\,x}}
  {\Seq{\Gamma}{B\,t}}
\end{array}
$$
\dotfill\\
\paragraph{Equality rules}
$$
\infer[\eqL] {\Seq{s = t, \Gamma}{C}} {
  \{\Seq{\Gamma\rho}{C\rho}~\mid~s\rho =_{\beta\eta} t\rho \} } \qquad
\infer[\eqR] {\Seq{\Gamma}{t = t}} {}
$$
\dotfill\\
\paragraph{Induction rules}
$$
\infer[\indL, p\,\vec x \defmu B\,p\,\vec x] {\Seq{\Gamma,
    p\,\vec{t}}{C}} {\Seq{B\,S \, \vec{y}}{S\,\vec{y}} & \Seq{\Gamma,
    S\,\vec{t}}{C} }
$$
$$
\infer[\indR, p\,\vec x \defmu B\,p\,\vec x]
{\Seq{\Gamma}{p\,\vec{t}}} {\Seq{\Gamma}{B\,X^p\,\vec{t}}}
\qquad\qquad \infer[\indRP, p\,\vec x \defmu B\,p\,\vec x]
{\Seq{\Gamma}{X^p\,\vec{t}}} {\Seq{\Gamma}{B\,X^p\,\vec{t}}}
$$
\dotfill\\[1em]
\paragraph{Co-induction rules}
$$
\infer[\coindL, p\,\vec x \defnu B\,p\,\vec x]
{\Seq{p\,\vec{t}, \Gamma}{C}} {\Seq{B\,X^p\,\vec{t}, \Gamma}{C} }
\qquad\qquad \infer[\coindLP, p\,\vec x \defnu B\,p\,\vec x]
{\Seq{X^p\,\vec{t}, \Gamma}{C}} {\Seq{B\,X^p\,\vec{t}, \Gamma}{C} }
$$
$$
\infer[\coindR, p\,\vec x \defnu B\,p\,\vec x]
{\Seq{\Gamma}{p\,\vec{t}}} {\Seq{\Gamma}{S\,\vec{t}} &
  \Seq{S\,\vec{y}}{B\,S\,\vec{y}} }
$$
\caption{The inference rules of $\Linc$}
\label{fig:linc}
\end{figure}

The logic $\Linc$ shares the core fragment of $\FOLDN$, which is an
intuitionistic version of Church's Simple Theory of Types.  We shall
assume that the reader is familiar with Church's simply typed $\lambda$-calculus
(with both $\beta$ and $\eta$ rules), so we shall recall only the basic
syntax of the calculus here. A simple type is either a \emph{base type}
or a compound type formed using the function-type constructor $\rightarrow$.
Types are ranged over by $\alpha$, $\beta$ and $\tau$.
We assume an infinite set of typed variables, written $x_\alpha$, $y_\beta$, etc.  
The syntax of $\lambda$-terms is given by the following grammar:
$$
s, t ::= x_\tau \mid (\lambda x_\tau.\ t) \mid (s~t)
$$
To simplify presentation, in the following we shall often omit the type 
index in variables and $\lambda$-abstraction. The notion of free and bound
variables are defined as usual.

Following Church, we distinguish a base type $o$ to denote formulae,
and we shall represent formulae as simply typed $\lambda$-terms of
type $o$.  We assume a set of typed constants that correspond to
logical connectives.  The constants $\top : o$ and $\bot : o$ denote
`true' and `false', respectively.  Propositional binary connectives,
i.e., $\land$, $\lor$, and $\oimp$, are assigned the type $o
\rightarrow o \rightarrow o$.  Quantifiers are represented by indexed
families of constants: $\forall_\tau$ and $\exists_\tau$, both are of
type $(\tau \rightarrow o) \rightarrow o$.  We also assume a family of
typed equality symbols $=_\tau : \tau \rightarrow \tau \rightarrow o$.
Although we adopt a representation of formulae as $\lambda$-terms, we
shall use a more traditional notation when writing down formulae. For
example, instead of writing $(\land~A~B)$, we shall use an infix
notation $(A \land B)$.  Similarly, we shall write $\forall_\alpha
x. P$ instead of $\forall_\alpha~(\lambda x_\alpha.P)$. Again, we
shall omit the type annotation when it can be inferred from the
context of the discussion.

The type $\tau$ in quantifiers and the equality predicate are 
restricted to those simple types that do not
contain occurrences of $o$. Hence our logic is essentially first-order, since
we do not allow quantification over predicates. 
As we shall often refer to this kind of restriction to types, we give
the following definition:

\begin{definition}
A simple type $\tau$ is \emph{essentially first-order} (efo) if it
is generated by the following grammar:
$$
  \tau \mathrel{::= }   k \mid \tau \rightarrow \tau\\
$$
where $k$ is a base type other than $o$. 
\end{definition}


For technical reasons (for presenting (co-)inductive proof rules), we
introduce a notion of \emph{parameter} into the syntax of formulae.
Intuitively, they play the role of eigenvariables ranging over the
recursive call in a fixed point expression.  More precisely, to each
predicate symbol $p$, we associate a countably infinite set $\Pscr_p$,
called the \emph{parameter set for $p$}. Elements of $\Pscr_p$ are
ranged over by $X^p$, $Y^p$, $Z^p$, etc, and have the same type as
$p$. 
When we refer to formulae of $\Linc$, we have in mind simply-typed
$\lambda$-terms of type $o$ \emph{in $\beta\eta$-long normal
  form}. Thus formulae of the logic $\Linc$ can be equivalently
defined via the following grammar:
$$
\begin{array}{rcl}
  F &\mathrel{::= } &  X^p\,\vec t \mid s =_\tau t \mid p\,\vec t   \mid 
  \bot \mid \top \mid F \land F \mid F \lor F \mid F \oimp T
  \mid \forall_\tau x. F \mid \exists_\tau x. F.
\end{array}
$$
where $\tau$ is an efo-type. 
We shall omit the type annotation in $s =_\tau t$ when it is not
important to the discussion.

A \emph{substitution} is a type-preserving mapping from variables to terms.  
We assume the usual notion of capture-avoiding substitutions. Substitutions are
ranged over by lower-case Greek letters, e.g., $\theta$, $\rho$ and
$\sigma$. Application of substitution is written in postfix notation,
\eg $t\theta$ denotes the term resulting from an application of
substitution $\theta$ to $t$. Composition of substitutions, denoted by
$\circ$, is defined as $t (\theta \circ \rho) = (t\theta)\rho$.

The whole logic is presented in the sequent calculus in
Figure~\ref{fig:linc}, including rules for equality and fixed points,
as we discuss in Section~\ref{ssec:eq} and~\ref{ssec:coind}. A sequent
is denoted by $\Seq{\Gamma}{C}$ where $C$ is a formula in
$\beta\eta$-long normal form and $\Gamma$ is a multiset of formulae,
also in $\beta\eta$-long normal form.  Notice that in the presentation
of the rule schemes, we make use of HOAS, e.g., in the application
$B\,x$ it is implicit that $B$ has no free occurrence of
$x$. Similarly for the (co)induction rules.  
We work modulo $\alpha$-conversion without further notice.  In the
$\forallR$ and $\existsL$ rules, $y$ is an eigenvariable that is not
free in the lower sequent of the rule.  The $\mc$ rule is a
generalization of the cut rule that simplifies the presentation of the
cut-elimination proof.

Whenever we write a sequent, it is assumed implicitly that the
formulae are well-typed: the type
context, i.e., the types of the constants and the eigenvariables used
in the sequent, is left implicit as well as they can be inferred
from the type annotations of the (eigen)variables.


In some inference rules, reading them bottom up, new eigenvariables
and parameters may be introduced in the premises of the rules, for
instance, in $\existsL$ and $\forallR$, as typical in sequent
calculus.  However, unusually, we shall also allow $\existsR$,
$\forallL$ and $mc$ to possibly introduce new eigenvariables (and new
parameters, in the case of $mc$), again reading the rules bottom-up.
Thus the term $t$ in the premise of the $\existsR$-rule may contain a
free occurrence of an eigenvariable not already occuring in the
conclusion of the rule. The implication of this is that $\exists_\tau
x. \top$ is provable for any type $\tau$; in other words, there is an
implicit assumption that all types are non-empty. Hence the
quantifiers in our setting behave more classically than
intuitionistically.
The reason for this rather awkward treatment of quantifiers is merely
a technical convenience. We could forgo the non-emptiness assumption
on types by augmenting sequents with an explicit signature acting as a
typing environment, and insisting that the term $t$ in $\existsR$ to
be well-formed under the typing environment of the conclusion of the
rule.  However, adding explicit typing contexts into sequents
introduces another layer of bureaucracy in the proof of cut
elimination, which is not especially illuminating. And since our
primary goal is to show the central arguments in cut elimination
involving (co-)induction, we opt to present a slightly simplified
version of the logic so that the main technical arguments (which are
already quite complicated) in the cut elimination proof, related to
(co-)induction rules, can be seen more clearly.  The cut elimination
proof presented in the paper can be adapted to a different
presentation of $\Linc$ with explicit typing contexts;
see~\cite{tiu04phd,Tiu07} for an idea of how such an adaptation may be
done.


We extend the logical fragment with a proof theoretic notion of
equality and fixed points.

\subsection{Equality}
\label{ssec:eq}
The right introduction rule for equality is reflexivity, that is,
it recognizes that two terms are syntactically equal.
The left introduction rule is more interesting.  The substitution
$\rho$ in $\eqL$ is a \emph{unifier} of $s$ and $t$. Note that we
specify the premise of $\eqL$ as a set, with the intention that every
sequent in the set is a premise of the rule.  This set is of course
infinite; every unifier of $(s,t)$ can be extend to another one (e.g.,
by adding substitution pairs for variables not in the terms).
However, in many cases, it is sufficient to consider a particular set
of unifiers, which is often called a \emph{complete set of unifiers
  (CSU)}~\cite{BaaderS01}, from which any unifier can be obtained by
composing a member of the CSU set with a substitution.  In the case
where the terms are first-order terms, or higher-order terms with the
pattern restriction~\cite{Miller91elp}, the set CSU is a singleton,
i.e., there exists a most general unifier (MGU) for the terms.


Our rules for equality actually encompasses the notion of \emph{free
  equality} as commonly found in logic programming, in the form of
Clark's equality theory~\cite{clark78}: injectivity of function
symbols, inequality between distinct function symbols, and the
``occur-check'' follow from rule $\eqL$-rule.  For instance, given a
base type $nt$ (for natural numbers) and the constants $z : nt$ (zero)
and $s : nt \rightarrow nt$ (successor), we can derive $\forall x.\ z
= (s~x) \oimp \bot$ as follows:
$$
\infer[\forallR] {\Seq {} {\forall x.\ z = (s~x) \oimp \bot}} {
  \infer[\oimpR] {\Seq{}{z = (s~y) \oimp \bot}} {\infer[\eqL] {\Seq
      {z = (s~y)} {\bot}} {} } }
$$
Since $z$ and $s~y$ are not unifiable, the $\eqL$ rule above has empty
premise, thus concluding the derivation. A similar derivation
establishes the occur-check property, e.g., $\forall x.\ x = (s~x)
\oimp \bot$. We can also prove the injectivity of the successor
function, \ie $\forall x \forall y. (s~x) = (s~y) \oimp x = y$.

This proof theoretic notion of equality has been considered in several
previous work \eg by Schroeder-Heister~\cite{schroeder-heister93lics},
and McDowell and Miller~\cite{mcdowell00tcs}.

\subsection{Induction and co-induction}
\label{ssec:coind}
One way of adding induction and co-induction to a logic is to
introduce fixed point expressions and their associated introduction
and elimination rules, \ie using the $\mu$ and $\nu$ operators of the
(first-order) $\mu$-calculus. This is essentially what we shall follow
here, but with a different notation. Instead of using a ``nameless''
notation with $\mu$ and $\nu$ to express fixed points, we associate a
fixed point equation with an atomic formula. That is, we associate
certain designated predicates with a \emph{definition}.  This notation
is clearer and more convenient as far as our 
applications are concerned. For a proof system using nameless notation
for (co)inductive predicates, the interested reader is referred to
recent work by Baelde and Miller~\cite{baelde07lpar}.

\begin{definition}
  \label{def:def-clause}
  An \emph{inductive definition clause} is written $\forall\vec{x}.\ p
  \, \vec{x} \defmu B\,\vec{x}$, where $p$ is a predicate constant. 
  The atomic formula $p \, \vec{x}$ is called the \emph{head} of the
  clause, and the formula $B\,\vec{x}$, where $B$ is a closed term
  containing no occurrences of parameters, is called the \emph{body}.
  Similarly, a \emph{ co-inductive definition clause} is written
  $\forall\vec{x}.\ p \, \vec{x} \defnu B\,\vec{x}$.  The symbols
  $\defmu$ and $\defnu$ are used simply to indicate a definition
  clause: they are not a logical connective.  We shall write $\forall
  \vec x.\ p\,\vec x \defeq B\,\vec x$ to denote a definition clause
  generally, i.e., when we are not interested in the details of
  whether it is an inductive or a co-inductive definition. A
  \emph{definition} is a finite set of definition clauses.  A predicate may
  occur only at most once in the heads of the clauses of a definition.
  We shall restrict to \emph{non-mutually recursive} definitions.
  That is, given two  clauses $\forall \vec x.\ p\, \vec x
  \defeq B\,\vec x$ and $\forall \vec y.\ q\, \vec y \defeq C\,\vec y$
  in a definition, where $p \not = q$, if $p$ occurs in $C$ then $q$
  does not occur in $B$, and vice versa.
\end{definition}
Note that the above restriction to non-mutual recursion is immaterial,
since in the first-order case it is well known how one can easily
encode mutually recursive predicates as a single predicate with an
extra argument. The rationale behind that restriction is merely to
simplify the presentation of inference rules and the cut elimination
proof.  Were we  to allow mutually recursive definitions, the
introduction rules $\indL$ and $\coindR$ for a predicate $p$ would
have possibly more than two premises, depending on the number of
predicates which are mutually dependent on $p$
(see~\cite{BrotherstonS07} for a presentation of introduction rules
for mutually dependent definitions).

For technical convenience we also bundle up all the
definitional clause for a given predicate in a single clause, 
following the same principles of the \emph{iff-completion} in logic
programming
. Further, in order to simplify the presentation of rules that involve
predicate substitutions, we denote a definition using an abstraction
over predicates, that is
$$
\forall \vec x.\ p\,\vec x \defeq B\,p\,\vec x
$$
where $B$ is an abstraction with no free occurrence of predicate
symbol $p$ and variables $\vec x$.  Substitution of $p$ in the body of
the clause with a formula $S$ can then be written simply as
$B\,S\,\vec x$.  When writing definition clauses, we often omit the
outermost universal quantifiers, with the assumption that free
variables in a clause are universally quantified. For example even
numbers are defined as follows:
$$ev~x \defmu (x = z) \lor (\exists y.\ x = (s~(s~y)) \land ev~y)
$$
where in this case $B$ is of the form $\lambda p w.\ (w = z) \lor
(\exists y. w = (s~(s~y)) \land p~y)$.


The left and right rules for (co-)inductively defined atoms are given
at the bottom of Figure~\ref{fig:linc}.  In rules $\indL$ and
$\coindR$, the abstraction $S$ is an invariant of the (co-)induction
rule. 
The variables $\vec{y}$ are
new eigenvariables and $X^p$ is a new parameter not already occuring
in the lower sequent.  For the induction rule $\indL$, $S$ denotes a
pre-fixed point of the underlying fixed point operator. Similarly, for
the co-induction rule $\coindR$, $S$ can be seen as denoting a
{post-fixed point} of the same operator.  Here, we use a
characterization of induction and co-induction proof rules as,
respectively, the least and the greatest solutions to a fixed point
equation.


Notice that the right-introduction rules for inductive predicates and
parameters (dually, the left-introduction rules for co-inductive
predicates and parameters) are slightly different from the
corresponding rules in $\fLinc$-like logics (see
Remark~\ref{rem:unf}).  These rules can be better understood by the
usual interpretation of (co-)inductive definitions in second-order
logic~\cite{Pfenning89mfps,Paulson97} (to simplify presentation, we
show only the propositional case here):
$$
p \defmu B\,p \qquad \leadsto \qquad \forall p. (B\,p \oimp p) \oimp p
$$
$$
p \defnu B\,p \qquad \leadsto \qquad \exists p. p \land (p \oimp
B\,p).
$$
Then the right-introduction rule for inductively defined predicate
will involve an implicit universal quantification over predicates.  As
standard in sequent calculus, such a universal quantified predicate
will be replaced by a new eigenvariable (in this case, a new
parameter), reading the rule bottom up.  Note that if we were to
follow the above second-order interpretation literally, an
alternative rule for inductive predicates could be:
$$
\infer[\indR, p \defmu B\,p] {\Seq{\Gamma}{p} } {\Seq{B\,X^p \oimp
    X^p, \Gamma}{X^p} }
$$
Then there would be no need to add the $\indRP$-rule since it would be
derivable, using the clause $B\,X^p \oimp X^p$ in the left hand side of
the sequent. (This, of course, is true only when such an $\indRP$
instance appears above an $\indR$ instance for $p$.)  Our presentation
has the advantage that it simplifies the cut-elimination arguments in
the subsequent sections.  The left-introduction rule for
co-inductively defined predicate can be explained dually.

A similar encoding of (co-)inductive definitions as second-order
formulae is used in~\cite{baelde07lpar}, where cut-elimination is
indirectly proved  by appealing to a \emph{focused} proof system for
higher-order linear logic. A similar approach can be followed for
$\Linc$, but we prefer to develop a direct cut-elimination proof,
since such a proof can serve as the basis of cut-elimination for
extensions of $\Linc$, for example, with the
$\nabla$-quantifier~\cite{miller05tocl,gacek08lics}.

\begin{remark}[Fixed point unfolding]

\label{rem:unf}
  A commonly used form of introduction rules for definitions, or fixed
  points, uses an unfolding of the definitions.  This form of rules is
  followed in several related logics, e.g.,
  $\FOLDN$~\cite{mcdowell00tcs},
  $\fLinc$~\cite{Momigliano03TYPES,tiu04phd} and
  $\mu$-MALL~\cite{baelde07lpar}.  The right-introduction rule for
  inductive definitions, for instance, takes the form:
$$
\infer[\indR', p\,\vec x \defmu B\,p\,\vec x] {\Seq{\Gamma}{p\,\vec
    t}} {\Seq{\Gamma}{B\,p\,\vec t} }
$$
That is, in the premise, the predicate $p$ is replaced with the body
of the definition.  The logic $\fLinc$, like $\FOLDN$, imposes a
stratification on definitions, which amounts to a strict
positivity condition: the head of a definition can only appear in a
strictly positive position in the body, i.e., it never appears to the
left of an implication. Let us call such a definition a \emph{stratified definition}.  
For stratified definitions, the rule $\indR'$ can be derived as follows:
$$
\infer[mc] {\Seq{\Gamma}{p\,\vec t}} {\Seq{\Gamma}{B\,p\,\vec t} &
  \infer[\indR] {\Seq{B\,p\,\vec t}{p\,\vec t}} {\infer[]
    {\Seq{B\,p\,\vec t}{B\,X^p\,\vec t}} {\deduce{\vdots} {
        \infer[\indL] {\Seq{p\,\vec u}{X^p\,\vec u}} {\infer[\indRP]
          {\Seq{B\,X^p\,\vec x}{X^p\,\vec x}} {\infer[init]
            {\Seq{B\,X^p\,\vec x}{B\,X^p\,\vec x}} {} } & \infer[init]
          {\Seq{X^p\,\vec u}{X^p\,\vec u}}{} } } } } }
$$
where the `dots' are a derivation composed using left and right
introduction rules for logical connectives in $B$.  Notice that all
leaves of the form $\Seq{p\,\vec u}{X^p\,\vec u}$ can be proved by
using the $\indL$ rule, with $X^p$ as the inductive invariant.
Conversely, given a stratified definition, any proof in
$\Linc$ using that definition can be transformed into a proof of
$\fLinc$ simply by replacing $X^p$ with $p$.  Note that once $\indR'$
is shown admissible, one can also prove admissibility of unfolding of
inductive definitions on the left of a sequent; see~\cite{tiu04phd}
for a proof.
\end{remark}

Since a defined atomic formula can be unfolded via its introduction rules, 
the notion of size of a formula as simply the number of connectives
in it would not take into account this possible unfolding. We shall
define a more general notion 
 assigning a positive integer to each predicate symbol, which we refer 
to as its \emph{level}. A similar notion of level of a predicate was introduced for
$\FOLDN$~\cite{mcdowell00tcs}.  However, in $\FOLDN$, the level of
a predicate is only used to guarantee monotonicity of definitions.

%
\begin{definition}[Size of formulae]
  \label{def:level}
  To each predicate $p$ we associate a natural number $\level{p}$, the
  \emph{level} of $p$.  Given a formula $B$, its \emph{size} $|B|$ is
  defined as follows:
  \begin{enumerate}
  \item $|X^p\,\vec t| = 1$, for any $X^p$ and any $\vec t$. 
  \item $|p \, \vec{t}| = \level{p}$. 
  \item $|\bot| = |\top| = |(s = t)| = 1$. 
  \item $|B \land C| = |B \lor C| = |B \oimp C| = |B| + |C| + 1$. 
  \item $|\forall x.\ B\,x| = |\exists x.\ B\,x| = |B\,x| + 1$. 
  \end{enumerate}
\end{definition}
%
%

Note that in this definition, we do not specify precisely
any particular level assignment to predicates. We show next that
there is a level assignment that has a property that will be useful
later in proving cut elimination. 

\begin{lemma}[Level assignment]
\label{lm:level}
Given any definition $\Dscr$, there is a level assignment
to every predicate $p$ occuring in $\Dscr$ such that
if $\forall \vec x. p\,\vec x \defeq B\,p\,\vec x$ is in $\Dscr$,
then $|p\,\vec x| >  |B\,X^p\,\vec x|$ for every parameter $X^p \in \Pscr_p$. 
\end{lemma}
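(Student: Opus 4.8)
The plan is to exploit the crucial feature of the size measure that the body $B\,X^p\,\vec x$ records every recursive occurrence of $p$ as a parameter instance $X^p\,\vec s$, whose size is fixed to $1$ by clause~(1) of Definition~\ref{def:level}. Consequently $|B\,X^p\,\vec x|$ does \emph{not} depend on $\level{p}$ at all: it is a fixed natural number determined only by the logical skeleton of $B$ (the connectives and the size-$1$ atoms $\bot$, $\top$, $s=t$, $X^p\,\vec s$) together with the levels $\level{q}$ of the \emph{other} predicates $q \neq p$ occurring in $B$. For the same reason its value is identical for every choice of $X^p \in \Pscr_p$, since all parameter instances have size $1$ regardless of which parameter is used; this discharges the ``for every parameter $X^p$'' clause automatically, so it suffices to secure a single strict inequality per definition clause.

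First I would set up a well-founded order on the finitely many predicates occurring in $\Dscr$. Define $q \prec p$ to hold whenever $p$ has a defining clause $\forall \vec x.\ p\,\vec x \defeq B\,p\,\vec x$ in $\Dscr$ and some $q \neq p$ occurs in $B$. Because $\Dscr$ is finite, each predicate heads at most one clause, and the definition is non-mutually recursive, the relation $\prec$ is acyclic; its transitive closure is therefore a strict partial order on a finite carrier, which I extend to a linear order and use to enumerate the predicates as $p_1,\dots,p_n$ so that $q \prec p_i$ forces $q = p_j$ for some $j < i$.

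Then I would assign levels by recursion along this enumeration. For a predicate $p_i$ that heads no clause of $\Dscr$ there is no constraint to satisfy, so I set $\level{p_i}=1$. For a predicate $p_i$ with clause $p_i\,\vec x \defeq B\,p_i\,\vec x$, every predicate occurring in $B$ other than $p_i$ is some $p_j$ with $j<i$ and hence already carries an assigned level; by the first paragraph the quantity $|B\,X^{p_i}\,\vec x|$ is then a fully determined natural number that does not involve $\level{p_i}$. I put $\level{p_i} = |B\,X^{p_i}\,\vec x| + 1$, which immediately gives $|p_i\,\vec x| = \level{p_i} > |B\,X^{p_i}\,\vec x|$, and, again by the first paragraph, the same inequality holds for every $X^{p_i}\in\Pscr_{p_i}$.

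The main obstacle is the well-foundedness step: one must verify that the non-mutual-recursion restriction of Definition~\ref{def:def-clause} genuinely rules out \emph{all} dependency cycles among distinct defined predicates, since even a single cycle $p \prec q \prec \cdots \prec p$ would force an impossible chain $\level{p} > \cdots > \level{p}$ and defeat the recursion. Once acyclicity is in hand, the quantity $|B\,X^{p_i}\,\vec x|$ is already evaluable by the time $p_i$ is processed, and everything else is a routine unwinding of Definition~\ref{def:level}.
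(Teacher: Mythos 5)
Your proposal is correct and takes essentially the same route as the paper's proof: the same dependency relation on predicate symbols, well-founded by appeal to non-mutual recursion and finiteness of $\Dscr$, with levels assigned recursively as $\level{p} = |B\,X^p\,\vec x| + 1$ for defined predicates and $1$ for undefined ones, the key observation in both cases being that parameters have size $1$, so $|B\,X^p\,\vec x|$ depends neither on $\level{p}$ nor on the choice of $X^p \in \Pscr_p$. The well-foundedness step you flag as the main obstacle is treated no more explicitly in the paper, which simply asserts that non-mutual recursion together with the finiteness of the definition yields a well-founded order, so your version is, if anything, slightly more candid about what remains to be checked there.
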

\begin{proof}
  Let $\prec$ be a binary relation on predicate symbols defined as
  follows: $ q \prec p $ iff $q$ occurs in the body of the definition
  clause for $p$. Let $\prec^*$ be the reflexive-transitive closure of
  $\prec$.  Since we restrict to non-mutually recursive definitions
  and there are only finitely many definition clauses
  (Definition~\ref{def:def-clause}), it follows that $\prec^*$ is a
  well-founded partial order.  We now compute a level assignment to
  predicate symbols by induction on $\prec^*$. This is simply done by
  letting $\level p = 1$, if $p$ is undefined, and $\level p =
  |B\,X^p\,\vec x| + 1$, for some parameter $X^p$, if $\forall \vec
  x.\ p\,\vec x \defeq B\,p\,\vec x.$ Note that in the latter case, by
  induction hypothesis, every predicate symbol $q$, other than $p$, in
  $B$ has already been assigned a level, so $|B\,X^p\,\vec x|$ is
  already defined at this stage.  Note also that it does not matter
  which $X^p$ we choose since all parameters have the same size.  \qed
\end{proof}

We shall assume from now on that predicates are assigned levels
satisfying the condition of Lemma~\ref{lm:level}, so whenever we have
a definition clause of the form $\forall \vec x.p\,\vec x \defeq
B\,p\,\vec x$, we shall implicitly assume that $|p\,\vec x| >
|B\,X^p\,\vec x|$ for every parameter $X^p \in \Pscr_p.$



\begin{remark}[Non-monotonicity]
  \label{rem:strat}
  In $\FOLDN$, a notion of stratification is used to rule out
  non-monotone (or in Haln\"{a}s' terminology
  \emph{partial}~\cite{PID}) definitions, such as, $p \defeq p \oimp
  \bot$, for which cut-elimination is problematic. \footnote{Other
    ways beyond stratification of recovering cut-elimination are
    disallowing contraction or restricting to an \textit{init} rule
    for undefined atoms.}  
  In fact, from the above definition both $p$ and $p \oimp \bot$ are
  provable, but there is no direct proof of $\bot$. This can be traced
  back to the fact that unfolding of definitions in $\fLinc$ and
  $\FOLDN$ is allowed on both the left and the right hand side of
  sequent. 
  In $\Linc$, inconsistency does not arise even allowing a non-monotone
  definition as above, due to the fact that arbitrary unfolding of
  fixed points is not permitted. 
  Instead, only a limited form of unfolding is allowed, i.e., in the
  form of unfolding of inductive parameters on the right, and
  co-inductive parameters on the left.  As a consequence of this
  restrictive unfolding, in $\Linc$ one cannot reason about some
  well-founded inductive definitions which are not 
  stratified. For example, consider the non-stratified definition:
$$
\forall x.\ ev~x \defmu (x = z) \lor (\exists y. x = (s~y) \land (ev~y
\oimp \bot))
$$
If this definition were to be interpreted as a logic program (with
negation-as-failure), for example, then its least fixed point is
exactly the set of even natural numbers. However, the above encoding
in $\Linc$ is incomplete with respect to this interpretation, since
not all even natural numbers can be derived using the above
definition.  For example, it is easy to see that $ev~(s~(s~z))$ is not
derivable, since this would require a derivation of
$\Seq{X^{ev}\,(s~z)} \bot$, for some inductive parameter $X^{ev}$,
which is impossible because no unfolding of inductive parameter is
allowed on the left of a sequent. The same idea prevents the
derivability of $\Seq{} p$ given the definition $p \defeq p \oimp
\bot$. So while inconsistency in the presence of non-monotone
definitions is avoided in $\Linc$, its reasoning power does not extend
that of $\fLinc$ significantly.
\end{remark}

\section{Eigenvariables and parameters instantiations}
\label{sec:drv}

We now discuss some properties of derivations in $\Linc$ which involve
instantiations of eigenvariables and parameters. These properties will
be used in the cut-elimination proof in subsequent sections.  

Before we proceed, it will be useful to introduce the following 
derived rule in $\Linc$: 
$$
\infer[subst.]
{\Seq{\Gamma}{C}}
{
\{ \Seq{\Gamma\theta}{C\theta}\}_\theta
}
$$
This rule is just a `macro' for the following derivation:
$$
\infer[mc]
{\Seq{\Gamma}{C}}
{
\infer[\eqR]
{\Seq{}{t=t}}{}
&
\infer[\eqL]
 {\Seq{t=t,\Gamma}{C}}
 {
  \{\Seq{\Gamma\theta}{C\theta}\}_\theta
 }
}
$$
where $t$ is some arbitrary term. The motivation behind the 
rule $subst$ is purely technical; it allows us to prove that
a derivation transformation (i.e., substitutions of eigenvariables
in derivations in Section~\ref{sec:subst}) commutes with cut reduction (see Lemma~\ref{lm:reduct_subst}).
Since the rule $subst$ hides a simple form of cut, to prove cut-elimination
of $\Linc$, we have to show that $subst$, in addition to $mc$, 
is admissible. In the following, $\epsilon$ denotes the identity substitution,
i.e., $\epsilon(x)=x$ for every variable $x$.

\begin{lemma}[$subst$-elimination]
\label{lm:subst-elimination}
For every $\Gamma$ and $C$, if the sequent $\Seq{\Gamma}{C}$
is (cut-free) derivable in $\Linc$ with $subst$ then it is (cut-free)
derivable in $\Linc$ without $subst$.
\end{lemma}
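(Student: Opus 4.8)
The plan is to exploit a single structural feature of the $subst$ rule: among its premises $\{\Seq{\Gamma\theta}{C\theta}\}_\theta$, the instance for $\theta = \epsilon$ is $\Seq{\Gamma\epsilon}{C\epsilon}$, which is literally the conclusion $\Seq{\Gamma}{C}$. Hence every $subst$ inference is redundant in the sense that its conclusion already occurs, together with a derivation of it, among its own premises. The whole argument therefore reduces to systematically discarding each $subst$ inference in favour of the sub-derivation sitting above its $\epsilon$-premise.

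First I would fix a cut-free derivation $\Pi$ of $\Seq{\Gamma}{C}$ that may contain $subst$, and argue by induction on the height of $\Pi$. Because the rules $\eqL$ and $subst$ are infinitely branching, derivations are well-founded but possibly transfinite trees, so the height must be taken as an ordinal and the induction as transfinite; this is the one point that requires a little care, but it is standard.

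Then I would split on the last rule of $\Pi$. If it is $subst$, with premise derivations $\{\Pi_\theta\}_\theta$, I select $\Pi_\epsilon$, which derives exactly $\Seq{\Gamma}{C}$ and has strictly smaller height than $\Pi$ (since the height of a tree strictly exceeds that of any immediate sub-tree); the induction hypothesis applied to $\Pi_\epsilon$ then yields a $subst$-free derivation of the same sequent. If the last rule is any other rule $r$ of $\Linc$, each premise derivation has strictly smaller height, so by the induction hypothesis each transforms into a $subst$-free derivation of the corresponding premise sequent; reapplying $r$ to these produces a $subst$-free derivation of $\Seq{\Gamma}{C}$. Since $r$ and its conclusion sequent are left untouched, any side conditions — for instance the eigenvariable freshness conditions of $\forallR$ and $\existsL$ — continue to hold.

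Finally, cut-freeness is preserved throughout: the transformation never introduces an $\mc$ inference, since in the $subst$ case we merely select an already-present sub-derivation, and in every other case we reapply a rule that was already there. The main obstacle, such as it is, is purely bookkeeping: justifying the transfinite induction on height for infinitely branching derivations and verifying that the $\epsilon$-instance genuinely appears among the $subst$ premises. No permutation, reduction, or rearrangement of inferences is needed, so there is no genuine combinatorial difficulty.
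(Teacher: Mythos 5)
Your proposal is correct and is essentially the paper's own proof: the paper likewise observes that the $\epsilon$-premise of any $subst$ inference is a derivation of exactly the conclusion sequent, and eliminates $subst$ by replacing each such subderivation with its $\epsilon$-premise. Your transfinite induction on height is just an explicit formalization of the paper's (recursive) global replacement, so the two arguments coincide.
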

\begin{proof}
Given a derivation $\Pi$ of $\Seq{\Gamma}{C}$ with occurrences of $subst$,
obtain a $subst$-free derivation by simply replacing 
any subderivation in $\Pi$ of the form:
$$
\infer[subst]
{\Seq{\Delta}{B}}
{
\left\{
\raisebox{-1.3ex}{
\deduce{\Seq{\Delta\theta}{B\theta}}{\Pi^\theta}
}
\right\}_\theta
}
$$
with its premise $\Pi^\epsilon$.
\qed
\end{proof}

Following~\cite{mcdowell00tcs}, we define a \emph{measure} which
corresponds to the height of a derivation:
\begin{definition}
  \label{def:mu}
  Given a derivation $\Pi$ with premise derivations $\{\Pi_i\}_{i \in
    I}$, for some index set $I$, the measure $\measure{\Pi}$ is the
  least upper bound $\lub {\{\measure{\Pi_i}\}_i\in I} + 1$.
\end{definition}
Note that given the possible infinite branching of $\eqL$ rule, these
measures can in general be (countable) ordinals.  Therefore proofs and
definitions on those measures require transfinite induction and
recursion. However, in most of the proofs to follow, we do case
analysis on the last rule of a derivation. In such a situation, the
inductive cases for both successor and limit ordinals are basically
covered by the case analysis on the inference figures involved, and we
shall not make explicit use of transfinite principles.

With respect to the use of eigenvariables and parameters in a
derivation, there may be occurrences of the formers that are not free
in the end sequent. We refer to these variables and parameters as the
\emph{internal variables and parameters}, respectively.  We view the
choices of those variables and parameters as arbitrary and therefore
identify derivations which differ on the choice of internal variables
and parameters. In other terms, we quotient derivations modulo
injective renaming of internal eigenvariables and parameters.

\subsection{Instantiating eigenvariables}
\label{sec:subst}

The following definition extends eigenvariable substitutions to apply
to derivations.  Since we identify derivations that differ only in the
choice of internal eigenvariables, we will assume that such variables
are chosen to be distinct from the variables in the domain of the
substitution and from the free variables of the range of the
substitution.  Thus applying a substitution to a derivation will only
affect the variables free in the end-sequent.


\begin{definition}
  \label{def:subst}
  If $\Pi$ is a derivation of $\Seq{\Gamma}{C}$ and $\theta$ is a
  substitution, then we define the derivation $\Pi\theta$ of
  $\Seq{\Gamma\theta}{C\theta}$ as follows:
  \begin{enumerate}
  \item Suppose $\Pi$ ends with the $\eqL$ rule
    \begin{displaymath}
      \infer[\eqL]{\Seq {s = t,\Gamma'}{C} }
      {\left\{\raisebox{-1.5ex}
          {\deduce{\Seq{\Gamma'\rho}{C\rho}}
            {\Pi^{\rho}}}
        \right\}_{\rho}}
      \enspace 
    \end{displaymath}
    where each $\rho$ satisfies $s\rho =_{\beta\eta} t\rho$. Observe
    that any unifier for the pair $(s\theta, t\theta)$ can be
    transformed to another unifier for $(s, t)$, by composing the
    unifier with $\theta$.  Thus $\Pi\theta$ is
    \begin{displaymath}
      \infer[\eqL]{\Seq{s\theta = t\theta,\Gamma'\theta}{C\theta}}
      {\left\{\raisebox{-1.5ex}
          {\deduce{\Seq{\Gamma'\theta\rho'}{C\theta\rho'}}
            {\Pi^{\theta\circ\rho'}}}
	\right\}_{\rho'}}
      \enspace ,
    \end{displaymath}
    where $s\theta\rho' =_{\beta\eta} t\theta\rho'$.

  \item If $\Pi$ ends with $subst$ with premise derivations
    $\{\Pi^\rho\}_\rho$ then $\Pi\theta$ also ends with the same rule
    and has premise derivations $\{\Pi^{\theta \circ \rho'}
    \}_{\rho'}$.
  \item If $\Pi$ ends with any other rule and has premise derivations
    $\Pi_1, \ldots, \Pi_n$, then $\Pi\theta$ also ends with the same
    rule and has premise derivations $\Pi_1\theta, \ldots,
    \Pi_n\theta$.
  \end{enumerate}
\end{definition}

Among the premises of the inference rules of $\Linc$ (with the exception
of $\coindR$), certain premises share the same right-hand side formula 
with the sequent in the conclusion.  We refer to such
premises as major premises. This notion of major premise will be
useful in proving cut-elimination, as certain proof transformations
involve only major premises.
\begin{definition}
  \label{def:major-premise}
  Given an inference rule $R$ with one or more premise sequents, we
  define its major premise sequents as follows.
  \begin{enumerate}
  \item If $R$ is either $\oimpL, \mc$ or $\indL$, then its rightmost
    premise is the major premise
  \item If $R$ is $\coindR$ then its left premise is the major
    premise.
  \item Otherwise, all the premises of $R$ are major premises.
  \end{enumerate}
  A \emph{minor premise} of a rule $R$ is a premise of $R$ which is
  not a major premise.  The definition extends to derivations by
  replacing premise sequents with premise derivations.
\end{definition}

The proofs of the following two lemma are straightforward from Definition~\ref{def:subst}
and induction on the height of derivations.
\begin{lemma}
  \label{lm:subst}
  For any substitution $\theta$ and derivation $\Pi$ of
  $\Seq{\Gamma}{C}$, $\Pi\theta$ is a derivation of
  $\Seq{\Gamma\theta}{C\theta}$.
\end{lemma}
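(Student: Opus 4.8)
The plan is to prove this by transfinite induction on the height $\measure{\Pi}$ of the derivation, with a case analysis on the last inference rule of $\Pi$. Since Definition~\ref{def:subst} already prescribes the exact shape of $\Pi\theta$ in every case, the proof is purely verificational: in each case one checks that the prescribed object is a legal $\Linc$ derivation and that its end-sequent is literally $\Seq{\Gamma\theta}{C\theta}$. The transfinite formulation is needed because the $\eqL$ rule can branch into countably many premises, so heights are ordinals; but as remarked after Definition~\ref{def:mu}, the successor and limit cases are uniformly subsumed by the case analysis on the last rule, so no explicit transfinite machinery surfaces.

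First I would dispatch the two reindexing rules, $\eqL$ and $subst$, which are the only non-routine cases. For $\eqL$ with conclusion $\Seq{s=t,\Gamma'}{C}$, applying $\theta$ gives $\Seq{s\theta = t\theta,\Gamma'\theta}{C\theta}$, whose premises are indexed by unifiers $\rho'$ of $(s\theta,t\theta)$. The key observation, already recorded in Definition~\ref{def:subst}, is that $\theta\circ\rho'$ is then a unifier of $(s,t)$, since $s(\theta\circ\rho') = s\theta\rho' =_{\beta\eta} t\theta\rho' = t(\theta\circ\rho')$; hence the required premise derivation $\Pi^{\theta\circ\rho'}$ already occurs among the premises of the original rule, and its end-sequent $\Seq{\Gamma'(\theta\circ\rho')}{C(\theta\circ\rho')}$ is exactly $\Seq{\Gamma'\theta\rho'}{C\theta\rho'}$, as demanded by the $\eqL$ schema. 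Thus no appeal to the induction hypothesis is needed here, the premises being reused subderivations of $\Pi$, and it remains only to confirm that the instance is well-formed. The $subst$ case is handled identically, since it is defined by the same composition-of-substitutions pattern.

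For every remaining rule $R$, Definition~\ref{def:subst} sets $\Pi\theta$ to end with $R$ and to have premise derivations $\Pi_1\theta,\ldots,\Pi_n\theta$. Here I would invoke the induction hypothesis on each $\Pi_i$ (of strictly smaller height) to conclude that $\Pi_i\theta$ derives $\Seq{\Gamma_i\theta}{C_i\theta}$, and then check that substitution commutes with the rule schema. For instance, a $\forallL$ inference with conclusion $\Seq{\forall x.\,B\,x,\Gamma}{C}$ becomes $\Seq{\forall x.(B\theta)\,x,\Gamma\theta}{C\theta}$ with premise $\Seq{(B\theta)(t\theta),\Gamma\theta}{C\theta}$, which is a legal $\forallL$ instance because $\theta$ is type-preserving and may be pushed under the (renamed-apart) bound variable, taking $t\theta$ as the witnessing term; the other connective and quantifier rules are entirely analogous.

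The hard part, and really the only point that needs care, is the preservation of the eigenvariable and parameter side conditions, namely in $\forallR$, $\existsL$, $\mc$, and the (co-)induction rules. Here I would rely on the convention stated just before Definition~\ref{def:subst}: internal eigenvariables and parameters are chosen distinct from $\dom{\theta}$ and from the free variables of the range of $\theta$. Consequently $\theta$ fixes every freshly introduced eigenvariable $y$ and every fresh parameter $X^p$, so the provisos ``$y$ not free in the lower sequent'' and ``$X^p$ new'' transfer verbatim from $\Pi$ to $\Pi\theta$. The (co-)induction invariant $S$ is simply carried along, with $\theta$ applied to its free eigenvariables, so that $\Pi\theta$ legitimately uses $S\theta$ as its invariant; and since the definition bodies $B$ are closed, $B\theta = B$ leaves the defining clauses untouched. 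With type-preservation of $\theta$ guaranteeing well-typedness throughout, the verification in each case is then immediate.
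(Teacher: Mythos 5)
Your proof is correct and is exactly the argument the paper intends: the paper dismisses this lemma as ``straightforward from Definition~\ref{def:subst} and induction on the height of derivations,'' and your case analysis — reusing the premises $\Pi^{\theta\circ\rho'}$ for $\eqL$/$subst$ via composition of unifiers, applying the induction hypothesis for the remaining rules, and invoking the freshness convention on internal eigenvariables and parameters — is precisely that induction spelled out. No gaps; the level of care you give the side conditions matches what the paper leaves implicit.
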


\begin{lemma}
  \label{lm:subst-height}
  For any derivation $\Pi$ and substitution $\theta$, $\measure{\Pi}
  \geq \measure{\Pi\theta}$.
\end{lemma}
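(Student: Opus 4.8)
The plan is to prove the statement by transfinite induction on the height $\measure{\Pi}$, performing a case analysis on the last rule of $\Pi$ that mirrors the three clauses of Definition~\ref{def:subst}. Throughout I will use the fact that the measure is defined (Definition~\ref{def:mu}) as $\lub{\cdot}+1$ over the premise derivations, together with the elementary monotonicity of least upper bounds: if one collection of ordinals is contained in another, its lub is no larger.

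First I would treat the case where $\Pi$ ends with a rule other than $\eqL$ or $subst$ (clause~3 of Definition~\ref{def:subst}). Here $\Pi\theta$ ends with the same rule and its premise derivations are exactly $\Pi_1\theta, \ldots, \Pi_n\theta$, where $\Pi_1, \ldots, \Pi_n$ are the premises of $\Pi$. By the induction hypothesis $\measure{\Pi_i} \geq \measure{\Pi_i\theta}$ for each $i$, so $\measure{\Pi\theta} = \lub{\{\measure{\Pi_i\theta}\}_i} + 1 \leq \lub{\{\measure{\Pi_i}\}_i} + 1 = \measure{\Pi}$.

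The interesting cases are $\eqL$ and $subst$, where the rule may branch over an infinite set of substitutions. Consider $\eqL$: the premises of $\Pi$ are $\{\Pi^\rho\}_\rho$ indexed by the unifiers $\rho$ of $(s,t)$, while by Definition~\ref{def:subst} the premises of $\Pi\theta$ are $\{\Pi^{\theta\circ\rho'}\}_{\rho'}$ indexed by the unifiers $\rho'$ of $(s\theta, t\theta)$. The key observation is that each such $\theta\circ\rho'$ is itself a unifier of $(s,t)$, since $s(\theta\circ\rho') = (s\theta)\rho' =_{\beta\eta} (t\theta)\rho' = t(\theta\circ\rho')$; hence the premise of $\Pi\theta$ indexed by $\rho'$ is literally one of the premises of $\Pi$ (namely the one indexed by $\theta\circ\rho'$), not a freshly substituted derivation. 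Consequently the collection of premise heights of $\Pi\theta$ is a sub-collection of those of $\Pi$, and monotonicity of $\lub{\cdot}$ gives $\measure{\Pi\theta} \leq \measure{\Pi}$ directly, with no appeal to the induction hypothesis. The $subst$ case is identical, reindexing the premises via $\rho' \mapsto \theta\circ\rho'$.

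The only point requiring care --- and hence the main obstacle --- is precisely this reindexing in the infinitely-branching rules: one must verify that applying $\theta$ does not enlarge the index set of premises but only restricts it (up to composition with $\theta$), so that the supremum of premise heights cannot grow; in particular, when $(s\theta,t\theta)$ fails to unify, the premise collection of $\Pi\theta$ is empty, which is still a sub-collection. Once this containment of premise collections is established, the ordinal-valued measures cause no difficulty, because the lub construction is monotone and the $+1$ preserves the inequality uniformly at both successor and limit stages.
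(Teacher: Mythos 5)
Your proof is correct and takes essentially the same route as the paper, whose proof is simply the remark that the lemma is straightforward from Definition~\ref{def:subst} by induction on the height of derivations. The detail you make explicit --- that in the $\eqL$ and $subst$ cases the premises of $\Pi\theta$ are, via the reindexing $\rho' \mapsto \theta\circ\rho'$, a sub-collection of the premises of $\Pi$, so the lub of the premise heights cannot increase --- is exactly what makes that ``straightforward'' induction go through.
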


\begin{lemma}
  \label{lm:subst-drv-comp}
  For any derivation $\Pi$ and substitutions $\theta$ and $\rho$, the
  derivations $(\Pi\theta)\rho$ and $\Pi(\theta \circ \rho)$ are the
  same derivation.
\end{lemma}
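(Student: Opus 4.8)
The plan is to proceed by transfinite induction on the height $\measure{\Pi}$ of the derivation (Definition~\ref{def:mu}), with a case analysis on the last rule of $\Pi$ following the three cases of Definition~\ref{def:subst}. Throughout, I will keep in mind that the asserted equality of derivations is understood modulo injective renaming of internal eigenvariables and parameters: the two routes $(\Pi\theta)\rho$ and $\Pi(\theta\circ\rho)$ may well pick different fresh internal names, and it is only up to this identification that they coincide. The one algebraic fact I will use repeatedly is associativity of composition, $(\theta\circ\rho)\circ\sigma = \theta\circ(\rho\circ\sigma)$, which is immediate from $t(\theta\circ\rho)=(t\theta)\rho$; together with the identity $\Gamma\theta\rho = \Gamma(\theta\circ\rho)$ this already guarantees that the two candidate derivations share the same end-sequent.

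For the case where $\Pi$ ends with an ordinary rule $R$ (neither $\eqL$ nor $subst$) with premise derivations $\Pi_1,\dots,\Pi_n$, unrolling Definition~\ref{def:subst}(3) shows that both $(\Pi\theta)\rho$ and $\Pi(\theta\circ\rho)$ end with $R$ and have premises, respectively, $(\Pi_i\theta)\rho$ and $\Pi_i(\theta\circ\rho)$. Since each $\Pi_i$ has strictly smaller height, the induction hypothesis gives $(\Pi_i\theta)\rho = \Pi_i(\theta\circ\rho)$ for every $i$, and so the two derivations agree.

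The substantive cases are the two infinitary rules $\eqL$ and $subst$, where substitution acts by reindexing the premise family rather than by recursion into the premises, so no induction hypothesis is needed. Take $\eqL$: suppose $\Pi$ ends with $\eqL$ on the pair $(s,t)$ with premise family $\{\Pi^\sigma\}_\sigma$ indexed by unifiers $\sigma$ of $(s,t)$. By Definition~\ref{def:subst}(1), $\Pi\theta$ ends with $\eqL$ on $(s\theta,t\theta)$, and its premise at a unifier $\rho'$ of $(s\theta,t\theta)$ is the original subderivation $\Pi^{\theta\circ\rho'}$. Applying $\rho$ once more, $(\Pi\theta)\rho$ ends with $\eqL$ on $(s\theta\rho,t\theta\rho)$, and its premise at a unifier $\rho''$ of that pair is the premise of $\Pi\theta$ indexed by $\rho\circ\rho''$, namely $\Pi^{\theta\circ(\rho\circ\rho'')}$. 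On the other side, $\Pi(\theta\circ\rho)$ ends with $\eqL$ on $(s(\theta\circ\rho),t(\theta\circ\rho))$, which is the \emph{same} pair $(s\theta\rho,t\theta\rho)$ and hence carries the same index set of unifiers, and its premise at $\rho''$ is $\Pi^{(\theta\circ\rho)\circ\rho''}$. By associativity of composition the two premise families are literally identical, so the derivations coincide. The $subst$ case is handled identically, with the index sets being all substitutions in place of unifiers of a fixed pair.

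I expect the bookkeeping in these two infinitary cases to be the main obstacle: one must check that the index sets generated along the two routes actually coincide (this rests on $(s\theta)\rho$ and $s(\theta\circ\rho)$ being the same term, so that ``unifier of $(s\theta,t\theta)$ composed with $\rho$'' matches ``unifier of $(s(\theta\circ\rho),t(\theta\circ\rho))$''), and that the selected premise derivations match after reassociating the composed substitutions. The only further point needing care is the identification of derivations up to renaming of internal variables, which must be invoked to absorb the possibly different choices of fresh names made by the two iterated substitutions; this is why the conclusion is an equality of derivations in the quotiented sense fixed in Section~\ref{sec:subst}, not a syntactic identity.
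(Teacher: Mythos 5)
Your proof is correct and takes essentially the approach the paper intends: the paper states this lemma without an explicit proof, as a routine consequence of Definition~\ref{def:subst} and induction on the height of derivations, which is exactly the structure you give. Your explicit handling of the $\eqL$ and $subst$ cases by reindexing the premise families via associativity of substitution composition, together with the appeal to the quotient of derivations modulo renaming of internal eigenvariables, just fills in the bookkeeping the paper leaves implicit.
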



\subsection{Instantiating parameters}
\label{sec:unfolding}

\begin{definition}
  \label{def:param subst}
  A \emph{parameter substitution} $\Theta$ is a partial map from
  parameters to pairs of proofs and \emph{closed} terms such that  whenever
$$
\Theta(X^p) = (\Pi_S, S)
$$ 
then $S$ has the same type as $p$ and either one of the following
holds:
\begin{itemize}
\item $p\,\vec x \defmu B\,p\,\vec x$, for some $B$ and $\vec x$, and
  $\Pi_S$ is a derivation of $\Seq {B\,S\,\vec x}{S\,\vec x}$,
  or
\item $p\,\vec x \defnu B\,p\,\vec x$, for some $B$ and $\vec x$, and
  $\Pi_S$ is a derivation of $\Seq {S\,\vec x}{B\,S\,\vec x}$.
\end{itemize}
The \emph{support} of $\Theta$ is the set
$$
supp(\Theta) = \{X^p \mid \Theta(X^p) \hbox{ is defined} \}.
$$
We consider only parameter substitutions with finite support.

We say that $X^p$ is \emph{fresh for $\Theta$}, written $X^p \#
\Theta$, if for each $Y^q \in supp(\Theta)$, $X^p \not = Y^q$ and
$X^p$ does not occur in $S$ whenever $\Theta(Y^q) = (\Pi_S,S)$.

\end{definition}

We shall often enumerate a parameter substitution using a similar
notation to (eigenvariables) substitution, e.g.,
$$
[(\Pi_1,S_1)/X^{p_1}, \ldots, (\Pi_n, S_n)/ X^{p_n}]
$$
denotes a parameter substitution $\Theta$ with support $\{X^{p_1},
\ldots, X^{p_n}\}$ and $\Theta(X^{p_i}) = (\Pi_i, S_i)$.

Given a formula $C$ and a parameter substitution $\Theta$ as above, we
write $C\Theta$ to denote the formula
$$
C[S_1/X^{p_1}, \ldots, S_n/X^{p_n}].
$$

\begin{definition}
  Let $\Pi$ be a derivation of $\Seq{\Gamma}{C}$ and let $\Theta$ be a
  parameter substitution.  Define the derivation $\Pi\Theta$ of
  $\Seq{\Gamma\Theta}{\Theta}$ by induction on the height of $\Pi$
  as follows:
  \begin{itemize}
  \item Suppose $C = X^p\,\vec t$ for some $X^p$ such that $\Theta(X^p) =
(\Pi_S, S)$ and $\Pi$ ends with $\indRP$, as shown below left. 
Then $\Pi\Theta$ is as shown below right. 
$$
\infer[\indRP] 
{\Seq \Gamma {X^p \vec t}} 
{\deduce{\Seq \Gamma {B\,X^p \vec t}}{\Pi'}}
\qquad
\infer[mc] {\Seq {\Gamma\Theta}{S\,\vec t}} { \deduce{\Seq
    {\Gamma\Theta}{B\,S\,\vec t}}{\Pi'\Theta} & \deduce{\Seq
    {B\,S\,\vec t}{S\,\vec t}}{\Pi_S[\vec t/\vec x]} }
$$


\item 
Similarly, suppose  $\Pi$ ends with $\coindLP$ on $X^p\,\vec t$
and $X^p \in supp(\Theta)$:
$$
\infer[\coindLP] {\Seq {X^p \vec t, \Gamma'}{C}} {
  \deduce{\Seq{B\,X^p\vec t, \Gamma'}{C}}{\Pi'} }
$$
where $p\,\vec x \defnu B\,p\,\vec x$ and $\Theta(X^p) = (\Pi_S, S)$.  
Then $\Pi\Theta$ is
$$
\infer[mc] {\Seq{S\,\vec t, \Gamma'\Theta}{C\Theta}} { \infer[mc]
  {\Seq {S\,\vec t}{B\,S\,\vec t}} { \infer[init] {\Seq {S\,\vec
        t}{S\,\vec t}}{} & \deduce{\Seq {S\,\vec t}{B\,S\,\vec
        t}}{\Pi_S[\vec t/\vec x]} } & \deduce{\Seq{B\,S\,\vec t,
      \Gamma'\Theta}{C\Theta}}{\Pi'\Theta} }
$$ 


\item In all other cases, suppose $\Pi$ ends with a rule $R$ with
  premise derivations $\{\Pi_i \}_{i \in I}$ for some index set $I$.
  Since we identify derivations up to renaming of internal parameters,
  we assume without loss of generality that the internal
  eigenvariables in the premises of $R$ (if any) do not appear in
  $\Theta$.  Then $\Pi\Theta$ ends with the same rule, with premise
  derivations $\{\Pi_i\Theta \}_{i \in I}$.
\end{itemize}

\end{definition}

\begin{remark}
  Notice that the definition of application of parameter substitution
  in derivations in Definition~\ref{def:param subst} is asymmetric in
  the treatment of inductive and co-inductive parameters, i.e., in the
  cases where $\Pi$ ends with $\indRP$ and $\coindLP$.  In the latter
  case, the substituted derivation uses a seemingly unnecessary cut
$$
\infer[mc.]  {\Seq {S\,\vec t}{B\,S\,\vec t}} { \infer[init] {\Seq
    {S\,\vec t}{S\,\vec t}}{} & \deduce{\Seq {S\,\vec t}{B\,S\,\vec
      t}}{\Pi_S[\vec t/\vec x]} }
$$
The reason behind this is rather technical; in our main cut
elimination proof, we need to establish that $\Pi_S[\vec t/\vec x]$ is
``reducible'' (i.e., all the cuts in it can be eventually eliminated),
given that the above cut is reducible. In a typical cut elimination
procedure, say Gentzen's proof for LK, one would have expected that
the above cut reduces to $\Pi_S[\vec t/\vec x]$, hence reducibility of
$\Pi_S$ would follow from reducibility of the above cut.  However,
according to our cut reduction rules (see Section~\ref{sec:reduc}),
the above cut does not necessarily reduce to $\Pi_S[\vec t/\vec x]$.
However, if the instance of $init$ appears instead on the right
premise of the cut, e.g., as in
$$
\infer[mc] {\Seq {B\,S\,\vec t}{S\,\vec t}} { \deduce{\Seq {B\,S\,\vec
      t}{S\,\vec t}}{\Pi_S[\vec t/\vec x]} & \infer[init] {\Seq
    {S\,\vec t}{S\,\vec t}}{} }
$$
the cut elimination procedure does reduce this to $\Pi_S[\vec t/\vec
x]$, so it is not necessary to introduce explicitly this cut instance
in the case involving inductive parameters.  It is possible to define
a symmetric notion of parameter substitution, but that would require
different cut reduction rules than the ones we proposed in this
paper. Another possibility would be to push the asymmetry to the
definition of \emph{reducibility} (see Section~\ref{sec:cut-elim}). We
have explored these alternative options, but for the purpose of
proving cut elimination, we found that the current definition yields a
simpler proof.\footnote{ But we conjecture that in the classical case
  a fully symmetric definition of parameter substitution and cut
  reduction would be needed. But this is outside the scope of the
  current paper.}
\end{remark}

The following lemma states that the derivation $\Pi\Theta$ is
well-formed.

\begin{lemma}
  \label{lm:param subst}
  Let $\Theta$ be a parameter substitution and $\Pi$  a
  derivation of $\Seq \Gamma C$. Then $\Pi\Theta$ is a derivation of
  $\Seq {\Gamma\Theta}{C\Theta}$.
\end{lemma}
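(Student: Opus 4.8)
The plan is to argue by induction on the height $\measure{\Pi}$, following exactly the three-way case split used in the definition of $\Pi\Theta$, and to verify in each case that the displayed figure is a well-formed $\Linc$ derivation whose end-sequent is $\Seq{\Gamma\Theta}{C\Theta}$. The induction is transfinite because $\eqL$ may branch infinitely, but it is driven by the case analysis on the last rule exactly in the manner explained after Definition~\ref{def:mu}. Before the case analysis I would record three elementary facts that make the substitutions line up. First, by Definition~\ref{def:def-clause} every definition body $B$ is a closed term with no parameter occurrences, so $B\Theta = B$. Second, the argument vectors $\vec t$ of a predicate or parameter are terms of efo-type and hence contain no parameters, so $\vec t\Theta = \vec t$; together these give $(X^p\,\vec t)\Theta = S\,\vec t$ and $(B\,X^p\,\vec t)\Theta = B\,S\,\vec t$ whenever $\Theta(X^p) = (\Pi_S,S)$. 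Third, since the terms $S$ in the range of $\Theta$ are closed and eigenvariable substitutions produce only parameter-free terms, parameter substitution commutes with eigenvariable substitution, i.e.\ $C\rho\Theta = C\Theta\rho$.

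For the two distinguished cases I would check that the cuts type-check. When $\Pi$ ends in $\indRP$ on $X^p\,\vec t$ with $X^p\in supp(\Theta)$, the induction hypothesis gives that $\Pi'\Theta$ derives $\Seq{\Gamma\Theta}{(B\,X^p\,\vec t)\Theta} = \Seq{\Gamma\Theta}{B\,S\,\vec t}$, while $\Pi_S[\vec t/\vec x]$ derives $\Seq{B\,S\,\vec t}{S\,\vec t}$ by Definition~\ref{def:param subst} and Lemma~\ref{lm:subst}; the single $\mc$ on the cut formula $B\,S\,\vec t$ then yields $\Seq{\Gamma\Theta}{S\,\vec t}$, which is the required $\Seq{\Gamma\Theta}{(X^p\,\vec t)\Theta}$. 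The $\coindLP$ case is symmetric: the induction hypothesis makes $\Pi'\Theta$ a derivation of $\Seq{B\,S\,\vec t,\Gamma'\Theta}{C\Theta}$ and $\Pi_S[\vec t/\vec x]$ a derivation of $\Seq{S\,\vec t}{B\,S\,\vec t}$; one then reads the nested figure as two instances of $\mc$ with $n=1$ (the inner one cutting $S\,\vec t$ against the instance of $\init$, the outer one cutting $B\,S\,\vec t$) and checks that its conclusion is $\Seq{S\,\vec t,\Gamma'\Theta}{C\Theta} = \Seq{(X^p\,\vec t,\Gamma')\Theta}{C\Theta}$.

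For the remaining case, where $\Pi$ ends in some rule $R$ and $\Pi\Theta$ ends in $R$ applied to $\{\Pi_i\Theta\}_{i\in I}$, the induction hypothesis gives that each $\Pi_i\Theta$ derives the $\Theta$-image of the corresponding premise, and it remains to check that these premises together with $\Seq{\Gamma\Theta}{C\Theta}$ still form a valid instance of $R$. This is immediate for the propositional and quantifier rules; for $\eqL$ one uses $s\Theta = s$ and $t\Theta = t$, so the indexing set of unifiers is unchanged, and then invokes the commutation fact of the first paragraph to identify $(\Gamma'\rho)\Theta$ with $(\Gamma'\Theta)\rho$; and for the rules that introduce a fresh eigenvariable or parameter ($\forallR$, $\existsL$, $\indL$, $\coindR$, $\indR$, $\coindL$, and $\mc$) one appeals to the renaming convention stated just before the definition, which lets us assume the internal variables and parameters are fresh for $\Theta$, so the side conditions survive the substitution. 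Note that $\indRP$ and $\coindLP$ on a parameter outside $supp(\Theta)$ also land here, and there the rule is preserved because $(X^p\,\vec t)\Theta = X^p\,\vec t$ and $B\Theta = B$.

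The main obstacle I expect is bookkeeping rather than conceptual: ensuring that the parameter substitution genuinely commutes with the eigenvariable substitutions hidden inside the (possibly infinitely branching) $\eqL$ rule and with the freshness side conditions, which is precisely what the three preliminary facts and the renaming convention are there to guarantee. Everything else is a routine verification that the displayed figures are instances of the stated rules.
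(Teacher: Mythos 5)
Your proof is correct and is essentially the argument the paper intends: Lemma~\ref{lm:param subst} is stated there without proof, being the routine induction on $\measure{\Pi}$ that follows the three-way case split in the definition of $\Pi\Theta$, which is precisely what you carry out (checking that the two explicit $\mc$ figures in the $\indRP$ and $\coindLP$ cases type-check via Lemma~\ref{lm:subst}, and that all other rule instances survive the substitution). Your three preliminary facts are exactly the right bookkeeping — parameter-freeness of definition bodies comes from Definition~\ref{def:def-clause}, parameter-freeness of efo-typed argument terms from the type restriction, and the commutation of $\Theta$ with eigenvariable substitution is the paper's own Lemma~\ref{lm:param subst commutes}, which is independent of the present lemma, so no circularity arises.
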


Note that since parameter substitutions replace parameters with closed
terms, they commute with (eigenvariable) substitutions.

\begin{lemma}
  \label{lm:param subst commutes}
  For every derivation $\Pi$, substitution $\delta$, parameter
  substitution $\Theta$, the derivation $(\Pi\Theta)\delta$ is the
  same as the derivation $(\Pi\delta)\Theta$.
\end{lemma}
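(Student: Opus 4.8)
The plan is to prove the identity by induction on the height $\measure{\Pi}$, with a case analysis on the last rule of $\Pi$. The guiding observation is the one stated just before the lemma: a parameter substitution $\Theta$ replaces each parameter $X^p$ by a \emph{closed} term $S$, and a parameter can never occur inside the efo-typed terms $s,t$ of an equality, since parameters carry predicate type (whose type mentions $o$) whereas efo-types do not. Consequently, at the level of formulae we have $C\Theta\delta = C\delta\Theta$ and $S\delta = S$ for every $(\Pi_S,S)$ in the range of $\Theta$. These two facts make the two transformations commute on sequents, and the induction lifts this agreement to derivations.

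For the generic cases — every rule other than $\eqL$, $subst$, and an instance of $\indRP$ or $\coindLP$ acting on a parameter in $supp(\Theta)$ — both $\Pi\Theta$ and $\Pi\delta$ reproduce the same last rule and distribute over the premise derivations, so $(\Pi\Theta)\delta$ and $(\Pi\delta)\Theta$ agree on the final rule and their premises coincide by the induction hypothesis on the strictly shorter premise derivations. The $\eqL$ and $subst$ cases need slightly more care, because $\delta$ reshuffles the (infinite) premise family by precomposition, sending the index $\rho'$ to the subderivation $\Pi^{\delta\circ\rho'}$, whereas $\Theta$ leaves the principal formula $s=t$ untouched and merely applies itself to each premise. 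Computing both composites shows that the premise at index $\rho'$ is $(\Pi^{\delta\circ\rho'})\Theta$ on either side, so here the two derivations are identical directly from the definitions.

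The substantive cases are $\indRP$ and $\coindLP$ with the active parameter $X^p\in supp(\Theta)$, say $\Theta(X^p)=(\Pi_S,S)$. Consider $\indRP$ with conclusion $\Seq{\Gamma}{X^p\,\vec t}$ and premise $\Pi'$ of $\Seq{\Gamma}{B\,X^p\,\vec t}$. Applying $\Theta$ first yields an $\mc$ with left premise $\Pi'\Theta$ and right premise $\Pi_S[\vec t/\vec x]$; applying $\delta$ then distributes through this $\mc$ (which is generic for eigenvariable substitution), producing an $\mc$ over $\Seq{\Gamma\Theta\delta}{S\,(\vec t\delta)}$ with left premise $(\Pi'\Theta)\delta$ and right premise $(\Pi_S[\vec t/\vec x])\delta$. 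Applying $\delta$ first gives an $\indRP$ on $\Seq{\Gamma\delta}{X^p\,(\vec t\delta)}$ with premise $\Pi'\delta$, which $\Theta$ then turns into an $\mc$ with left premise $(\Pi'\delta)\Theta$ and right premise $\Pi_S[\vec t\delta/\vec x]$. The conclusions coincide because $S$ is closed, whence $S\,\vec t\,\delta = S\,(\vec t\delta)$ and $\Gamma\Theta\delta=\Gamma\delta\Theta$; the left premises coincide by the induction hypothesis on $\Pi'$; and the right premises coincide since $(\Pi_S[\vec t/\vec x])\delta = \Pi_S[\vec t\delta/\vec x]$, which follows from Lemma~\ref{lm:subst-drv-comp} together with the closedness of $B$ and $S$ (so the only eigenvariables free in $\Pi_S$ are $\vec x$, which $\delta$ may be taken to avoid by the renaming convention on internal variables). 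The $\coindLP$ case is identical in spirit; the only difference is that $\Theta$ additionally inserts the fixed $\mc$/$\init$ gadget on $S\,\vec t$, and since $S$ is closed $\delta$ sends the $\init$ leaf on $\Seq{S\,\vec t}{S\,\vec t}$ to the $\init$ leaf on $\Seq{S\,(\vec t\delta)}{S\,(\vec t\delta)}$, after which the same matching of premises goes through.

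I expect the only real obstacle to be the bookkeeping in these last two cases: one must check that the auxiliary nodes that $\Theta$ manufactures (the $\mc$, and for $\coindLP$ also the $\init$ leaf) land on exactly the sequents produced by the opposite order of application once $\vec t$ is replaced by $\vec t\delta$. This is driven entirely by the closedness of $S$ and by the substitution-composition identity of Lemma~\ref{lm:subst-drv-comp} applied to the parametric derivation $\Pi_S$; with those in hand, all remaining cases are mechanical distribution steps closed under the induction hypothesis.
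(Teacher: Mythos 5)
Your proof is correct and matches the paper's (implicit) argument: the paper states Lemma~\ref{lm:param subst commutes} without proof, justifying it only by the preceding remark that parameter substitutions replace parameters with \emph{closed} terms and hence commute with eigenvariable substitutions. Your induction on the height of $\Pi$, with the careful bookkeeping for $\indRP$ and $\coindLP$ via Lemma~\ref{lm:subst-drv-comp}, closedness of $S$, and the renaming convention on internal variables, is exactly the routine elaboration the authors left to the reader.
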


In the following, we denote with $[\Theta, (\Pi_S,S)/X^p]$, where $X^p
\# \Theta$, a parameter substitution obtained by extending $\Theta$
with the map $X^p \mapsto (\Pi_S,S)$.

\begin{lemma}
  \label{lm:param subst vacuous}
  Let $\Pi$ be a derivation of $\Seq \Gamma C$, $\Theta$  a
  parameter substitution and  $X^p$  a parameter such that $X^p
  \not \in supp(\Theta)$ and $X^p$ does not occur in $\Seq \Gamma C$.
  Then $\Pi[\Theta, (\Pi_S,S)/X^p] = \Pi\Theta$ for every $\Pi_S$ and
  $S$.
\end{lemma}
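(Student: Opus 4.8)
The plan is to prove $\Pi[\Theta,(\Pi_S,S)/X^p] = \Pi\Theta$ by first reducing to the situation in which $X^p$ occurs nowhere in $\Pi$, and then running a routine induction on the height of $\Pi$ that follows the clauses defining $\Pi\Theta$.

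First I would observe that, since by hypothesis $X^p$ does not occur in the end-sequent $\Seq{\Gamma}{C}$, every occurrence of $X^p$ inside $\Pi$ (if any) is an occurrence of an \emph{internal} parameter in the sense of Section~\ref{sec:drv}: such occurrences can only arise as a parameter freshly introduced bottom-up by an $\indR$ or $\coindL$ step, or as a parameter appearing in an invariant $S$ of an $\indL$ or $\coindR$ step. By the convention of identifying derivations modulo injective renaming of internal parameters, I may rename every internal occurrence of $X^p$ to a parameter distinct from $X^p$, from every element of $supp(\Theta)$, and from every parameter in the range of $\Theta$. This produces a derivation equal to $\Pi$ in our quotient in which $X^p$ does not occur at all, so it suffices to establish the claim under the additional assumption that $X^p \# \Pi$.

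Writing $\Theta' = [\Theta,(\Pi_S,S)/X^p]$, I would then prove $\Pi\Theta' = \Pi\Theta$ by induction on the height of $\Pi$. The guiding observation is that $\Theta$ and $\Theta'$ act identically away from $X^p$: they induce the same substitution on any formula not containing $X^p$, and they assign the same pair to every parameter $Y^q \neq X^p$. In the two distinguished clauses, where $\Pi$ ends in $\indRP$ or $\coindLP$ on a parameter $Y^q \in supp(\Theta')$, the assumption $X^p \# \Pi$ forces $Y^q \neq X^p$, whence $Y^q \in supp(\Theta)$ and $\Theta'(Y^q) = \Theta(Y^q) = (\Pi_{S'},S')$. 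The $mc$ node inserted by the definition, together with its side derivation $\Pi_{S'}[\vec t/\vec x]$, is therefore literally identical under both substitutions, and the induction hypothesis applied to the premise $\Pi'$ closes the case. In the remaining ``all other rules'' clause, both $\Pi\Theta'$ and $\Pi\Theta$ end with the same rule, their end-sequents agree because $X^p$ occurs in neither $\Gamma$ nor $C$ (recalling that, as in Definition~\ref{def:param subst}, parameters have predicate type and so cannot occur inside the efo-typed term arguments), and the induction hypothesis gives $\Pi_i\Theta' = \Pi_i\Theta$ for each premise.

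The only genuinely delicate step is the initial reduction. One must be convinced that exactly those occurrences of $X^p$ that the hypothesis fails to exclude --- namely occurrences inside invariants $S$ of $\indL$ and $\coindR$ instances, and the parameters freshly introduced by $\indR$ and $\coindL$ --- are all internal, and that renaming them consistently throughout $\Pi$ preserves both the soundness of every inference step and the end-sequent, hence leaves the equivalence class of $\Pi$ unchanged. Once this is granted, the induction is entirely mechanical, since the two parameter substitutions can only differ at a parameter that no longer appears anywhere in the derivation.
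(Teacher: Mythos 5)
Your proof is correct, and it supplies an argument that the paper itself leaves implicit: Lemma~\ref{lm:param subst vacuous} is stated without proof, the intended justification being precisely the convention built into the definition of $\Pi\Theta$ (Definition~\ref{def:param subst} together with the clause in the ``all other cases'' item that internal parameters in premises are assumed, w.l.o.g., not to appear in $\Theta$). You correctly isolate the one genuinely delicate point: the hypothesis as stated (freshness of $X^p$ for the end-sequent only) is not directly inductive, since premise sequents may well contain $X^p$ (e.g.\ as the fresh parameter of an $\indR$ step), so one must first invoke the quotient of derivations modulo injective renaming of internal parameters (Section~\ref{sec:drv}) to strengthen the invariant to ``$X^p$ occurs nowhere in $\Pi$'', after which the height induction is mechanical because $[\Theta,(\Pi_S,S)/X^p]$ and $\Theta$ agree on every parameter actually occurring in the derivation; in particular the $\indRP$ and $\coindLP$ clauses insert literally the same $\mc$ nodes under both substitutions. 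Two small quibbles, neither fatal: your enumeration of how internal occurrences of $X^p$ can arise omits the $\mc$ rule, which, as the paper notes explicitly in Section~\ref{sec:linc}, may also introduce new parameters in its cut formulas when read bottom-up --- but this is harmless, since internality is defined by absence from the end-sequent rather than by the rule that introduced the occurrence; and the renaming should be chosen fresh not only for $supp(\Theta)$ and the parameters in the range of $\Theta$ but also for the new pair $(\Pi_S,S)$ being adjoined, which is equally easy to arrange since only finitely many parameters must be avoided.
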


\section{Cut elimination for $\Linc$}
\label{sec:cut-elim}

The central result of our work is cut-elimination, from which
consistency of the logic follows.  Gentzen's classic proof of
cut-elimination for first-order logic uses an induction on the size of
the cut formula
. The cut-elimination procedure consists of a set of reduction
rules that reduces a cut of a compound formula to cuts on its
sub-formulae of smaller size.  In the case of $\Linc$, the use of
induction/co-induction complicates the reduction of cuts.  Consider
for example a cut involving the induction rules:
$$
\infer[\mc] {\Seq{\Delta, \Gamma}{C} } { \infer[\indR]
  {\Seq{\Delta}{p\,\vec t}} { \deduce{\Seq{\Delta}{B\,X^p\,\vec t}}{\Pi_1} } &
  \infer[\indL] {\Seq{p\,\vec t, \Gamma}{C}} {
    \deduce{\Seq{B\,S\,\vec y}{S\,\vec y}}{\Pi_B} & \deduce{\Seq{S\,\vec t,
        \Gamma}{C}}{\Pi} } }
$$
There are at least two problems in reducing this cut. First, any
permutation upwards of the cut will necessarily involve a cut with $S$
that can be of larger size than $p$, and hence a simple induction on
the size of the cut formula will not work.  Second, the invariant $S$ does
not appear in the conclusion of the left premise of the cut. The
latter means that we need to transform the left premise so that its
end sequent will agree with the right premise. Any such transformation
will most likely be \emph{global}, and hence simple induction on the
height of derivations will not work either.

We shall use the \emph{reducibility} technique to prove cut
elimination.  More specifically, we shall build on the notion of
reducibility introduced by Martin-L\"of to prove normalization of an
intuitionistic logic with iterative inductive definition
\cite{martin-lof71sls}.  Martin-L\"of's proof has been adapted to
sequent calculus by McDowell and Miller~\cite{mcdowell00tcs}, but in a
restricted setting where only natural number induction is
allowed. Since our logic involves arbitrary stratified inductive
definitions, which also includes iterative inductive definitions, we
shall need different, and more general, cut reductions. But the real
difficulty in our case is in establishing cut elimination in the
presence of co-inductive definitions, for which there is no known
direct cut elimination proof (prior to our work~\cite{Momigliano03TYPES} on which
this article is based on), at the best of our knowledge, as far as
the sequent calculus is concerned.

The main part of the reducibility technique is a definition of the
family of reducible sets of derivations.  In Martin-L\"of's theory of
iterative inductive definition, this family of sets is defined
inductively by the ``type'' of the derivations they contain, i.e., the
formula in the right-hand side of the end sequent in a
derivation. Extending this definition of reducibility to $\Linc$ is
not obvious.  In particular, in establishing the reducibility of a
derivation 
of type $p\,\vec t$ ending with a $\coindR$ rule
one must first establish the reducibility of its premise derivations,
which may have larger types, since $S\,\vec t$ could be any formula.
Therefore a simple inductive definition based on types of derivations
would not be well-founded.

The key to properly ``stratify'' the definition of reducibility is to
consider reducibility under parameter substitutions.  This notion of
reducibility, called \emph{parametric reducibility}, was originally
developed by Girard to prove strong normalisation of
System~F, i.e., in the interpretation of universal
types.  As with strong normalisation of System F, (co-)inductive
parameters are substituted with some ``reducibility candidates'',
which in our case are certain sets of derivations satisfying closure
conditions similar to those for System F, but which additionally
satisfy certain closure conditions related to (co-)inductive
definitions.

The remainder of this section is structured as follows.  In
Section~\ref{sec:reduc} we define a set of cut reduction rules that
are used to elimination the applications of the cut rule.  For the
cases involving logical operators, the cut-reduction rules used to
prove the cut-elimination for $\Linc$ are the same as those of
$\FOLDN$~\cite{mcdowell00tcs}.  The crucial differences are, of
course, in the reduction rules involving induction and co-induction
rules, where we use the transformation described in
Definition~\ref{def:param subst}.  We then proceed to define two
notions essential to our cut elimination proof: \emph{normalizability}
(Section~\ref{sec:norm}) and \emph{parametric reducibility}
(Section~\ref{sec:red}).  These can be seen as counterparts for
Martin-L\"of's notions of normalizability and
\emph{computability}~\cite{martin-lof71sls}, respectively.
Normalizability of a derivation implies that all the cuts in it can be
eventually eliminated (via the cut reduction rules defined earlier).
Reducibility is a stronger notion, in that it implies normalizability.
The main part of the cut elimination proof is presented in
Section~\ref{sec:ceproof}, where we show that every derivation is
reducible, hence it can be turned into a cut-free derivation.


\subsection{Cut reduction}
\label{sec:reduc}

We now define a reduction relation on derivations ending with $mc$.
This reduction relation is an extension of the similar cut reduction
relation used in McDowell and Miller's cut elimination
proof~\cite{mcdowell00tcs}.  In particular, the reduction rules
involving introduction rules for logical connectives are the same. The
main differences are, of course, in the reduction rules involving
induction and co-induction rules.  There is also slight difference in
one reduction rule involving equality, which in our case utilises the
derived rule $subst$.  Therefore in the following definition, we shall
highlight only those reductions that involve (co-)induction and
equality rules. The complete list of reduction rules can be found in
Appendix~\ref{app:reduc}.

To ease presentation, we shall use the following notations to
denote certain forms of derivations.  The derivation
$$
\infer[mc] {\Seq{\Delta_1, \ldots, \Delta_n, \Gamma }{C}} {
  \deduce{\Seq {\Delta_1}{B_1}}{\Pi_1} & \cdots & \deduce{\Seq
    {\Delta_n}{B_n}}{\Pi_n} & \deduce{\Seq {\Gamma}{C}}{\Pi} }
\enspace
$$
is abbreviated as $mc(\Pi_1, \ldots, \Pi_n, \Pi )$. Whenever we write
$mc(\Pi_1,\ldots,\Pi_n,\Pi)$ we assume implicitly that the derivation
is well-formed, i.e., $\Pi$ is a derivation ending with some
sequent $\Seq \Gamma C$ and the right-hand side of the end sequent of each $\Pi_i$ 
is a formula $F \in \Gamma$.
Similarly, we abbreviated as $\idrv_B$ the derivation 
$$
\infer[init]
{\Seq {B}{B}}{}
$$
and $subst(\{\Pi^\theta\}_\theta)$ denotes a derivation ending with
the rule $subst$ with premise derivations $\{\Pi^\theta\}_\theta$.

\begin{definition}
  \label{def:reduct}
  We define a \emph{reduction} relation between derivations.  The redex
  is always a derivation $\Xi$ ending with the multicut rule
  \begin{displaymath}
    \infer[\mc]{\Seq{\Delta_1,\ldots,\Delta_n,\Gamma}{C}}
    {\deduce{\Seq{\Delta_1}{B_1}}
      {\Pi_1}
      & \cdots
      & \deduce{\Seq{\Delta_n}{B_n}}
      {\Pi_n}
      & \deduce{\Seq{B_1,\ldots,B_n,\Gamma}{C}}
      {\Pi}}
    \enspace 
  \end{displaymath}
  We refer to the formulas $B_1,\dots,B_n$ produced by the $\mc$ as
  \emph{cut formulas}.

  If $n=0$, $\Xi$ reduces to the premise derivation $\Pi$.  For $n >
  0$ we specify the reduction relation based on the last rule of the
  premise derivations.  If the rightmost premise derivation $\Pi$ ends
  with a left rule acting on a cut formula $B_i$, then the last rule
  of $\Pi_i$ and the last rule of $\Pi$ together determine the
  reduction rules that apply.  Following McDowell and
  Miller~\cite{mcdowell00tcs}, we classify these rules according to
  the following criteria: we call the rule an \emph{essential} case
  when $\Pi_i$ ends with a right rule; if it ends with a left rule or $subst$, it
  is a \emph{left-commutative} case; if $\Pi_i$ ends with the $\init$
  rule, then we have an \emph{axiom} case; a \emph{multicut} case
  arises when it ends with the $\mc$ rule.  When $\Pi$ does not end
  with a left rule acting on a cut formula, then its last rule is
  alone sufficient to determine the reduction rules that apply.  If
  $\Pi$ ends with $subst$ or a rule acting on a formula other than a cut formula,
  then we call this a \emph{right-commutative} case.  A
  \emph{structural} case results when $\Pi$ ends with a contraction or
  weakening on a cut formula.  If $\Pi$ ends with the $\init$ rule,
  this is also an axiom case; similarly a multicut case arises if
  $\Pi$ ends in the $\mc$ rule.
  For simplicity of presentation, we always show $i = 1$.

  We show here  the cases involving (co-)induction rules. 

\paragraph{Essential cases:}

\begin{trivlist}

\item[\fbox{$\eqL/\eqR$}]
Suppose $\Pi_1$ and $\Pi$ are
  \begin{displaymath}
    \infer[\eqR]{\Seq{\Delta_1}{s = t}}
    {}
    \qquad\qquad\qquad
    \infer[\eqL]{\Seq{s=t,B_2,\ldots,B_n,\Gamma}{C}}
    {\left\{\raisebox{-1.5ex}
        {\deduce{\Seq{B_2\rho,\ldots,B_n\rho,\Gamma\rho}
            {C\rho}}
          {\Pi^\rho}}
      \right\}_\rho}
    \enspace 
  \end{displaymath}
  Note that in this case, $\rho$ in $\Pi$ ranges over all substitution, as any
  substitution is a unifier of $s$ and $t$. 
  Let $\Xi_1$ be the derivation 
  $mc(\Pi_2,\ldots,\Pi_n,subst(\{\Pi^\rho\}_\rho$.
  In this case, $\Xi$ reduces to
  $$
  \infer=[\wL]
  {\Seq{\Delta_1,\Delta_2,\ldots,\Delta_n,\Gamma}{C}}
  {\deduce{\Seq{\Delta_2,\ldots,\Delta_n,\Gamma}{C}}{\Xi_1}}
  $$
  We use the double horizontal lines to indicate that the relevant
  inference rule (in this case, $\wL$) may need to be applied zero
  or more times.

\item[\fbox{$\indR/\indL$}] 
Suppose $\Pi_1$ and $\Pi$ are, respectively,  
$$
\infer[\indR]
{\Seq {\Delta_1}{p\,\vec t}}
{
 \deduce{\Seq{\Delta_1}{D\,X^p\,\vec t}}{\Pi_1'}
}
\qquad
\infer[\indL] {\Seq{p\,\vec{t}, B_2,\dots,B_n,\Gamma}{C}} {
  \deduce{\Seq{D\,S\,\vec{y}}{S\,\vec{y}}}{\Pi_S} &
  \deduce{\Seq{S\,\vec{t}, B_2,\dots,B_n, \Gamma} {C}}{\Pi'} }
$$
where $p\,\vec{x} \defmu D\,p\,\vec{x}$ and $X^p$ is a new
parameter.  Then $\Xi$ reduces to
$$
mc(mc(\Pi_1'p[(\Pi_S,S)/X^p], \Pi_S[\vec t/\vec y]), \Pi_2,\ldots,\Pi_n,\Pi').
$$

\item[\fbox{$\coindR/\coindL$}] Suppose $\Pi_1$ and $\Pi$ are 
$$
\infer[\coindR] 
{\Seq{\Delta_1}{p\,\vec{t}}} 
{
  \deduce{\Seq{\Delta_1}{S\,\vec{t}}}{\Pi_1'} &
  \deduce{\Seq{S\,\vec{y}}{D\,S\,\vec{y}}}{\Pi_S} 
} 
\qquad \qquad
\infer[\coindL] 
{\Seq{p\,\vec{t}, \dots, \Gamma}{C}}
{
  \deduce{\Seq{D\,X^p\,\vec{t},\dots, \Gamma}{C}}{\Pi'}
}
$$
where $p\,\vec x \defnu D\,p\,\vec x$ and $X^p$ is a new parameter.
Then $\Xi$ reduces to
$$
mc(mc(\Pi_1', \Pi_S[\vec t/\vec y]), \Pi_2,\ldots,\Pi_n,\Pi'[(\Pi_S,S)/X^p]).
$$
\end{trivlist}

\paragraph{Left-commutative cases:}

In the following, we suppose that $\Pi$ ends with a left rule,
other than $\{\cL, \wL\}$, acting on $B_1$.

\begin{description}

\item[\fbox{$\indL/\circL$}] Suppose $\Pi_1$ is 
$$
\infer[\indL] {\Seq{p\,\vec{t}, \Delta_1'}{B_1}} {
  \deduce{\Seq{D\,S\,\vec{y}}{S\,\vec{y}}}{\Pi_S} &
  \deduce{\Seq{S\,\vec{t}, \Delta_1'}{B_1}}{\Pi_1'} }
$$      
where $p\,\vec{x} \defmu D\,p\,\vec{x}$.  
Let $\Xi_1 = mc(\Pi_1',\Pi_2,\ldots,\Pi_n, \Pi$.
Then $\Xi$ reduces to
$$
\infer[\indL] {\Seq{p\,\vec{t}, \Delta_1',\dots,\Delta_n}{C}} {
  \deduce{\Seq{D\,S\,\vec{y}}{S\,\vec{y}}}{\Pi_S} &
  \deduce{\Seq{S\,\vec{t}, \Delta_1',\dots,\Delta_n,\Gamma}{C}}{\Xi_1}
}
$$      
\end{description}

\paragraph{Right-commutative cases:}

\begin{description}

\item[\fbox{$-/\indL$}] Suppose $\Pi$ is 
$$
\infer[\indL] {\Seq{B_1,\dots,B_n, p\,\vec{t},\Gamma'} {C}} {
  \deduce{\Seq{D\,S\,\vec{y}}{S\,\vec{y}}}{\Pi_S} &
  \deduce{\Seq{B_1,\dots,B_n, S\,\vec{t}, \Gamma'}{C}}{\Pi'} }
\enspace ,
$$      
where $p\,\vec{x} \defmu D\,p\,\vec{x}$.  
Let $\Xi_1 = mc(\Pi_1,\ldots,\Pi_n, \Pi'$. 
Then $\Xi$ reduces to
$$
\infer[\indL] {\Seq{\Delta_1,\dots,\Delta_n, p\,\vec{t},\Gamma'}{C}} {
  \deduce{\Seq{D\,S\,\vec{y}}{S\,\vec{y}}}{\Pi_S} &
  \deduce{\Seq{\Delta_1,\dots,\Delta_n, S\,\vec{t}, \Gamma'}{C}}{\Xi_1}
} \enspace 
$$

\item[\fbox{$-/\coindR$}] Suppose $\Pi$ is 
$$
\infer[\coindR] {\Seq{B_1,\dots,B_n,\Gamma}{p\,\vec{t}}} {
  \deduce{\Seq{B_1,\dots,B_n,\Gamma} {S\,\vec{t}}}{\Pi'} &
  \deduce{\Seq{S\,\vec{y}}{D\,S\,\vec{y}}}{\Pi_S} } \enspace ,
$$
where $p\,\vec{x} \defnu D\,p\,\vec{x}$.  
Let $\Xi_1 = mc(\Pi_1,\ldots,\Pi_n,\Pi'$.
Then $\Xi$ reduces to
$$
\infer[\coindR] {\Seq{\Delta_1,\dots,\Delta_n,\Gamma}{p\,\vec{t}}} {
  \deduce{\Seq{\Delta_1,\dots,\Delta_n,\Gamma} {S\,\vec{t}}}{\Xi_1} &
  \deduce{\Seq{S\,\vec{y}}{D\,S\,\vec{y}}}{\Pi_S} } \enspace
$$
\end{description}

\end{definition}

It is clear from an inspection of the inference rules in
Figure~\ref{fig:linc} and the definition of cut reduction (see
Appendix~\ref{app:reduc}) that every
derivation ending with a multicut has a reduct. 
Note that since the left-hand
side of a sequent is a multiset, the same formula may 
occur more than once in the multiset. In the cut reduction rules,
we should view these occurrences as distinct so that no ambiguity
arises as to which occurrence of a formula is subject to
the $mc$ rule. 


The following lemma shows that the reduction relation is preserved by
eigenvariable substitution. The proof is given in Appendix~\ref{app:red}.

\begin{lemma}
  \label{lm:reduct_subst}
  Let $\Pi$ be a derivation ending with a $\mc$
  and let $\theta$ be a substitution.  If $\Pi\theta$ reduces to $\Xi$
  then there exists a derivation $\Pi'$ such that $\Xi = \Pi'\theta$
  and $\Pi$ reduces to $\Pi'$.
\end{lemma}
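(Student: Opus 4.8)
The plan is to argue by a case analysis on the cut-reduction rule of Definition~\ref{def:reduct} that fires on $\Pi\theta$ to produce $\Xi$, and, in each case, to identify the corresponding reduction applied to $\Pi$ itself. The starting observation is that $\Pi\theta$ again ends with $\mc$, and that by Definition~\ref{def:subst} every premise of $\Pi\theta$ ends with exactly the same rule as the corresponding premise of $\Pi$: an $\eqL$ premise remains an $\eqL$ premise (case~1), a $subst$ premise remains a $subst$ premise (case~2), and any other last rule is preserved verbatim (case~3). Since a substitution only rewrites the formulas occurring in each sequent, the combinatorial data that selects the reduction --- which formulas are the cut formulas, and whether the rightmost premise acts on a cut formula --- is unaffected by $\theta$. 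Consequently, whichever reduction rule applies to $\Pi\theta$, the same rule applies to $\Pi$ and yields some reduct $\Pi'$ with $\Pi$ reducing to $\Pi'$; it then remains to check $\Xi = \Pi'\theta$ in each case.

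The verification amounts to pushing $\theta$ through the various derivation constructions occurring in the reducts, and this is where the two commutation lemmas do the work. For the purely logical and structural reductions the reducts are assembled from the $\Pi_i$ and the rightmost premise using only $\mc$, $\wL$ and $\cL$, so $\Xi = \Pi'\theta$ follows directly from case~3 of Definition~\ref{def:subst}. The essential (co-)induction reductions are the first delicate cases: their reducts contain a parameter substitution $[(\Pi_S,S)/X^p]$ together with an eigenvariable substitution $[\vec t/\vec y]$. Because the invariant $S$ and the body $D$ are closed (Definition~\ref{def:param subst}), Lemma~\ref{lm:param subst commutes} gives $(\Pi_1'[(\Pi_S,S)/X^p])\theta = (\Pi_1'\theta)[(\Pi_S,S)/X^p]$, which is exactly the parameter substitution produced when $\Pi\theta$ is reduced. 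For the eigenvariable part, Lemma~\ref{lm:subst-drv-comp} yields $(\Pi_S[\vec t/\vec y])\theta = \Pi_S([\vec t/\vec y]\circ\theta)$; since the end sequent $\Seq{D\,S\,\vec y}{S\,\vec y}$ has only the fresh eigenvariables $\vec y$ free, and $\vec y$ are internal to $\Pi$ and hence fixed by $\theta$, this composite agrees with $[\vec t\theta/\vec y]$ on the free variables of $\Pi_S$ and therefore equals $\Pi_S[\vec t\theta/\vec y]$ --- precisely the subderivation appearing in the reduct of $\Pi\theta$. The $\coindR/\coindL$ essential case and the commutative cases for $\indL$ and $\coindR$ are handled the same way.

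The case I expect to be the real crux is the essential $\eqL/\eqR$ reduction, which is exactly the reduction whose reduct introduces the derived rule $subst$ (and the reason $subst$ was added in the first place). Writing $\Pi_0$ for the rightmost premise, $\Pi_0\theta$ is built by the re-indexing of case~1 of Definition~\ref{def:subst}, so its $\eqL$-premises are $\{\Pi^{\theta\circ\rho'}\}_{\rho'}$; reducing $\Pi\theta$ then packages these into $subst(\{\Pi^{\theta\circ\rho'}\}_{\rho'})$ under the appropriate $\wL$'s. On the other side, $\Pi$ reduces to a derivation containing $subst(\{\Pi^{\rho}\}_{\rho})$, and applying $\theta$ to it triggers case~2 of Definition~\ref{def:subst}, which re-indexes this precisely to $subst(\{\Pi^{\theta\circ\rho'}\}_{\rho'})$. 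Thus the two derivations coincide, with the outer $\wL$'s matching trivially: the clause of Definition~\ref{def:subst} for $subst$ is calibrated exactly so that this square commutes. Collecting all the cases establishes the lemma.
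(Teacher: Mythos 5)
Your proof is correct and takes essentially the same route as the paper's: both rest on the observation that redexes are determined by the last rules of the premise derivations, which eigenvariable substitution preserves, handle the $\indR/\indL$ and $\coindR/\coindL$ cases via the commutation of parameter and eigenvariable substitutions (Lemma~\ref{lm:param subst commutes}), and resolve the crux $\eqL/\eqR$ case by matching the re-indexed family $\{\Pi^{\theta\circ\rho'}\}_{\rho'}$ against the $subst$ clause of Definition~\ref{def:subst}. Your explicit verification that $(\Pi_S[\vec t/\vec y])\theta = \Pi_S[\vec t\theta/\vec y]$ via Lemma~\ref{lm:subst-drv-comp} and the freshness convention on $\vec y$ merely spells out what the paper dismisses as routine.
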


\subsection{Normalizability}
\label{sec:norm}

\begin{definition}
  \label{def:norm}
  We define the set of \emph{normalizable} derivations to be the
  smallest set that satisfies the following conditions:
  \begin{enumerate}
  \item If a derivation $\Pi$ ends with a multicut, then it is
    normalizable if every reduct of $\Pi$ is normalizable.
  \item If a derivation ends with any rule other than a multicut, then
    it is normalizable if the premise derivations are normalizable.
  \end{enumerate}
\end{definition}
The set of all normalizable derivations is denoted by $\NM$.

Each clause in the definition of normalizability asserts that a
derivation is normalizable if certain (possibly infinitely many) other
derivations are normalizable. We call the latter the
\emph{predecessors} of the former.  Thus a derivation is normalizable
if the tree of its successive predecessors is well-founded.  We refer
to this well-founded tree as its \emph{normalization}.  Since a
normalization is well-founded, it has an associated induction
principle: for any property $P$ of derivations, if for every
derivation $\Pi$ in the normalization, $P$ holds for every predecessor
of $\Pi$ implies that $P$ holds for $\Pi$, then $P$ holds for every
derivation in the normalization.  We shall define explicitly a measure
on a normalizable derivation based on its normalization tree.

\begin{definition}[Normalization Degree]
  \label{def:deg-norm}
  Let $\Pi$ be a normalizable derivation. The \emph{normalization
    degree of $\Pi$}, denoted by $nd(\Pi)$, is defined by induction on
  the normalization of $\Pi$ as follows:
$$
nd(\Pi) = 1 + \lub {\{nd(\Pi') \mid \Pi' \hbox{is a predecessor of
    $\Pi$}\}}
$$
\end{definition}
The normalization degree of $\Pi$ is basically the height of its
normalization tree. Note that $nd(\Pi)$ can be an ordinal in general,
due to the possibly infinite-branching rule $\eqL$.

\begin{lemma}
  \label{lm:norm-cut-free}
  If there is a normalizable derivation of a sequent, then there is a
  cut-free derivation of the sequent.
\end{lemma}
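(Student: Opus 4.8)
The plan is to argue by the well-founded induction principle attached to the normalization of $\Pi$ (equivalently, by transfinite induction on the normalization degree $nd(\Pi)$ of Definition~\ref{def:deg-norm}), performing a case analysis on the last rule of $\Pi$. The engine of the induction is the observation that, under either clause of Definition~\ref{def:norm}, the derivations whose normalizability is required of $\Pi$ — the reducts in the multicut case, the premise derivations otherwise — are exactly the predecessors of $\Pi$. Hence each of them is itself normalizable, and by the formula for $nd$ in Definition~\ref{def:deg-norm} each has normalization degree strictly below $nd(\Pi)$. So the induction hypothesis applies to all of them, and I may assume that every normalizable derivation of strictly smaller degree has a cut-free derivation of its end sequent.

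First I would treat the case where $\Pi$ ends with a multicut. Since every derivation ending with $mc$ has a reduct (noted just after Definition~\ref{def:reduct}), I choose any reduct $\Xi$. By clause~1 of Definition~\ref{def:norm} $\Xi$ is normalizable, and $nd(\Xi) < nd(\Pi)$; moreover an inspection of the reduction rules shows that each reduct has the same end sequent $\Seq{\Gamma}{C}$ as the redex. Applying the induction hypothesis to $\Xi$ therefore yields a cut-free derivation of $\Seq{\Gamma}{C}$, as desired.

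Next, suppose $\Pi$ ends with a rule $R$ other than $mc$, with premise derivations $\{\Pi_i\}_{i\in I}$, where the index set $I$ may be infinite (as for $\eqL$ and $subst$). By clause~2 each $\Pi_i$ is normalizable with $nd(\Pi_i) < nd(\Pi)$, so the induction hypothesis supplies a cut-free derivation $\Pi_i'$ of each premise sequent. Because $R$ is not the cut rule and $\Pi_i'$ proves the very same premise sequent as $\Pi_i$ (so that the eigenvariable and freshness side conditions of $R$ are inherited), I can reassemble the $\Pi_i'$ by a single application of $R$, obtaining a derivation of $\Seq{\Gamma}{C}$ containing no instance of $mc$. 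This infinitary reassembly for $\eqL$ and $subst$ is legitimate precisely because the whole argument is a transfinite induction on the well-founded normalization tree, applying the hypothesis uniformly to all — even infinitely many — premises at once.

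The one remaining wrinkle is the $subst$ rule, which is a macro for an $mc$ and so must not persist in a genuinely cut-free derivation: the induction above produces a derivation free of $mc$ but possibly still containing $subst$. To conclude I would invoke Lemma~\ref{lm:subst-elimination}, which turns any cut-free derivation employing $subst$ into one free of $subst$ as well, yielding the desired cut-free derivation in $\Linc$ proper. I expect the main point requiring care to be the transfinite character of the argument rather than any individual case: because $\eqL$ and $subst$ may branch infinitely, $nd(\Pi)$ is in general an ordinal, and both the case split and the reassembly step must be understood as instances of the well-founded induction on the normalization, not as an ordinary induction on a natural-number height.
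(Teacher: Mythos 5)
Your proof is correct and is essentially the argument the paper intends: its own proof is just the citation ``Similarly to~\cite{mcdowell00tcs}'', and that proof is precisely your well-founded induction on the normalization (multicut case handled by passing to a reduct, which exists, is normalizable, and has the same end sequent; all other rules by reassembling the cut-free premise derivations). Your extra step of discharging $subst$ via Lemma~\ref{lm:subst-elimination} is a sensible refinement consistent with how the paper itself separates cut-freeness from $subst$-freeness in Corollary~\ref{cor:cut-elimination}.
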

\begin{proof} Similarly to~\cite{mcdowell00tcs}.  \qed
\end{proof}

In the proof of the main lemma for cut elimination
(Lemma~\ref{lm:comp}) we shall use induction on the normalization degree,
instead of using directly the normalization ordering. The reason is
that in some inductive cases in the proof, we need to compare a
(normalizable) derivation with its instances, but the normalization
ordering does not necessarily relate the two, e.g., $\Pi$ and
$\Pi\theta$ may not be related by the normalization ordering, although
their normalization degrees are (see Lemma~\ref{lm:norm-degree}).
Later, we shall define a stronger ordering called \emph{reducibility},
which implies normalizability. In the cut elimination proof for
$\FOLDN$~\cite{mcdowell00tcs}, in one of the inductive cases, an
implicit reducibility ordering is assumed to hold between derivation
$\Pi$ and its instance $\Pi\theta$. As the reducibility ordering in
their setting is a subset of the normalizability ordering, this
assumption may not hold in all cases, and as a consequence there is a
gap in the proof in~\cite{mcdowell00tcs}.\footnote{ This gap was fixed
  in~\cite{tiu04phd} by strengthening the main lemma for cut
  elimination. Recently, Andrew Gacek and Gopalan Nadathur proposed
  another fix by assigning an explicit ordinal to each reducible
  derivation, and using the ordering on ordinals to replace the
  reducibility ordering in the lemma.  A discussion of these fixes can
  be found in the errata page of the paper~\cite{mcdowell00tcs}:
  \mbox{\url{http://www.lix.polytechnique.fr/Labo/Dale.Miller/papers/tcs00.errata.html}}.
  We essentially follow Gacek and Nadathur's approach here, although we
  assign ordinals to normalizable derivations rather than to reducible
  derivations.  }

The next lemma states that normalization is closed under
substitutions.
\begin{lemma}
  \label{lm:subst-norm}
  If $\Pi$ is a normalizable derivation, then for any substitution
  $\theta$, $\Pi\theta$ is normalizable.
\end{lemma}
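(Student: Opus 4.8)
The plan is to argue by induction on the normalization of $\Pi$ (equivalently, on the normalization degree $nd(\Pi)$ of Definition~\ref{def:deg-norm}). The induction hypothesis will state that for every predecessor $\Pi'$ of $\Pi$ and every substitution $\sigma$, the derivation $\Pi'\sigma$ is normalizable. I would then split on the last rule of $\Pi$, following the case distinction in Definition~\ref{def:subst} that governs how $\Pi\theta$ is built.

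First I would dispatch the cases where $\Pi$ ends with a rule other than $\mc$, $\eqL$, or $subst$. Here Definition~\ref{def:subst} gives that $\Pi\theta$ ends with the same rule and its premises are exactly $\Pi_1\theta, \ldots, \Pi_n\theta$, where the $\Pi_i$ are the predecessors of $\Pi$. The induction hypothesis makes each $\Pi_i\theta$ normalizable, so clause~2 of Definition~\ref{def:norm} makes $\Pi\theta$ normalizable. The cases where $\Pi$ ends with $\eqL$ or $subst$ require a small observation: by Definition~\ref{def:subst} the premises of $\Pi\theta$ are the derivations $\Pi^{\theta\circ\rho'}$, and since $\theta\circ\rho'$ is again a unifier of the equated pair (respectively, again a substitution), each such $\Pi^{\theta\circ\rho'}$ is literally one of the predecessors $\Pi^\rho$ of $\Pi$, hence already normalizable. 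Clause~2 again yields normalizability of $\Pi\theta$.

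The main case, and the one I expect to be the crux, is when $\Pi$ ends with $\mc$. Then $\Pi\theta$ also ends with $\mc$ (the ``any other rule'' clause of Definition~\ref{def:subst}), so by clause~1 of Definition~\ref{def:norm} it suffices to show that every reduct of $\Pi\theta$ is normalizable. This is exactly where Lemma~\ref{lm:reduct_subst} does the work: given any reduct $\Xi$ of $\Pi\theta$, the lemma supplies a derivation $\Pi'$ with $\Xi = \Pi'\theta$ such that $\Pi$ reduces to $\Pi'$. Since $\Pi'$ is then a reduct---hence a predecessor---of $\Pi$, the induction hypothesis applied to $\Pi'$ and $\theta$ shows that $\Xi = \Pi'\theta$ is normalizable. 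As $\Xi$ was an arbitrary reduct of $\Pi\theta$, clause~1 gives that $\Pi\theta$ is normalizable, closing the induction.

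The only genuine obstacle is the $\mc$ case, and it is entirely localized in the appeal to Lemma~\ref{lm:reduct_subst}: without the fact that reduction commutes with eigenvariable substitution, one could not guarantee that reducts of $\Pi\theta$ all arise as substitution instances of reducts of $\Pi$, and the induction would not go through. The remaining cases are routine bookkeeping over Definition~\ref{def:subst}, with the $\eqL$/$subst$ reindexing being the only point needing care.
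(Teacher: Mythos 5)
Your proof is correct and follows essentially the same route as the paper's: induction on $nd(\Pi)$, with the $\mc$ case handled via Lemma~\ref{lm:reduct_subst} to pull every reduct of $\Pi\theta$ back to a substitution instance of a reduct of $\Pi$, and all other rules handled by Definition~\ref{def:subst} and clause~2 of Definition~\ref{def:norm}. Your explicit treatment of the $\eqL$/$subst$ reindexing (each premise $\Pi^{\theta\circ\rho'}$ of $\Pi\theta$ being literally a premise $\Pi^\rho$ of $\Pi$) is exactly what the paper compresses into the observation that each premise of $\Pi\theta$ is either $\Pi_i$ or $\Pi_i\theta$.
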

\begin{proof} By induction on $nd(\Pi)$.
  \begin{enumerate}
  \item If $\Pi$ ends with a multicut, then $\Pi\theta$ also ends with
    a multicut. By Lemma~\ref{lm:reduct_subst} every reduct of
    $\Pi\theta$ corresponds to a reduct of $\Pi$, therefore by
    induction hypothesis every reduct of $\Pi\theta$ is normalizable,
    and hence $\Pi\theta$ is normalizable.
 
  \item Suppose $\Pi$ ends with a rule other than multicut and has
    premise derivations $\{\Pi_i\}$.  By Definition~\ref{def:subst}
    each premise derivation in $\Pi\theta$ is either $\Pi_i$ or
    $\Pi_i\theta$.  Since $\Pi$ is normalizable, $\Pi_i$ is
    normalizable, and so by the induction hypothesis $\Pi_i\theta$ is
    also normalizable.  Thus $\Pi\theta$ is normalizable.  \qed
 
  \end{enumerate}
\end{proof}

The normalization degree is non-increasing under eigenvariable
substitution.

\begin{lemma}
  \label{lm:norm-degree}
  Let $\Pi$ be a normalizable derivation. Then $nd(\Pi) \geq
  nd(\Pi\theta)$ for every substitution $\theta$.
\end{lemma}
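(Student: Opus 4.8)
The plan is to prove the inequality by induction on the normalization degree $nd(\Pi)$ (equivalently, by induction on the well-founded normalization tree of $\Pi$). First I would invoke Lemma~\ref{lm:subst-norm} to guarantee that $\Pi\theta$ is itself normalizable, so that $nd(\Pi\theta)$ is well-defined. The argument then proceeds by comparing the predecessors of $\Pi\theta$ with those of $\Pi$, case-splitting on the last rule of $\Pi$, and in each case exploiting the fact that $nd$ is $1$ plus the least upper bound of the normalization degrees of the predecessors.

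If $\Pi$ ends with a rule other than $\mc$, $\eqL$, or $subst$, then by clause~3 of Definition~\ref{def:subst} the derivation $\Pi\theta$ ends with the same rule and its premise derivations are exactly $\Pi_1\theta,\dots,\Pi_n\theta$, where $\Pi_1,\dots,\Pi_n$ are the premises of $\Pi$. Each $\Pi_i$ is a predecessor of $\Pi$, so $nd(\Pi_i)$ is strictly below $nd(\Pi)$ and the induction hypothesis yields $nd(\Pi_i) \ge nd(\Pi_i\theta)$. Taking the least upper bound over $i$ and adding one gives $nd(\Pi\theta) \le nd(\Pi)$.

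The case where $\Pi$ ends with $\eqL$ (and, identically, with $subst$) is where the shape of the premises changes, and it is the step that needs care. By clause~1 of Definition~\ref{def:subst} the premises of $\Pi\theta$ are the derivations $\Pi^{\theta\circ\rho'}$, indexed by the unifiers $\rho'$ of $(s\theta,t\theta)$. The key observation is that each composite $\theta\circ\rho'$ is itself a unifier of $(s,t)$, so every premise $\Pi^{\theta\circ\rho'}$ of $\Pi\theta$ already occurs among the premises $\{\Pi^\rho\}_\rho$ of $\Pi$, i.e.\ is itself a predecessor of $\Pi$. Hence the set of normalization degrees of the predecessors of $\Pi\theta$ is a \emph{subset} of the corresponding set for $\Pi$, and the inequality on least upper bounds follows directly, without any appeal to the induction hypothesis.

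Finally, if $\Pi$ ends with $\mc$, then so does $\Pi\theta$, and in both cases the predecessors are the reducts. Here I would use Lemma~\ref{lm:reduct_subst}: for every reduct $\Xi$ of $\Pi\theta$ there is a reduct $\Pi'$ of $\Pi$ with $\Xi = \Pi'\theta$. Since $\Pi'$ is a predecessor of $\Pi$ we have $nd(\Pi') < nd(\Pi)$, so the induction hypothesis applied to $\Pi'$ gives $nd(\Pi') \ge nd(\Pi'\theta) = nd(\Xi)$. Taking the least upper bound over all reducts $\Xi$ of $\Pi\theta$ and adding one then bounds $nd(\Pi\theta)$ by $nd(\Pi)$. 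The main obstacle throughout is bookkeeping the (possibly transfinite) least-upper-bound computations over infinite predecessor sets; in particular, the crux is recognizing in the $\eqL$/$subst$ case that restricting the unifiers to the instances $\theta\circ\rho'$ can only shrink the predecessor set, never enlarge it, and in the $\mc$ case that Lemma~\ref{lm:reduct_subst} lets each reduct of $\Pi\theta$ be charged against a strictly smaller reduct of $\Pi$.
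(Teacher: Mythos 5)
Your proof is correct and takes essentially the same approach as the paper's, which likewise proceeds by induction on $nd(\Pi)$ using Definition~\ref{def:subst} to track how the premises transform and Lemma~\ref{lm:reduct_subst} to handle the multicut case. Your observation in the $\eqL$/$subst$ case that the premises of $\Pi\theta$ form a subset of those of $\Pi$ (since each composite $\theta\circ\rho'$ is a unifier of $(s,t)$) is exactly the phenomenon the paper singles out when remarking that $nd(\Pi\theta)$ can be strictly smaller than $nd(\Pi)$.
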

\begin{proof}
  By induction on $nd(\Pi)$ using Definition~\ref{def:subst} and
  Lemma~\ref{lm:reduct_subst}.  Note that $nd(\Pi\theta)$ can be
  smaller than $nd(\Pi)$ because substitution may reduces the number
  of premises in $\eqL$, \ie, if $\Pi$ ends with an $\eqL$ acting on,
  say $x = y$ (which are unifiable), and $\theta$ is a substitution
  that maps $x$ and $y$ to distinct constants then $\Pi\theta$ ends
  with $\eqL$ with empty premise.  \qed
\end{proof}

\subsection{Parametric reducibility}
\label{sec:red}

In the following, we shall use the term ``type'' in two different
settings: in categorizing terms and in categorizing derivations.  To
avoid confusion, we shall refer to the types of terms as
\emph{syntactic types}, and the term ``type'' is reserved for types of
derivations.

Our notion of a type of a set of derivations may abstract from
particular first-order terms in a formula. This is because our
definition of reducibility (candidates) will have to be closed under
eigenvariable substitutions, which is in turn imposed by the fact that
our proof rules allow instantiation of eigenvariables in the
derivations (i.e., the $\eqL$ and the $subst$ rules).

\begin{definition}[Types of derivations]
  \label{def:type-of-drv}
  We say that \emph{a derivation $\Pi$ has type $C$} if the end
  sequent of $\Pi$ is of the form $\Seq{\Gamma}{C}$ for some $\Gamma$.
  Let $F$ be a term with syntactic type $\alpha_1 \to \cdots \to
  \alpha_n \to o$, where each $\alpha_i$ is a syntactic
  efo-type. \footnote{From now on, we shall assume that the $\alpha_i$
  are always efo-types.}  A
  set of derivations $\Sscr$ is said to be \emph{of type $F$} if every
  derivation in $\Sscr$ has type $F\,u_1 \ldots u_n$ for some terms
  $u_1,\ldots,u_n$.  Given a list of terms $\vec u = u_1 : \alpha_1,
  \ldots, u_n : \alpha_n$ and a set of derivations $\Sscr$ of type $F
  : \alpha_1 \to \cdots \to \alpha_n \to o$, we denote with
  $\Sscr\,\vec u$ the set
$$
\Sscr\,\vec u = \{\Pi \in \Sscr \mid \hbox{$\Pi$ has type $F\,\vec u$
} \}
$$
\end{definition}

\vskip10pt

\begin{definition}[Reducibility candidate]
  \label{def:candidates}
  Let $F$ be a \emph{closed term} having the syntactic type $\alpha_1
  \to \cdots \to \alpha_n \to o$.  A set of derivations $\Rscr$ of
  type $F$ is said to be a \emph{reducibility candidate of type $F$}
  if the following hold:
  \begin{description}
  \item[CR0] If $\Pi \in \Rscr$ then $\Pi\theta \in \Rscr$, for every
    $\theta$.
  \item[CR1] If $\Pi \in \Rscr$ then $\Pi$ is normalizable.
  \item[CR2] If $\Pi \in \Rscr$ and $\Pi$ reduces to $\Pi'$ then $\Pi'
    \in \Rscr$.
  \item[CR3] If $\Pi$ ends with $mc$ and all its reducts are in
    $\Rscr$, then $\Pi \in \Rscr$.
  \item[CR4] If $\Pi$ ends with $init$, then $\Pi \in \Rscr$.
  \item[CR5] If $\Pi$ ends with a left-rule or $subst$, then all its minor premise
    derivations are normalizable, and all its major premise
    derivations are in $\Rscr$, then $\Pi \in \Rscr$.
  \end{description}
  We shall write $\Rscr : F$ to denote a reducibility candidate
  $\Rscr$ of type $F$.
\end{definition}


The conditions {\bf CR1} and {\bf CR2} are similar to the eponymous 
conditions in Girard's definition of reducibility candidates in his
strong normalisation proof for System F (see~\cite{girard89book},
Chapter 14).  Girard's {\bf CR3} is expanded in our definition to {\bf
  CR3, CR4} and {\bf CR5}.  These conditions deal with what Girard
refers to as ``neutral'' proof term (or, in our setting,
derivations). Neutrality corresponds to derivations ending in $mc$,
$init$, $subst$, or a left rule.

The condition {\bf CR0} is needed because our cut reduction rules involve
substitution of eigenvariables in some cases (i.e., those that involve
permutation of $\eqL$ and $subst$ in the left/right commutative cases), and consequently,
the notion of reducibility (candidate) needs to be preserved
under eigenvariable substitution. 

Let $\Sscr$ be a set of derivations of type $B$ and let $\Tscr$ be a
set of derivations of type $C$.  Then $\Sscr \Rightarrow \Tscr$
denotes the set of derivations such that $ \Pi \in \Sscr \Rightarrow
\Tscr $ if and only if $\Pi$ ends with a sequent $\Seq{\Gamma}{C}$
such that $B \in \Gamma$ and for every $\Xi \in \Sscr$, we have
$mc(\Xi, \Pi) \in \Tscr$.

Let $S$ be a closed term. Define $\NM_S$ to be the set
$$
\NM_S = \{\Pi \mid \Pi\in\NM\ \hbox{and is of type $S\,\vec u$ for
  some $\vec u$} \}.
$$
It can be shown that $\NM_S$ is a reducibility candidate of type $S$. 

\begin{lemma}
\label{lm:norm red}
Let $S$ be a term of syntactic type $\alpha_1\ra \cdots \ra \alpha_n
\ra o$. 
Then the set $\NM_S$ is a reducibility candidate of type $S$. 
\end{lemma}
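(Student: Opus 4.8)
The plan is to verify each of the six conditions \textbf{CR0}--\textbf{CR5} directly from the definition of $\NM_S$ and the results already established about normalizability. First I would record the only two facts about $\NM$ I need: it is closed under eigenvariable substitution (Lemma~\ref{lm:subst-norm}), and a derivation ending in a rule other than $mc$ is normalizable exactly when its premise derivations are (Definition~\ref{def:norm}). Throughout, the ``type'' bookkeeping is routine: since $S$ is of syntactic type $\alpha_1\ra\cdots\ra\alpha_n\ra o$, every $\Pi\in\NM_S$ has type $S\,\vec u$ for some $\vec u$, and I must check at each step that the derivation produced or referenced still has a type of this shape.

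I would then dispatch the conditions in turn. For \textbf{CR0}, if $\Pi\in\NM_S$ then $\Pi$ is normalizable and of type $S\,\vec u$; by Lemma~\ref{lm:subst-norm} $\Pi\theta$ is normalizable, and by Lemma~\ref{lm:subst} its end sequent is $\Seq{\Gamma\theta}{(S\,\vec u)\theta}=\Seq{\Gamma\theta}{S\,(\vec u\,\theta)}$ (using that $S$ is closed, so $\theta$ acts only on the arguments $\vec u$), hence $\Pi\theta\in\NM_S$. \textbf{CR1} is immediate, since membership in $\NM_S$ requires $\Pi\in\NM$. \textbf{CR2} follows because, by Definition~\ref{def:norm}~(1), any reduct of a normalizable derivation ending in $mc$ is itself normalizable, and the cut-reduction rules preserve the right-hand side formula of the end sequent, so the reduct still has type $S\,\vec u$ and lies in $\NM_S$. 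For \textbf{CR3}, if $\Pi$ ends with $mc$ and every reduct is in $\NM_S\subseteq\NM$, then by Definition~\ref{def:norm}~(1) $\Pi$ is normalizable; since $\Pi$ itself has type $S\,\vec u$ (its end sequent agrees with its reducts on the right-hand side), $\Pi\in\NM_S$.

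The last two conditions use clause~(2) of Definition~\ref{def:norm}. For \textbf{CR4}, a derivation ending with $init$ has no premises, so it is vacuously normalizable and hence in $\NM_S$ provided its type is of the right form, which holds by hypothesis. For \textbf{CR5}, suppose $\Pi$ ends with a left rule or $subst$, all its minor premises are normalizable, and all its major premises are in $\NM_S$; since the major premises lie in $\NM_S\subseteq\NM$ and the minor premises are in $\NM$ by assumption, \emph{all} premise derivations are normalizable, so by clause~(2) $\Pi\in\NM$, and as $\Pi$ has type $S\,\vec u$ we conclude $\Pi\in\NM_S$. I do not expect a genuine obstacle here: the only point requiring a little care is the interaction in \textbf{CR0} between eigenvariable substitution and the type index, which is handled cleanly by the fact that $S$ is a \emph{closed} term so that $(S\,\vec u)\theta = S\,(\vec u\,\theta)$ remains a term of the form $S\,\vec{u}'$; the rest is a direct transcription of Definition~\ref{def:norm} against the clauses of Definition~\ref{def:candidates}.
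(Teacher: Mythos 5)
Your proof is correct and takes essentially the same approach as the paper, whose proof is exactly this one-liner: \textbf{CR0} follows from Lemma~\ref{lm:subst-norm}, \textbf{CR1} from the definition of $\NM_S$, and \textbf{CR2}--\textbf{CR5} from Definition~\ref{def:norm}. Your expanded verification, including the observation that closedness of $S$ makes the type index behave well under eigenvariable substitution, merely spells out details the paper leaves implicit.
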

\begin{proof}
{\bf CR0} follows from Lemma~\ref{lm:subst-norm}, {\bf CR1} follows
from the definition of $\NM_S$, and the rest follow
from Definition~\ref{def:norm}. \qed
\end{proof}

\begin{definition}[Candidate substitution]
\label{def:candidate-subst}
A \emph{candidate substitution} $\Omega$ is a partial map from parameters to
triples of reducibility candidates, derivations and closed terms
such that whenever $\Omega(X^p) = (\Rscr, \Pi, S)$, we have
\begin{itemize}
\item $S$ has the same syntactic type as $p$, 
\item $\Rscr$ is a reducibility candidate of type $S$, and
\item either one of the following holds:
\begin{itemize}
\item $p\,\vec x \defmu B\,p\,\vec x$ and $\Pi$ is a normalizable
derivation of $\Seq {B\,S\,\vec y}{S\,\vec y}$, or
\item $p\,\vec x \defnu B\,p\,\vec x$ and $\Pi$ is a normalizable
derivation of $\Seq {S\,\vec y}{B\,S\,\vec y}$. 
\end{itemize}
\end{itemize}
\end{definition}
We denote with $supp(\Omega)$ the \emph{support} of $\Omega$,
i.e., the set of parameters on which $\Omega$ is defined.
Each candidate substitution $\Omega$ determines a unique
parameter substitution $\Theta$, given by:
$$
\Theta(X^p) = (\Pi, S) \ \hbox{iff } \Omega(X^p) = (\Rscr, \Pi, S) \
\hbox{for some $\Rscr$}.
$$
We denote with $Sub(\Omega)$ the parameter substitution $\Theta$
obtained this way.  We say that a parameter $X^p$ is \emph{fresh for
  $\Omega$}, written $X^p \# \Omega,$ if $X^p \# Sub(\Omega)$.

\paragraph{Notation}
Since every candidate substitution has a corresponding
parameter substitution, we shall often treat a candidate
substitution as a parameter substitution. In particular,
we shall write $C\Omega$ to denote $C(Sub(\Omega))$
and $\Pi\Omega$ to denote $\Pi(Sub(\Omega))$. 

We are now ready to define the notion of parametric reducibility.  We
follow a similar approach  for $\FOLDN$~\cite{mcdowell00tcs}, where
families of reducibility sets are defined by the \emph{level} of
derivations, \ie the size of the types of
derivations. In defining a family (or families) of sets of derivations
at level $k$, we assume that reducibility sets at level $j < k$ are
already defined. The main difference with the notion of reducibility
for $\FOLDN$, aside from the use of parameters in the clause for
(co)induction rules (which do not exist in $\FOLDN$), is in the
treatment of the induction rules.

\begin{definition}[Parametric reducibility]
\label{def:param red} 
Let $\Fscr_k$ be the set of all formula of size $k$, \ie $ \{F \mid
|F| = k \} $.  The family of \emph{parametric reducibility sets}
$\RED_C[\Omega],$ where $C$ is a formula and $\Omega$ is a candidate
substitution, is defined by induction on the size of $C$ as follows.
For each $k$, the family of \emph{parametric reducibility sets of
  level $k$}
$$
\{\RED_C[\Omega] \}_{C \in \Fscr_k}
$$
is the smallest family of sets satisfying, 
for each $C \in \Fscr_k$:
\begin{description}
\item[P1] Suppose $C = X^p\,\vec u$ for some $\vec u$ and some parameter $X^p$. 
If $X^p \in supp(\Omega)$ then 
$\RED_C[\Omega] = \Rscr~\vec u,$ where $\Omega(X^p) = (\Rscr, \Pi_S, S)$. 
Otherwise, $\RED_C[\Omega] = \NM_{X^p}\,\vec u$. 
\end{description}

Otherwise, $C \not = X^p\,\vec u$, for any $\vec u$ and  $X^p$. 
Then a derivation $\Pi$ of type $C\Omega$ is in
$\RED_C[\Omega]$ if it is normalizable and one of the following holds:
\begin{description}
\item[P2] $\Pi$ ends with $mc$, and all its reducts are in
      $\RED_C[\Omega]$.
\item[P3] $\Pi$ ends with $\oimpR$, i.e., $C = B \oimp D$ and $\Pi$ is
  of the form: 
    $$
    \infer[\oimpR] 
          {\Seq \Gamma {B\Omega \oimp D\Omega}} 
          {\deduce{\Seq{\Gamma, B\Omega}{D\Omega}}{\Pi'} }
    $$
    and for every substitution $\rho$, $\Pi'\rho \in
    (\RED_{B\rho}[\Omega] \Rightarrow \RED_{D\rho}[\Omega])$.
    
\item[P4] $\Pi$ ends with $\indR$, i.e., 
    $$
    \infer[\indR, \hbox{where $p\,\vec x \defmu B\,p\,\vec x$}]
    {\Seq \Gamma {p\,\vec t} }
    {\deduce{\Seq \Gamma {B\,X^p\,\vec t}}{\Pi'}}
    $$
    without loss of generality, assume that $X^p \# \Omega$: for
    every reducibility candidate $(\Sscr : I)$, where $I$ is a closed
    term of the same syntactic type as $p$, for every normalizable
    derivation $\Pi_I$ of $\Seq {B\,I\,\vec y} {I\,\vec y}$, if for
    every $\vec u$ the following holds:
    $$
    \Pi_I[\vec u/ \vec y] \in (\RED_{(B\,X^p\,\vec u)}[\Omega, (\Sscr,
    \Pi_I, I)/X^p] \Rightarrow \Sscr~\vec u) 
    $$
    then 
    $$mc(\Pi'[(\Pi_I,I)/X^p], \Pi_I[\vec t/\vec y])  \in \Sscr\,\vec t$$

\item[P5] $\Pi$ ends with $\coindR$, i.e.,
  $$
  \infer[\coindR, \hbox{where $p\,\vec x \defnu B\,p\,\vec x$ }]
  {\Seq \Gamma {p\,\vec t} }
  {
    \deduce{\Seq \Gamma {I\,\vec t}}{\Pi'}
    &
    \deduce{\Seq {I\,\vec y}{B\,I\,\vec y}}{\Pi_I}
  }
  $$
  and there exist a parameter $X^p$ such that $X^p \# \Omega$ and a
  reducibility candidate $(\Sscr : I)$ such that $\Pi' \in \Sscr$ and
  $$
  \Pi_I[\vec u/ \vec y] \in (\Sscr\,\vec u \Rightarrow
  \RED_{B\,X^p\,\vec u}[\Omega, (\Sscr, \Pi_I, I)/X^p]) \ \hbox{for every
    $\vec u$.  }
  $$

\item[P6] $\Pi$ ends with any other rule and its major premise derivations 
  are in the parametric reducibility sets of the appropriate types.
\end{description}
We shall write $\RED_C$, instead of $\RED_C[\Omega]$, when the
$supp(\Omega)$ of a candidate substitution is the empty set.  A
derivation $\Pi$ of type $C$ is \emph{reducible} if $\Pi \in \RED_C$.
\end{definition}

Some comments and comparison with Girard's definition of parametric
reducibility for System F~\cite{girard89book} are in order, although
our technical setting is somewhat different from that of Girard:
\begin{itemize}
\item Condition {\bf P3} quantifies over $\rho$. This is needed to
  show that reducibility is closed under substitution (see
  Lemma~\ref{lm:red-subst}).  A similar quantification is used in the
  definition of reducibility for $\FOLDN$~\cite{mcdowell00tcs} for the
  same purpose.  In the same clause, we also quantify over derivations
  in $\RED_{B\rho}[\Omega]$, but since $B\rho$ has smaller size than
  $B\oimp D$, this quantification is legitimate and the definition is
  well-founded.  Note also the similar quantification in {\bf P4} and
  {\bf P5}, where the parametric reducibility set $\RED_{p\,\vec t} \,
  [\Omega]$ is defined in terms of $\RED_{(B\,X^p\,\vec t)} [\Omega]$.
  By Lemma~\ref{lm:level}, $|p\,\vec t| > |B\,X^p\,\vec t|$ so in both
  cases the set $\RED_{(B\,X^p\,\vec t)} [\Omega]$ is already defined
  by induction. It is clear by inspection of the clauses that the
  definition of parametric reducibility is well-founded.

\item Clauses {\bf P2} and {\bf P6} are needed to show that the notion
  of parametric reducibility is closed under left-rules, $id$ and
  $mc$, i.e., condition {\bf CR3} -- {\bf CR5}. This is also a point
  where our definition of parametric reducibility diverges from a
  typical definition of reducibility in natural deduction
  (e.g.,~\cite{girard89book}), where closure under reduction for
  ``neutral''
  terms 
  is a derived
  property.

\item 
  {\bf P4} (and dually \textbf{P5}) can be intuitively explained in
  terms of the second-order encoding of inductive definitions.  To
  simplify presentation, we restrict to the propositional case, so,
  {\bf P4} can be simplified as follows:
\begin{quote}
Suppose $\Pi$ ends with $\indR$, i.e., 
    $$
    \infer[\indR, \hbox{where $p \defmu B\,p$}]
    {\Seq \Gamma {p} }
    {\deduce{\Seq \Gamma {B\,X^p}}{\Pi'}}
    $$
    without loss of generality, assume that $X^p \# \Omega$: for
    every reducibility candidate $(\Sscr : I)$, where $I$ is a closed
    term of the same syntactic type as $p$, for every normalizable
    derivation $\Pi_I$ of $\Seq {B\,I} {I}$, if
    $
    \Pi_I \in (\RED_{B\,X^p}[\Omega, (\Sscr, \Pi_I, I)/X^p] \Rightarrow \Sscr) 
    $,
    then 
    $mc(\Pi'[(\Pi_I,I)/X^p], \Pi_I)  \in \Sscr$.
\end{quote}
Note that in the propositional $\Linc$, the set 
$$\RED_{B\,X^p}[\Omega, (\Sscr, \Pi_I, I)/X^p] \Rightarrow \Sscr$$
is equivalent to $\RED_{B\,X^p \oimp X^p}[\Omega, (\Sscr, \Pi_I,
I)/X^p],$ i.e., a set of reducible derivations of type $B\,I \oimp I$.
So, intuitively, $\Pi'$ can be seen as a higher-order function that
takes any function of type $B\,I \oimp I$ (i.e., the derivation
$\Pi_I$), and turns it into a derivation of type $I$ (i.e., the
derivation $mc(\Pi'[(\Pi_I,I)/X^p], \Pi_I)$), for all candidate
$(\Sscr : I)$. This intuitive reading matches the second-order
interpretation of $p$, i.e., $\forall I. (B\,I \oimp I) \oimp I$,
where the universal quantification is interpreted as the universal
type constructor and $\oimp$ is interpreted as the function type
constructor in System F.
\end{itemize}

\medskip
We shall now establish a list of properties of the parametric reducibility
sets that will be used in the main cut elimination proof. 
The main property that we are after is one which shows that a certain
set of derivations formed using a family of parametric reducibility sets 
actually forms a reducibility candidate. This will be important later
in constructing a reducibility candidate which acts as a co-inductive
``witness'' in the main cut elimination proof. 
The proofs of the following lemmas are mostly routine and rather tedious;
so we omit them here, but they can be found in Appendix~\ref{app:red}.

\begin{lemma}
  \label{lm:red-norm}
If $\Pi \in \RED_C[\Omega]$ then $\Pi$ is normalizable.
\end{lemma}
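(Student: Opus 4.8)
The plan is to prove the statement by a direct case analysis on which clause of Definition~\ref{def:param red} witnesses the membership $\Pi \in \RED_C[\Omega]$, organised according to whether or not $C$ has the form $X^p\,\vec u$. No induction on the size of $C$ is needed, because in each clause normalizability is either built into the defining condition or is inherited from a previously established property.

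First I would dispatch clause \textbf{P1}, the only case requiring a genuine (if short) argument. Here $C = X^p\,\vec u$ and there are two sub-cases. If $X^p \in supp(\Omega)$, then by definition $\RED_C[\Omega] = \Rscr\,\vec u$, where $\Omega(X^p) = (\Rscr, \Pi_S, S)$. Since $\Omega$ is a candidate substitution, Definition~\ref{def:candidate-subst} guarantees that $\Rscr$ is a reducibility candidate of type $S$, so its property \textbf{CR1} (Definition~\ref{def:candidates}) tells us that every derivation in $\Rscr$ is normalizable; as $\Rscr\,\vec u \subseteq \Rscr$, in particular $\Pi$ is normalizable. If instead $X^p \notin supp(\Omega)$, then $\RED_C[\Omega] = \NM_{X^p}\,\vec u$, and every element of this set lies in $\NM$ by the very definition of $\NM_{X^p}$, hence is normalizable.

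For every other value of $C$, i.e.\ when $C \neq X^p\,\vec u$ so that one of clauses \textbf{P2}--\textbf{P6} applies, the membership condition stated in Definition~\ref{def:param red} explicitly requires $\Pi$ to be normalizable as a precondition before any of P2--P6 can hold. Consequently $\Pi \in \RED_C[\Omega]$ already entails that $\Pi$ is normalizable, and nothing further needs to be shown. I therefore do not anticipate any real obstacle: the content of the lemma is a bookkeeping consequence of how $\RED_C[\Omega]$ was set up, with the sole substantive step being the appeal to \textbf{CR1} in the P1 sub-case $X^p \in supp(\Omega)$. Its purpose is simply to record, for use in the main cut-elimination argument, that parametric reducibility is a strengthening of normalizability.
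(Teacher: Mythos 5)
Your proposal is correct and follows essentially the same route as the paper's own proof: a case analysis on $C$, using \textbf{CR1} (via Definition~\ref{def:candidate-subst}) when $C = X^p\,\vec u$ with $X^p \in supp(\Omega)$, the definition of $\NM_{X^p}$ when $X^p \notin supp(\Omega)$, and the fact that Definition~\ref{def:param red} makes normalizability an explicit precondition of membership in all the remaining clauses \textbf{P2}--\textbf{P6}. Your version is merely a more detailed spelling-out of the paper's terse argument, with no substantive difference.
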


Since every $\Pi \in \RED_C[\Omega]$ is normalizable,
$nd(\Pi)$ is defined. This fact will be used implicitly in subsequent
proofs, i.e., we shall do induction on
$nd(\Pi)$ to prove properties of $\RED_C[\Omega]$. 

\begin{lemma}
  \label{lm:red-subst}
  If $\Pi \in \RED_C[\Omega]$ then for every substitution $\rho$, 
  $\Pi\rho \in \RED_{C\rho}[\Omega]$. 
\end{lemma}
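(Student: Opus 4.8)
The plan is to argue by induction on the lexicographically ordered pair $(|C|, nd(\Pi))$, where $nd(\Pi)$ is available since $\Pi$ is normalizable by Lemma~\ref{lm:red-norm}. The observation that makes this induction coherent is that an eigenvariable substitution never changes the size of a formula: $\rho$ only rewrites efo-typed term arguments, which contain no logical connectives or predicate symbols, so $|C\rho| = |C|$. Hence $\RED_{C\rho}[\Omega]$ lives at the same level as $\RED_C[\Omega]$, and the cross-references to strictly smaller formulas in clauses \textbf{P3}--\textbf{P5} (legitimate by Lemma~\ref{lm:level}) survive the substitution. In every case normalizability of $\Pi\rho$ is immediate from Lemma~\ref{lm:subst-norm}, so it only remains to re-establish the defining clause for $\Pi\rho$ at $C\rho$. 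I would then split on which clause \textbf{P1}--\textbf{P6} witnesses $\Pi \in \RED_C[\Omega]$.

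Several cases are \emph{direct}, requiring no induction hypothesis, because the universal quantifications already built into the definition absorb $\rho$. For \textbf{P1} ($C = X^p\,\vec u$), the derivation $\Pi\rho$ lands in $\RED_{X^p\,(\vec u\rho)}[\Omega]$ by {\bf CR0} of $\Rscr$ when $X^p \in supp(\Omega)$, or by Lemma~\ref{lm:subst-norm} otherwise, using that $S$ is closed so the type becomes $S\,(\vec u\rho)$. For \textbf{P3} ($C = B \oimp D$), the required condition $(\Pi'\rho)\sigma \in (\RED_{(B\rho)\sigma}[\Omega] \Rightarrow \RED_{(D\rho)\sigma}[\Omega])$ for all $\sigma$ reduces, via Lemma~\ref{lm:subst-drv-comp}, to the instance of the hypothesis on $\Pi$ at the substitution $\rho \circ \sigma$. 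For \textbf{P4} and \textbf{P5} I would reuse the very same witness data ($X^p$, the candidate $\Sscr : I$, and the normalizable $\Pi_I$), since the ``if\ldots then'' conditions there are quantified over all $\vec u$ and all candidates and so mention neither $\vec t$ nor $\rho$. In \textbf{P4} one checks, using Lemmas~\ref{lm:subst-drv-comp} and~\ref{lm:param subst commutes} (and that $\Pi_I, I$ are unaffected by $\rho$ because $I$ is closed and $\vec y$ may be taken fresh), that $mc((\Pi'\rho)[(\Pi_I,I)/X^p], \Pi_I[\vec t\rho/\vec y])$ is exactly $(mc(\Pi'[(\Pi_I,I)/X^p], \Pi_I[\vec t/\vec y]))\rho$, which the hypothesis places in $\Sscr\,\vec t$, whence {\bf CR0} places it in $\Sscr\,(\vec t\rho)$; \textbf{P5} is analogous, with $\Pi'\rho \in \Sscr$ by {\bf CR0}. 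The \textbf{P6} subcases for $\eqL$ and $subst$ are also direct: by Definition~\ref{def:subst} the premises of $\Pi\rho$ are precisely the $\Pi^{\rho\circ\sigma'}$, and since $\rho\circ\sigma'$ is again a unifier (resp.\ a substitution) these are already among the major premises of $\Pi$, hence in $\RED_{C(\rho\circ\sigma')}[\Omega] = \RED_{(C\rho)\sigma'}[\Omega]$; the zero-premise subcases $\init$, $\eqR$, $\botL$ are trivial.

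The genuinely inductive work is confined to the remaining cases. For \textbf{P2} ($\Pi$ ends in $mc$), every reduct of $\Pi\rho$ equals $\Pi'\rho$ for some reduct $\Pi'$ of $\Pi$ by Lemma~\ref{lm:reduct_subst}; since each such $\Pi' \in \RED_C[\Omega]$ has $nd(\Pi') < nd(\Pi)$ at the same $|C|$, the induction hypothesis gives $\Pi'\rho \in \RED_{C\rho}[\Omega]$, so all reducts of $\Pi\rho$ are reducible and \textbf{P2} applies. For the \textbf{P6} left rules and $\coindLP$ that preserve the right-hand side (namely $\landL, \lorL, \oimpL, \forallL, \existsL, \indL, \coindL, \coindLP, \cL, \wL$), the major premises retain type $C$ but have strictly smaller $nd$, so the inner induction hypothesis applies, while the minor premises are handled by Lemma~\ref{lm:subst-norm}. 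For the size-reducing \textbf{P6} right rules $\landR, \lorR, \forallR, \existsR$, each major premise has a right-hand side of strictly smaller size, so the outer induction hypothesis on $|C|$ applies (taking the eigenvariable $y$ fresh for $\rho$ in the $\forallR$ case so that $(B\,y)\rho = (B\rho)\,y$).

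I expect the only real hazard to be the bookkeeping in \textbf{P4}/\textbf{P5}: correctly matching $mc((\Pi'\rho)[(\Pi_I,I)/X^p], \Pi_I[\vec t\rho/\vec y])$ with $(mc(\Pi'[(\Pi_I,I)/X^p], \Pi_I[\vec t/\vec y]))\rho$ by interleaving the parameter-substitution/eigenvariable-substitution commutation (Lemma~\ref{lm:param subst commutes}) with the composition law (Lemma~\ref{lm:subst-drv-comp}), while tracking that $\Pi_I$ and the closed invariant $I$ are inert under $\rho$. Everything else is routine once the invariance $|C\rho| = |C|$ secures that the level-indexed definition of $\RED$ is preserved and the induction is well-founded.
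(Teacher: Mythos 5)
Your proposal is correct and takes essentially the same route as the paper's proof: induction on $|C|$ with sub-induction on $nd(\Pi)$, handling \textbf{P1} via \textbf{CR0} (or $\NM$-closure under substitution), \textbf{P3} via the composition law of Lemma~\ref{lm:subst-drv-comp}, \textbf{P2} via Lemma~\ref{lm:reduct_subst}, and \textbf{P4}/\textbf{P5} by reusing the same candidate witnesses together with \textbf{CR0} and the commutation of parameter and eigenvariable substitution (Lemma~\ref{lm:param subst commutes}). Your explicit remarks --- that $|C\rho| = |C|$ keeps the level-indexed definition stable, and the detailed matching of $mc((\Pi'\rho)[(\Pi_I,I)/X^p], \Pi_I[\vec t\rho/\vec y])$ with $(mc(\Pi'[(\Pi_I,I)/X^p], \Pi_I[\vec t/\vec y]))\rho$ --- simply spell out bookkeeping the paper's appendix proof leaves implicit.
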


\begin{lemma}
\label{lm:red vacuous}
Let $\Omega = [\Omega', (\Rscr, \Pi_S, S)/ X^p]$.
Let $C$ be a formula such that $X^p \# C$. 
Then for every $\Pi$,  $\Pi \in \RED_C[\Omega]$ if and only if
$\Pi \in \RED_C[\Omega']$. 
\end{lemma}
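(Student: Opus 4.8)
The plan is to prove the biconditional by induction on the size $|C|$, following the clause structure of Definition~\ref{def:param red}. The one preliminary observation that makes everything go through is that, since $X^p \# C$, the parameter substitution acts simultaneously and $X^p$ does not occur in $C$, so $C\Omega = C\Omega'$; hence $\RED_C[\Omega]$ and $\RED_C[\Omega']$ range over derivations of exactly the same type, and it suffices to match their membership conditions. In fact I would prove the slightly more general statement in which the hypothesis on $\Omega'$ is only $X^p \notin supp(\Omega')$ (together with $X^p \# C$), rather than the full $X^p \# \Omega'$ presupposed by the bracket notation; as explained below, this weaker hypothesis is all the recursion actually needs, and the lemma as stated is the special case.

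For the base case $C = Y^q\,\vec u$ is parameter-headed and clause {\bf P1} applies. Since $X^p$ does not occur in $C$ we have $Y^q \ne X^p$, so $\Omega$ and $\Omega'$ agree on $Y^q$ (both map it to the same triple, or both leave it undefined, because $supp(\Omega) = supp(\Omega') \cup \{X^p\}$). Consequently $\RED_C[\Omega]$ and $\RED_C[\Omega']$ are either both $\Rscr'\,\vec u$ for the same candidate $\Rscr'$, or both $\NM_{Y^q}\,\vec u$.

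For the inductive step, $C$ is not parameter-headed and membership additionally requires normalizability, which is independent of the candidate substitution, together with exactly one of {\bf P2}--{\bf P6}, selected by the last rule of $\Pi$. Here I would run a secondary induction on the normalization degree $nd(\Pi)$, legitimate by Lemma~\ref{lm:red-norm} since every member of a $\RED$ set is normalizable, in order to handle the clauses that recurse at the same size. In {\bf P2} the reducts of $\Pi$ have the same type $C$ but strictly smaller $nd$, so the secondary hypothesis applies; likewise in {\bf P6}, the major premises of a left rule have type $C$ with smaller $nd$ (secondary hypothesis), while the major premises of a right rule have strictly smaller size (primary hypothesis). Clause {\bf P3} refers only to $\RED_{B\rho}[\Omega]$ and $\RED_{D\rho}[\Omega]$, both of strictly smaller size than $B \oimp D$; since an eigenvariable substitution $\rho$ introduces no parameters, $X^p$ stays fresh for $B\rho$ and $D\rho$, so the primary hypothesis applies to each.

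The main obstacle, and the only delicate cases, are {\bf P4} and {\bf P5}, where the recursive reference is to $\RED_{B\,Y^q\,\vec u}[\Omega, (\Sscr, \Pi_I, I)/Y^q]$ for the rule's parameter $Y^q$, which we may take distinct from $X^p$ and fresh; by Lemma~\ref{lm:level}, $|B\,Y^q\,\vec u| < |C|$. I would rewrite $[\Omega, (\Sscr, \Pi_I, I)/Y^q] = [\Omega', (\Rscr, \Pi_S, S)/X^p, (\Sscr, \Pi_I, I)/Y^q]$ as the same finite map $[\,\Omega'', (\Rscr, \Pi_S, S)/X^p\,]$, where $\Omega'' := [\Omega', (\Sscr, \Pi_I, I)/Y^q]$. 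Because $B$ contains no parameters and the $\vec u$ are parameter-free terms, $X^p \# B\,Y^q\,\vec u$, and since $Y^q \ne X^p$ and $X^p \notin supp(\Omega')$ we have $X^p \notin supp(\Omega'')$; the primary induction hypothesis then yields $\RED_{B\,Y^q\,\vec u}[\Omega, (\Sscr, \Pi_I, I)/Y^q] = \RED_{B\,Y^q\,\vec u}[\Omega'']$, which is precisely the $\Omega'$-instance of the clause. The conclusion of the {\bf P4}/{\bf P5} condition (that a certain $mc$-derivation lies in $\Sscr\,\vec t$, respectively that $\Pi' \in \Sscr$) does not mention the candidate substitution at all, so matching the hypotheses suffices for the biconditional. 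The single bookkeeping point that forces the generalization is that the quantified invariant $I$ may itself contain $X^p$, so $\Omega''$ need not satisfy the strong freshness $X^p \# \Omega''$; this is exactly why I would carry the induction with the hypothesis $X^p \notin supp(\Omega')$ instead, which remains stable under the reordering and is all that {\bf P4}/{\bf P5} require.
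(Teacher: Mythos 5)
Your proof is correct and follows essentially the same route as the paper's: an outer induction on $|C|$ with a sub-induction on $nd(\Pi)$, the base case \textbf{P1} settled by noting that $X^p \# C$ forces $Y^q \neq X^p$ so that $\Omega$ and $\Omega'$ agree on $Y^q$; \textbf{P3} handled by the outer hypothesis at the smaller formulas $B\rho$ and $D\rho$; and \textbf{P4}/\textbf{P5} handled by reordering the extended candidate substitution and invoking the outer hypothesis at $B\,Y^q\,\vec u$, whose size is smaller by Lemma~\ref{lm:level} and which is $X^p$-free because definition bodies contain no parameters and $Y^q$ is chosen fresh. Where you genuinely add something is the bookkeeping point you flag at the end. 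The paper's proof applies the outer induction hypothesis to conclude $\RED_{B\,Y^q\,\vec u}[\Omega',(\Sscr,\Pi_I,I)/Y^q] = \RED_{B\,Y^q\,\vec u}[\Omega,(\Sscr,\Pi_I,I)/Y^q]$ without comment; but the bracket notation, and hence the lemma's hypothesis, presupposes the strong freshness $X^p \# [\Omega',(\Sscr,\Pi_I,I)/Y^q]$, which includes $X^p$ not occurring in $I$ --- and since $I$ is \emph{universally quantified} over closed terms in \textbf{P4} (dually \textbf{P5}), and the paper's own freshness definition and Lemma~\ref{lm:red param subst} show that such terms may contain parameters, this cannot be ensured by any choice of names. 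Your reformulation with the weaker hypothesis $X^p \notin supp(\Omega')$ is exactly what the recursion needs, is stable under the reordering, and still guarantees $C\Omega = C\Omega'$ (parameter substitution being simultaneous), so the two sets consist of derivations of the same type. In short, your argument reproduces the paper's proof while repairing an implicit gap in its inductive invariant; the paper's version buys nothing over yours beyond not having to restate the lemma.
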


\begin{lemma}
\label{lm:red candidate}
Let $\Omega$ be a candidate substitution and $F$ be a closed term of
syntactic type $\alpha_1 \ra \cdots \ra \alpha_n \ra o$. 
Then the set
$$
\Rscr = \{\Pi \mid \Pi \in \RED_{F\,\vec u}[\Omega] \ \hbox{for some $\vec u$} \}
$$
is a reducibility candidate of type $F\Omega$.
\end{lemma}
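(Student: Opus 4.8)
The plan is to verify, one by one, the six conditions \textbf{CR0}--\textbf{CR5} of Definition~\ref{def:candidates} for $\Rscr$, exploiting the fact that each of them is already mirrored, clause by clause, in the definition of $\RED_C[\Omega]$. First I would record that $\Rscr$ genuinely is a set of derivations of type $F\Omega$: any $\Pi \in \Rscr$ lies in some $\RED_{F\,\vec u}[\Omega]$ and hence has type $(F\,\vec u)\Omega = F\Omega\,\vec u$, using that the arguments $\vec u$ are of efo-type and therefore contain no parameters, so $\Omega$ leaves them untouched. The argument then splits according to whether the $\beta$-normal form of $F$ is headed by a parameter. If $F$ is a parameter $X^p$, then $F\,\vec u = X^p\,\vec u$ is governed by clause \textbf{P1}: when $X^p \in supp(\Omega)$ with $\Omega(X^p) = (\Rscr', \Pi_S, S)$ we have $\RED_{X^p\,\vec u}[\Omega] = \Rscr'\,\vec u$, so $\Rscr = \bigcup_{\vec u}\Rscr'\,\vec u = \Rscr'$, which is a reducibility candidate by the hypothesis on $\Omega$; and when $X^p \not\in supp(\Omega)$ we get $\Rscr = \bigcup_{\vec u}\NM_{X^p}\,\vec u = \NM_{X^p}$, a reducibility candidate by Lemma~\ref{lm:norm red}. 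In either subcase there is nothing further to prove.

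The substantive case is when $F$ is not headed by a parameter, so that for every $\vec u$ the formula $F\,\vec u$ is not a parameter application and membership in $\RED_{F\,\vec u}[\Omega]$ is governed by normalizability together with one of \textbf{P2}--\textbf{P6}. Here \textbf{CR1} is immediate from Lemma~\ref{lm:red-norm}, and \textbf{CR0} from Lemma~\ref{lm:red-subst}, noting that $(F\,\vec u)\theta = F\,(\vec u\theta)$ so that $\Pi\theta$ lands in $\RED_{F\,(\vec u\theta)}[\Omega] \subseteq \Rscr$. For the remaining conditions I would match each \textbf{CR} clause to the corresponding \textbf{RED} clause. A derivation in $\Rscr$ can only be a redex if it ends with $mc$, in which case it satisfies \textbf{P2} and all its reducts already lie in the same set $\RED_{F\,\vec u}[\Omega]$; this yields both \textbf{CR2} and \textbf{CR3} at once (for \textbf{CR3} one also uses that the reducts, being normalizable, make $\Pi$ normalizable). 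Condition \textbf{CR4} holds because an $init$ derivation is trivially normalizable and falls under \textbf{P6} with no major premises, and \textbf{CR5} holds because every left rule and $subst$ likewise falls under \textbf{P6}, whose hypothesis (major premises in the reducibility set of the matching type, minor premises normalizable) is exactly the hypothesis of \textbf{CR5}.

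The one point that needs care --- and which I expect to be the main technical nuisance rather than a deep obstacle --- is the bookkeeping of indices: $\Rscr$ is a union over all argument lists $\vec u$, whereas \textbf{P2} and \textbf{P6} speak of the single set $\RED_{F\,\vec u}[\Omega]$. When $\Pi$ and its reducts, respectively its major premises, share a fixed end type $D$, I must check that all the sets $\RED_{F\,\vec w}[\Omega]$ with $(F\,\vec w)\Omega = D$ actually coincide. This is where the parameter-freeness of the arguments pays off again: since $\vec u$ and $\vec w$ occupy only first-order positions that $\Omega$ does not touch, the equation $(F\,\vec w)\Omega = (F\,\vec u)\Omega$ already forces $F\,\vec w$ and $F\,\vec u$ to be one and the same formula, so the two reducibility sets are literally indexed by the same formula and are equal. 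Once this alignment is discharged, the verification of \textbf{CR0}--\textbf{CR5} is routine and the lemma follows.
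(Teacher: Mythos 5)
Your overall route coincides with the paper's: split on whether $F$ is (up to $\beta\eta$) a parameter, settle the parameter case via clause \textbf{P1} (with $\Rscr$ collapsing to the stored candidate $\Rscr'$, or to $\NM_{X^p}$ by Lemma~\ref{lm:norm red}), and in the remaining case read off \textbf{CR0} from Lemma~\ref{lm:red-subst}, \textbf{CR1} from Lemma~\ref{lm:red-norm}, and \textbf{CR2}--\textbf{CR5} from the matching clauses \textbf{P2}/\textbf{P6} of Definition~\ref{def:param red}. That is exactly the paper's (very terse) proof, and to your credit you make explicit an index-alignment point that the paper leaves silent: \textbf{CR3} and \textbf{CR5} hand you derivations lying in $\RED_{F\,\vec u_j}[\Omega]$ for possibly \emph{different} $\vec u_j$, whereas \textbf{P2}/\textbf{P6} require them all to lie in one set $\RED_{F\,\vec u}[\Omega]$ for a single $\vec u$.

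However, your discharge of that point is incorrect. You claim that $(F\,\vec u)\Omega = (F\,\vec w)\Omega$ forces $F\,\vec u = F\,\vec w$ because ``$\Omega$ does not touch the first-order arguments.'' It is true that $\Omega$ leaves $\vec u, \vec w$ alone, but it does act on $F$: the lemma only requires $F$ to be \emph{closed}, and $F$ may contain parameters in $supp(\Omega)$ --- this is precisely why the candidate is typed at $F\Omega$ rather than $F$ (compare Lemma~\ref{lm:red param subst}, where the closed term $S$ is explicitly instantiated by $\Omega$). Concretely, take $F = \lambda x.\, q \land X^p\,x$ with $\Omega(X^p) = (\Rscr', \Pi_S, S)$ and $S = \lambda x.\top$: for $u \neq w$ the formulae $F\,u = q \land X^p\,u$ and $F\,w = q \land X^p\,w$ are distinct, yet $(F\,u)\Omega = q \land \top = (F\,w)\Omega$, because normalizing $F\Omega$ erases the argument occurrence. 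So syntactic injectivity fails. The fact you actually need --- that $\RED_{F\,\vec u}[\Omega] = \RED_{F\,\vec w}[\Omega]$ whenever $(F\,\vec u)\Omega = (F\,\vec w)\Omega$ --- does hold, but it needs its own induction on $|F\,\vec x|$ in the style of Lemmas~\ref{lm:red vacuous} and~\ref{lm:red param subst}: in the parameter case the arguments enter only through the restriction $\Rscr''\,\vec t$, which by Definition~\ref{def:type-of-drv} depends solely on the substituted type $S''\,\vec t$ (equal on both sides by hypothesis), and the compound cases propagate through \textbf{P2}--\textbf{P6} using that parameter substitution commutes with the logical constructors and with eigenvariable substitution (Lemma~\ref{lm:param subst commutes}). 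The paper itself glosses over this point (it is used implicitly, e.g., in the step ``$\RED_{S\,\vec u}\,[\Omega]$ is exactly $\Rscr\,\vec u$'' of case \textbf{I.2.c} of Lemma~\ref{lm:comp}), but as written, the one step you added beyond the paper's argument is the step that fails, so the gap must be repaired as above.
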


\begin{lemma}
\label{lm:red param subst}
Let $\Omega$ be a candidate substitution and let $X^p$
be a parameter such that $X^p \# \Omega$. 
Let $S$ be a closed term of the same type as $p$ and let
$$
\Rscr = \{\Pi \mid \Pi \in \RED_{S\,\vec u}[\Omega] \ \hbox{for some $\vec u$} \}.
$$
Suppose $[\Omega, (\Rscr, \Psi, S\Omega)/ X^p]$ is a candidate
substitution, for some $\Psi$. 
Then
$$
\RED_{C[S/X^p]}[\Omega] = \RED_C[\Omega, (\Rscr, \Psi, S\Omega)/X^p].
$$
\end{lemma}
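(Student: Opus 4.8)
The plan is to prove the identity by induction on the size $|C|$, stating the inductive hypothesis for \emph{all} $\Omega$, $S$, $\Psi$, $X^p$ so that it applies to every strictly smaller formula. Throughout write $\Omega'$ for the extended candidate substitution $[\Omega, (\Rscr, \Psi, S\Omega)/X^p]$. The first, ubiquitous, observation is the syntactic identity $(C[S/X^p])\Omega = C\Omega'$: since $X^p \# \Omega$, replacing $X^p$ by $S$ and then applying $\Omega$ sends each occurrence of $X^p$ to $S\Omega$ and every other parameter to its $\Omega$-image, which is exactly $Sub(\Omega')$. Hence both sides of the claimed equation consist of derivations of the \emph{same} type, and the equation is at least type-correct. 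An analogous commutation, using that $S$ is closed so that eigenvariable substitutions satisfy $(C[S/X^p])\rho = (C\rho)[S/X^p]$ and preserve size, will let the inductive hypothesis be applied to instances $C\rho$.

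Next I would split on whether $X^p$ occurs in $C$. If $X^p \# C$, then $C[S/X^p] = C$ and Lemma~\ref{lm:red vacuous} gives $\RED_C[\Omega'] = \RED_C[\Omega]$ outright; this disposes of $\bot$, $\top$, equalities $s = t$, defined atoms $q\,\vec t$, parameter atoms $Y^q\,\vec u$ with $Y^q \neq X^p$, and \emph{all} $\indR$/$\coindR$ cases (clauses \textbf{P4}, \textbf{P5}), since the arguments of any atom are parameter-free efo-terms and so cannot contain $X^p$. If $X^p$ does occur in $C$, then, because a parameter can appear only as the head of an atom, either $C = X^p\,\vec u$ or $C$ is a compound formula whose top connective is unchanged by $[S/X^p]$.

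In the compound case I would argue mutual inclusion. Since the top connective of $C[S/X^p]$ equals that of $C$, the same essential clause applies on both sides (\textbf{P3} for $\oimp$, the structural part of \textbf{P6} for $\land,\lor,\forall,\exists$), and its references to the strictly smaller immediate subformulas are identified by the outer hypothesis on $|C|$ (for \textbf{P3} one uses the commutation of $\rho$ with $[S/X^p]$ and $|A\rho| = |A| < |C|$). The only self-referential conditions are the neutral ones, \textbf{P2} (an $mc$, whose reducts keep the same type) and the left-rule part of \textbf{P6} (same right-hand side, major premises of the same type); these have identical shape on the two sides. As each set is the least set of derivations of the common type $C\Omega'$ closed under these conditions, they coincide; concretely one verifies $\Pi \in \RED_{C[S/X^p]}[\Omega] \Leftrightarrow \Pi \in \RED_C[\Omega']$ by a subsidiary induction on $nd(\Pi)$, legitimate because every reducible derivation is normalizable (Lemma~\ref{lm:red-norm}), the predecessors being of strictly smaller normalization degree or of strictly smaller type.

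The main obstacle is the atomic base case $C = X^p\,\vec u$. Here $C[S/X^p] = S\,\vec u$, so the left-hand side is $\RED_{S\,\vec u}[\Omega]$, while clause \textbf{P1} applied to $\Omega'$ (with $\Omega'(X^p) = (\Rscr, \Psi, S\Omega)$) makes the right-hand side $\Rscr\,\vec u$. Showing $\RED_{S\,\vec u}[\Omega] = \Rscr\,\vec u$ is genuinely \emph{outside} the size induction, since $|S\,\vec u|$ may vastly exceed $|C| = 1$. The inclusion $\RED_{S\,\vec u}[\Omega] \subseteq \Rscr\,\vec u$ is immediate from the definition $\Rscr = \{\Pi \mid \Pi \in \RED_{S\,\vec v}[\Omega]\ \hbox{for some }\vec v\}$ together with the fact that every member has type $(S\,\vec u)\Omega$. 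For the converse I would use that reducibility sets are indexed by the type of their derivations: if $\Pi \in \Rscr\,\vec u$ then $\Pi \in \RED_{S\,\vec v}[\Omega]$ for some $\vec v$ with $\Pi$ of type $(S\Omega)\,\vec u = (S\Omega)\,\vec v$, and since $\RED_{S\,\vec w}[\Omega]$ depends on $\vec w$ only through the type $(S\Omega)\,\vec w$ of its derivations — a fact underlying Lemma~\ref{lm:red candidate}, which already asserts $\Rscr$ is a reducibility candidate of type $S\Omega$ — we get $\RED_{S\,\vec v}[\Omega] = \RED_{S\,\vec u}[\Omega]$ and hence $\Pi \in \RED_{S\,\vec u}[\Omega]$. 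This type-indexing observation is the real crux, and I would isolate it (or extract it from the proof of Lemma~\ref{lm:red candidate}) before assembling the cases above.
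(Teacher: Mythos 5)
Your proposal is correct and takes essentially the same route as the paper's proof: induction on $|C|$, the cases where $X^p$ does not occur in $C$ dispatched by Lemma~\ref{lm:red vacuous}, and for compound $C$ a mutual inclusion verified by a subsidiary induction on $nd(\Pi)$, with $\oimpR$ (clause \textbf{P3}) as the only delicate connective case, handled exactly as in the paper via the commutation of eigenvariable substitution with $[S/X^p]$ and size-preservation $|A\rho|=|A|$. The one divergence is the base case $C = X^p\,\vec u$, where the paper simply asserts $\Rscr\,\vec u = \RED_{S\,\vec u}[\Omega]$ ``by assumption'' while you supply the missing half of the inclusion via the observation that $\RED_{S\,\vec w}[\Omega]$ depends on $\vec w$ only through the formula $(S\Omega)\,\vec w$ --- a legitimate and arguably more careful refinement, with the caveat that this fact is not actually established in (the proof of) Lemma~\ref{lm:red candidate}, so it would indeed have to be isolated and proved by its own induction on formula size, as you propose.
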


\subsection{Cut elimination}
\label{sec:ceproof}

We shall now show that every derivation is reducible, hence every
derivation can be normalized to a cut-free derivation.  But in order
to prove this, we need a slightly more general lemma, which states
that every derivation is in $\RED_C[\Omega]$ for a certain kind of
candidate substitution $\Omega$.  The precise definition is given
below.

\begin{definition}[Definitional closure]
  A candidate substitution $\Omega$ is \emph{definitionally closed} if
  for every $X^p \in supp(\Omega)$, if $\Omega(X^p) = (\Rscr, \Pi_S,
  S)$ then either one of the following holds:
  \begin{itemize}
  \item $p\,\vec x \defmu B\,p\,\vec x$, for some $B$ and for every
    $\vec u$ of the appropriate syntactic types:
$$
\Pi_S[\vec u/\vec x] \in \RED_{B\,X^p\,\vec u}\, [\Omega] \Rightarrow
\Rscr\,\vec u.
$$

\item $p\,\vec x \defnu B\,p\,\vec x$, for some $B$ and for every
  $\vec u$ of the appropriate syntactic types:
$$
\Pi_S[\vec u/\vec x] \in \Rscr\,\vec u \Rightarrow \RED_{B\,X^p\,\vec
  u}\, [\Omega].
$$
\end{itemize}
\end{definition}

The next two lemmas show that definitionally closed substitutions can
be extended in a way that preserves  definitional closure.

\begin{lemma}
  \label{lm:clo-ext-ind}
  Let $\Omega = [\Omega', (\Rscr,\Pi_S,S)/X^p]$ be a candidate
  substitution such that $p\,\vec x \defmu B\,p\,\vec x$, $\Omega'$ is
  definitionally closed, and for every $\vec u$ of the same types as
  $\vec x$,
$$\Pi_S[\vec u/\vec x] \in \RED_{B\,X^p\,\vec u}\, [\Omega] \Rightarrow
\Rscr\,\vec u$$  Then $\Omega$ is definitionally closed.
\end{lemma}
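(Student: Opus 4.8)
The plan is to verify the defining condition of definitional closure directly for each parameter in $supp(\Omega) = supp(\Omega') \cup \{X^p\}$, treating the freshly added parameter $X^p$ separately from the parameters already present in $supp(\Omega')$. Note first that the extension notation $[\Omega', (\Rscr,\Pi_S,S)/X^p]$ presupposes $X^p \# \Omega'$, so $X^p$ differs from every $Y^q \in supp(\Omega')$ and does not occur in any closed term stored by $\Omega'$.

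The case of $X^p$ itself is immediate. Since $p\,\vec x \defmu B\,p\,\vec x$, the condition to be checked is exactly that $\Pi_S[\vec u/\vec x] \in \RED_{B\,X^p\,\vec u}[\Omega] \Rightarrow \Rscr\,\vec u$ for every $\vec u$ of the appropriate types, which is precisely the hypothesis of the lemma. So nothing further is needed here.

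The real work is the case of a parameter $Y^q \in supp(\Omega')$, where $\Omega(Y^q) = \Omega'(Y^q) = (\Rscr_q, \Pi_q, S_q)$. Since $\Omega'$ is definitionally closed, the appropriate condition holds for $Y^q$ relative to $\Omega'$; in the inductive subcase $q\,\vec x \defmu B_q\,q\,\vec x$ this reads $\Pi_q[\vec u/\vec x] \in \RED_{B_q\,Y^q\,\vec u}[\Omega'] \Rightarrow \Rscr_q\,\vec u$ for every $\vec u$, and I must upgrade $\Omega'$ to $\Omega$ in this statement. The key observation is that the body $B_q$ is parameter-free (Definition~\ref{def:def-clause}) and the arguments $\vec u$, being of efo-type, contain no parameters (parameters carry predicate types, which mention $o$), so the only parameter occurring in $B_q\,Y^q\,\vec u$ is $Y^q$ itself; as $X^p \neq Y^q$, we obtain $X^p \# (B_q\,Y^q\,\vec u)$. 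Lemma~\ref{lm:red vacuous} then yields $\RED_{B_q\,Y^q\,\vec u}[\Omega] = \RED_{B_q\,Y^q\,\vec u}[\Omega']$. Since the candidate $\Rscr_q\,\vec u$ is inherited unchanged from $\Omega'$, the two arrow sets $\RED_{B_q\,Y^q\,\vec u}[\Omega'] \Rightarrow \Rscr_q\,\vec u$ and $\RED_{B_q\,Y^q\,\vec u}[\Omega] \Rightarrow \Rscr_q\,\vec u$ coincide, so membership of $\Pi_q[\vec u/\vec x]$ transfers verbatim. The co-inductive subcase $q\,\vec x \defnu B_q\,q\,\vec x$ is symmetric: the required $\Pi_q[\vec u/\vec x] \in \Rscr_q\,\vec u \Rightarrow \RED_{B_q\,Y^q\,\vec u}[\Omega]$ follows from the same equality of reducibility sets.

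I expect the only delicate point to be this transfer step, which rests entirely on the syntactic invariant that definition bodies contain no parameters: this is what guarantees $X^p$ is absent from $B_q\,Y^q\,\vec u$ and hence licenses the appeal to Lemma~\ref{lm:red vacuous}. Once both cases are discharged, every parameter in $supp(\Omega)$ satisfies its closure condition, and therefore $\Omega$ is definitionally closed.
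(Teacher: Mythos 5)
Your proof is correct and follows essentially the same route as the paper's: the condition for the new parameter $X^p$ is exactly the lemma's hypothesis, and for each $Y^q \in supp(\Omega')$ the closure condition transfers from $\Omega'$ to $\Omega$ via Lemma~\ref{lm:red vacuous}, justified by $X^p \# (B_q\,Y^q\,\vec u)$ since definition bodies contain no parameters. You are in fact slightly more explicit than the paper, which leaves the co-inductive subcase for old parameters (and the parameter-freeness of the arguments $\vec u$) implicit.
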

\begin{proof}
  Let $Y^q \in supp(\Omega)$. Suppose $\Omega(Y^q) = (\Sscr, \Pi_I,
  I)$.  
  We need to show that
$$
\Pi_I[\vec t/\vec x] \in \RED_{B\,Y^q\,\vec t}\, [\Omega] \Rightarrow
\Sscr\,\vec t
$$
for every $\vec t$ of the same types as $\vec x$.  If $Y^q = X^p$ then
this follows from the assumption of the lemma.  Otherwise, $Y^q \in
supp(\Omega')$, and by the definitional closure assumption on
$\Omega'$, we have
$$
\Pi_I[\vec t/\vec x] \in \RED_{B\,Y^q\,\vec t}\, [\Omega'] \Rightarrow
\Sscr\,\vec t
$$
for every $\vec t$.  Since  $X^p \#
(B\,Y^q\,\vec t)$ (recall that definition clauses cannot contain
occurrences of parameters), by Lemma~\ref{lm:red vacuous} we have
$\RED_{B\,Y^q\,\vec t}\, [\Omega'] = \RED_{B\,Y^q\,\vec t}\, [\Omega]$, and
therefore the result. \qed
\end{proof}

\begin{lemma}
  \label{lm:clo-ext-coind}
  Let $\Omega = [\Omega', (\Rscr,\Pi_S,S)/X^p]$ be a candidate
  substitution such that $p\,\vec x \defnu B\,p\,\vec x$, $\Omega'$ is
  definitionally closed, and for every $\vec u$ of the same types as
  $\vec x$,
$$\Pi_S[\vec u/\vec x] \in \Rscr \,\vec u \Rightarrow
\RED_{B\,X^p\,\vec u}\, [\Omega]$$  Then $\Omega$ is definitionally
closed.
\end{lemma}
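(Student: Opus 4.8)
The plan is to dualize the proof of Lemma~\ref{lm:clo-ext-ind}, since the present statement is its mirror image: the roles of the two clauses in the definition of definitional closure are exchanged, and the set-theoretic implication $\Rightarrow$ is used in the opposite direction. The overall strategy is therefore to unfold the definition of definitional closure, check the required condition for every parameter in the support of $\Omega$, and transport the condition from $\Omega'$ to $\Omega$ using the fact that definition bodies contain no parameters.

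First I would fix an arbitrary $Y^q \in supp(\Omega)$ and write $\Omega(Y^q) = (\Sscr, \Pi_I, I)$. The goal is to establish, for every $\vec t$ of the same types as $\vec x$, the closure clause appropriate to $q$: if $q$ is inductive ($q\,\vec x \defmu B\,q\,\vec x$) then $\Pi_I[\vec t/\vec x] \in \RED_{B\,Y^q\,\vec t}[\Omega] \Rightarrow \Sscr\,\vec t$, and if $q$ is co-inductive ($q\,\vec x \defnu B\,q\,\vec x$) then $\Pi_I[\vec t/\vec x] \in \Sscr\,\vec t \Rightarrow \RED_{B\,Y^q\,\vec t}[\Omega]$. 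I would then split on whether $Y^q = X^p$.

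In the case $Y^q = X^p$, the clause for $p$ is co-inductive by hypothesis, so the condition to verify is exactly $\Pi_S[\vec u/\vec x] \in \Rscr\,\vec u \Rightarrow \RED_{B\,X^p\,\vec u}[\Omega]$ for every $\vec u$, which is precisely the assumption of the lemma; nothing further is needed. In the case $Y^q \neq X^p$, we have $Y^q \in supp(\Omega')$, and since $\Omega'$ is definitionally closed the corresponding condition already holds with $\Omega'$ in place of $\Omega$, in whichever arrow-direction matches the inductive or co-inductive status of $q$. It then remains only to replace $\Omega'$ by $\Omega$ inside the reducibility set $\RED_{B\,Y^q\,\vec t}[-]$. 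This is justified by Lemma~\ref{lm:red vacuous}: because definition bodies cannot mention parameters, $X^p$ does not occur in $B\,Y^q\,\vec t$, i.e.\ $X^p \# (B\,Y^q\,\vec t)$, whence $\RED_{B\,Y^q\,\vec t}[\Omega'] = \RED_{B\,Y^q\,\vec t}[\Omega]$, and the condition transfers to $\Omega$.

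I do not expect a genuine obstacle, as the argument is routine once the symmetry with Lemma~\ref{lm:clo-ext-ind} is recognized. The one point I would watch most carefully is that, in the case $Y^q \neq X^p$, the parameter $Y^q$ may itself be inductive or co-inductive, so the closure clause for $q$ comes in one of two arrow-directions; since Lemma~\ref{lm:red vacuous} is an equality of reducibility sets and is insensitive to how those sets sit inside an $\Rightarrow$, both sub-cases are dispatched uniformly. I would also take care that the freshness side-condition $X^p \# (B\,Y^q\,\vec t)$ is invoked with respect to the \emph{body} of the clause for $q$, which is exactly where the ``definition clauses contain no parameters'' restriction from Definition~\ref{def:def-clause} is used.
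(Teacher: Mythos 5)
Your proposal is correct and is essentially the paper's own proof: the paper disposes of Lemma~\ref{lm:clo-ext-coind} with the single remark ``analogous to the proof of Lemma~\ref{lm:clo-ext-ind}'', and that proof is exactly your argument --- case split on $Y^q = X^p$ (where the lemma's hypothesis applies directly) versus $Y^q \in supp(\Omega')$ (where definitional closure of $\Omega'$ is transported to $\Omega$ via Lemma~\ref{lm:red vacuous}, justified by the fact that definition bodies contain no parameters, so $X^p \# (B\,Y^q\,\vec t)$). If anything, you are slightly more explicit than the paper in noting that $Y^q$ may itself be inductive or co-inductive and that both arrow-directions transfer uniformly, a point the paper leaves implicit.
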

\begin{proof}
  Analogous to the proof of Lemma~\ref{lm:clo-ext-ind}. \qed
\end{proof}

We are now ready to state the main lemma for cut elimination.

\begin{lemma}
  \label{lm:comp}
  Let $\Omega$ be a definitionally closed candidate substitution.  Let
  $\Pi$ be a derivation of $\Seq{B_1,\ldots,B_n,\Gamma}{C}$, and let
$$
\deduce{\Seq{\Delta_1}{B_1\Omega}}{\Pi_1}, \quad \ldots
\quad,~\deduce{\Seq{\Delta_n}{B_n\Omega}}{\Pi_n} 
$$
where $n \geq 0$, be derivations in, respectively,
$\RED_{B_1}\, [\Omega], \ldots, \RED_{B_n}\, [\Omega]$.  Then the derivation
$\Xi$
\begin{displaymath}
  \infer[\mc]{\Seq{\Delta_1,\ldots,\Delta_n,\Gamma\Omega}{C\Omega}}
  {\deduce{\Seq{\Delta_1}{B_1\Omega}}
    {\Pi_1}
    & \cdots
    & \deduce{\Seq{\Delta_n}{B_n\Omega}}
    {\Pi_n}
    & \deduce{\Seq{B_1\Omega,\ldots,B_n\Omega,\Gamma\Omega}{C\Omega}}
    {\Pi\Omega}}
\end{displaymath}
is in $\RED_{C}[\Omega]$.
\end{lemma}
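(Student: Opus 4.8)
The plan is to reduce the whole statement to the single assertion that \emph{every reduct of $\Xi$ is reducible}, and then to verify this by a case analysis on the cut-reduction rules. Since $\Xi$ ends with $mc$, it belongs to $\RED_C[\Omega]$ as soon as it is normalizable and all of its reducts lie in $\RED_C[\Omega]$: when $C$ is not of the form $X^p\,\vec u$ this is clause \textbf{P2}, and when $C=X^p\,\vec u$ it is condition \textbf{CR3} for the candidate determined by \textbf{P1} (either $\Rscr\,\vec u$ coming from $\Omega(X^p)$, or $\NM_{X^p}\,\vec u$, which is a candidate by Lemma~\ref{lm:norm red}). Because reducibility implies normalizability (Lemma~\ref{lm:red-norm}), normalizability of $\Xi$ follows automatically once every reduct is shown reducible. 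Thus everything rests on dispatching each reduct of $\Xi$ into $\RED_C[\Omega]$, where the premises $\Pi_1,\ldots,\Pi_n$ supply their reducibility clauses as \emph{properties} and $\Pi$ is analysed by its last rule (the degenerate $n=0$ case, where $\Xi$ reduces to $\Pi\Omega$, is subsumed).

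I would run this case analysis by induction on the lexicographic measure whose principal component is the height $\measure{\Pi}$ of the \emph{abstract} right premise $\Pi$ (taken before applying $\Omega$), and whose secondary component is a multiset measure on the heights of the left premises $\Pi_1,\ldots,\Pi_n$. The single design decision that makes a height measure viable is that I never feed the induction hypothesis a parameter-substituted derivation (whose height may grow); in every recursive call the new right premise is a genuine subderivation of $\Pi$, and any parameter instantiation is absorbed by \emph{extending} the candidate substitution $\Omega$ rather than substituting into $\Pi$. The commutative, structural, axiom and multicut cases are then routine, exactly as in the cut-elimination proof for $\FOLDN$~\cite{mcdowell00tcs}: pushing the cut into a subderivation of $\Pi$ strictly lowers $\measure{\Pi}$, while pushing it into a subderivation of some $\Pi_i$ keeps $\measure{\Pi}$ fixed and lowers the secondary measure, and the result is reassembled as reducible via \textbf{P6} (major premise reducible, minor premises normalizable). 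The essential logical cases such as $\oimpR/\oimpL$ are likewise classical: the subderivations of $\Pi$ have smaller height, so the induction hypothesis produces reducible cut-partners on the strictly smaller subformulas, which clause \textbf{P3} of $\Pi_1$ splices together.

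The heart of the proof, and the step I expect to be the main obstacle, is the fixed-point rules, where the invariant can be an arbitrarily large formula and a naive induction on cut-formula size fails. For the essential case $\indR/\indL$ I would invoke clause \textbf{P4} of $\Pi_1\in\RED_{p\,\vec t}[\Omega]$, instantiating its reducibility candidate by $\Sscr=\{\Pi\mid \Pi\in\RED_{S\,\vec u}[\Omega]\ \text{for some}\ \vec u\}$, which is a candidate of type $S\Omega$ by Lemma~\ref{lm:red candidate}. After rewriting $\RED_{B\,X^p\,\vec u}[\Omega,(\Sscr,\Pi_S\Omega,S\Omega)/X^p]$ to $\RED_{B\,S\,\vec u}[\Omega]$ by Lemma~\ref{lm:red param subst}, the hypothesis of \textbf{P4} asks precisely that $\Pi_S\Omega[\vec u/\vec y]\in(\RED_{B\,S\,\vec u}[\Omega]\Rightarrow\RED_{S\,\vec u}[\Omega])$, and this is the conclusion of the present lemma applied to the subderivation $\Pi_S[\vec u/\vec y]$ (of height below $\measure{\Pi}$), so the induction hypothesis delivers it. \textbf{P4} then places the inner cut in $\RED_{S\,\vec t}[\Omega]$, and one further induction-hypothesis call on $\Pi'$ (again of smaller height), now cutting on $S\,\vec t$, yields the reduct in $\RED_C[\Omega]$.

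The case $\coindR/\coindL$ is dual but proceeds by \emph{extending} $\Omega$: using the witness candidate $\Sscr$ and invariant $I$ provided by \textbf{P5} of $\Pi_1$, I form $\Omega'=[\Omega,(\Sscr,\Pi_S,I)/X^p]$ for the fresh $X^p$ introduced by $\coindL$, observe that $\Omega'$ remains definitionally closed by Lemma~\ref{lm:clo-ext-coind} (its closure condition is exactly the quantified membership guaranteed by \textbf{P5}), and apply the induction hypothesis to the subderivation $\Pi'$ under $\Omega'$; Lemma~\ref{lm:red vacuous} ensures the remaining premises and the conclusion, which do not mention $X^p$, are unaffected, and \textbf{P5} itself places the inner cut into $\RED_{B\,X^p\,\vec t}[\Omega']$. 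Finally, the parameter cases, where $\Pi$ ends in $\indRP$ or $\coindLP$ on some $X^p\in supp(\Omega)$ so that by Definition~\ref{def:param subst} $\Pi\Omega$ has already unfolded into an $mc$ against $\Pi_S$, are closed by appealing directly to the \emph{definitional closure} of $\Omega$ at $X^p$, after an induction-hypothesis call on the subderivation feeding the unfolding. The delicate points throughout are the commutation of eigenvariable and parameter substitutions (Lemmas~\ref{lm:param subst commutes} and~\ref{lm:red-subst}) and checking that each recursive call strictly decreases the lexicographic measure; the genuinely new content is concentrated entirely in the $\indR/\indL$ and $\coindR/\coindL$ essential cases.
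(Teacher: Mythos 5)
Your overall architecture coincides with the paper's proof of Lemma~\ref{lm:comp}: the same dispatch of every reduct of $\Xi$ into $\RED_C[\Omega]$ (with normalizability of $\Xi$ recovered via Lemma~\ref{lm:red-norm}), the same handling of $\indR/\indL$ by instantiating \textbf{P4} of $\Pi_1$ with the candidate $\Sscr=\{\Psi\mid\Psi\in\RED_{S\,\vec u}[\Omega]\ \hbox{for some }\vec u\}$ from Lemma~\ref{lm:red candidate} and rewriting via Lemma~\ref{lm:red param subst}, the same handling of $\coindR/\coindL$ by extending $\Omega$ with the \textbf{P5} witness and invoking Lemma~\ref{lm:clo-ext-coind} together with Lemma~\ref{lm:red vacuous}, and the same closing of the $\indRP$/$\coindLP$ cases by definitional closure. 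The design principle you isolate --- never feeding a parameter-substituted derivation to the induction hypothesis, but instead extending the candidate substitution --- is exactly the paper's (it is how cases I.2.b, II.1 and II.7 are made to respect a height-based first component).

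However, your termination measure has a genuine gap. For the left premises you take the multiset of \emph{heights} of $\Pi_1,\ldots,\Pi_n$, and you justify the multicut cases by saying the cut is pushed ``into a subderivation of some $\Pi_i$''. That is false for the reduction $\mc/\circL$ (the paper's case II.3): when $\Pi$ ends with a left rule on $B_1$ and $\Pi_1$ ends with $\mc$, the reduct is $mc(\Pi_1',\Pi_2,\ldots,\Pi_n,\Pi)$ where $\Pi_1'$ is a \emph{reduct} of $\Pi_1$, not a subderivation, and cut reduction can strictly increase height (the $\oimpR/\oimpL$ reduction stacks multicuts and duplicates contexts; $\indR/\indL$ splices in a parameter-substituted derivation whose unfoldings under Definition~\ref{def:param subst} can be arbitrarily larger). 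So at exactly this case your secondary measure need not decrease and the induction hypothesis is inapplicable --- the same trap as the known gap in~\cite{mcdowell00tcs}. The paper avoids it by assigning an ordinal, the normalization degree $nd(\cdot)$ of Definition~\ref{def:deg-norm}, to every normalizable derivation, and using the multiset $\{nd(\Pi_1),\ldots,nd(\Pi_n)\}$ as the final component of $\Mscr(\Xi)$ (together with $\sum_i|B_i|$ as a middle component, used in the $\eqR/\eqL$ case where the right premise is replaced by a $subst$-derivation of \emph{equal} height; there your multiset would incidentally also decrease by dropping $\Pi_1$, so the decisive failure is heights versus degrees). Since each $\Pi_i$ is reducible and hence normalizable, a reduct of $\Pi_1$ is a predecessor in its normalization tree, giving $nd(\Pi_1')<nd(\Pi_1)$, while $nd$ is non-increasing under eigenvariable substitution (Lemma~\ref{lm:norm-degree}), which is what the $\eqL$- and $subst$-commutation cases need. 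Replacing your heights with normalization degrees repairs the proposal; the rest goes through essentially as you describe.
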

\begin{proof}
  The proof is by induction on
$$
\Mscr(\Xi) = \langle \measure{\Pi}, \sum_{i=1}^n |B_i|,
ND(\Xi) \rangle
$$
where $ND(\Xi)$ is the multiset $\{nd(\Pi_1),\ldots,nd(\Pi_n) \}$
of normalization degrees of $\Pi_1$ to $\Pi_n$.
Note that the measure $\Mscr$ can be well-ordered using the
lexicographical ordering. We shall refer to this ordering as simply
$<$.  Note also that $\Mscr$ is insensitive to the order in which
$\Pi_i$ is given, thus when we need to distinguish one of the $\Pi_i$,
we shall refer to it as $\Pi_1$ without loss of generality.  The
derivation $\Xi$ is in $\RED_C[\Omega]$ if all its reducts are in
$\RED_C[\Omega]$.
 
\paragraph{\bf CASE I: n = 0} 
In this case, $\Xi$ reduces to $\Pi\Omega$, thus it is enough to show
that that $\Pi\Omega \in \RED_C[\Omega]$.  This is proved by case
analysis on $C$ and on the last rule of $\Pi$.

\paragraph{\bf I.1} Suppose $C = X^p \,\vec t$ for some parameter
$X^p$ and terms $\vec t$.

If $X^p \not \in supp(\Omega)$ then we need only to show that
$\Pi\Omega$ is normalizable. This follows mostly straightforwardly
from the induction hypothesis and Lemma~\ref{lm:red-norm}. The only
interesting case is when $\Pi$ ends with $\coindLP$ on some $Y^q\,\vec
u$ such that $Y^q \in supp(\Omega)$, i.e., $\Pi$ takes the form
$$
\infer[\coindLP.]  {\Seq {Y^q\,\vec u, \Gamma}{C}} { \deduce{\Seq
    {D\,Y^q\,\vec u, \Gamma}{C}}{\Pi'} }
$$
Suppose $\Omega(Y^q) = (\Rscr, \Pi_S,S)$. 
Then $\Pi\Omega = mc(mc(\idrv_{S\,\vec u}, \Pi_S[\vec u/\vec x]), \Pi'\Omega)$. 
By {\bf CR4} we have that $\idrv_{S\,\vec u} \in \Rscr$, so by the
definitional closure of $\Omega$ and \textbf{CR3}, we have $mc(\idrv_{S\,\vec u},
\Pi_S[\vec u/\vec x]) \in \RED_{D\,S\,\vec u}\, [\Omega]$.  Since
$\measure{\Pi'} < \measure{\Pi}$, and since $\Pi\Omega =
mc(mc(\idrv_{S\,\vec u}, \Pi_S[\vec u/\vec x]),\Pi'\Omega),$ by the
induction hypothesis, we have $\Pi\Omega \in \RED_C[\Omega]$, and
therefore, by Lemma~\ref{lm:red-norm}, $\Pi\Omega$ is normalizable.
Note that this case is actually independent of the form of $C$.

Otherwise, suppose $X^p \in supp(\Omega)$, and $\Omega(X^p) =
(\Rscr,\Pi_S,S)$.  Then there are several cases to consider, based on
the last rule of $\Pi$.  In all cases, we need to show that $\Pi\Omega
\in \Rscr\,\vec t$.  Note that since $\Pi\Omega$ is of type $S\,\vec
t$, $\Pi\Omega \in \Rscr$ implies that $\Pi\Omega \in \Rscr\,\vec t$.
So in the following in some cases we need to show only that
$\Pi\Omega\in\Rscr$.

\begin{itemize}
\item $\Pi$ ends with $init$: then  $\Pi\Omega$ also ends with
  $init$ and by {\bf CR4}, $\Pi\Omega \in \Rscr$.

\item $\Pi$ ends with $mc$: This follows from the induction hypothesis
  and Lemma~\ref{lm:red-norm}.

\item $\Pi$ ends with $\coindLP$: Suppose $\Pi$ ends with $\coindLP$
  acting on a formula $Y^q\,\vec u$. If $Y^q \not \in supp(\Omega)$,
  then this follows immediately from the induction hypothesis and {\bf
    CR5}. If $Y^q \in supp(\Omega)$, then we use the same arguments as
  shown above.

\item $\Pi$ ends with $subst$ or a left-rule other than $\coindLP$: Suppose the
  premise derivations of the rule are
$$
\left\{ \hbox{$\deduce{\Seq {\Gamma_i}{C_i}}{\Psi_i}$} \right\}_{i \in  I}
$$
for some index set $I$. Then $\Pi\Omega$ ends with the same left rule
and has premise derivations $\{\Psi_i\Omega \}_{i\in I}$.

By the induction hypothesis, $\Psi_i \in \RED_{C_i}[\Omega]$ for every
$i\in I$, and by Lemma~\ref{lm:red-norm}, each $\Psi_i$ is also
normalizable. The latter implies that $\Pi\Omega$ is normalizable.
Note that if $\Psi_i$ is a major premise derivation, then $C_i =
X^p\,\vec u$ for some $\vec u$, and we have $\Psi_i\Omega \in \Rscr$.
Therefore, by {\bf CR5}, we have that $\Pi\Omega \in \Rscr$.

\item Suppose $\Pi$ ends with $\indRP$:
$$
\infer[\indRP] {\Seq{\Gamma}{X^p\vec t}} { \deduce{\Seq \Gamma
    {D\,X^p\,\vec t}}{\Pi'} }
$$
where $p\,\vec x \defmu D\,p\,\vec x$.  
Then $\Pi\Omega = mc(\Pi'\Omega, \Pi_S[\vec t/\vec x]$. 
From the induction hypothesis, we have that $\Pi'\Omega \in
\RED_{D\,X^p\,\vec t}\, [\Omega]$.  This, together with the definitional
closure of $\Omega$, implies that $\Pi\Omega$ is indeed in
$\Rscr\,\vec t$.

\end{itemize}

\paragraph{\bf I.2:} Suppose $C \not = X^p \,\vec t$ for any parameter
$X^p$ and any terms $\vec t$.

Most subcases follow easily from the induction hypothesis,
Lemma~\ref{lm:red-norm} and Definition~\ref{def:param red}. The
subcases where $\Pi$ ends with a left rule follow the same lines of
arguments as in Case I.1 above.  We show here the non-trivial subcases
involving right-introduction rules:

\paragraph{\bf I.2.a} Suppose $\Pi$ ends with $\oimpR$, as shown below left.
Then $\Pi\Omega$ is as shown below right.
$$
\infer[\oimpR] {\Seq \Gamma {C_1 \oimp C_2}} { \deduce{\Seq {\Gamma,
      C_1}{C_2}}{\Pi'} }
\qquad
\infer[\oimpR] {\Seq {\Gamma\Omega} {C_1\Omega \oimp C_2\Omega}} {
  \deduce{\Seq {\Gamma\Omega, C_1\Omega}{C_2\Omega}}{\Pi'\Omega} }
$$
To show $\Pi\Omega \in \RED_C[\Omega]$, we need to show that
$\Pi\Omega$ is normalizable and that
\begin{equation}
  \label{eq:ce1}
  \Pi'\Omega\theta \in \RED_{C_1\theta}[\Omega] \Rightarrow \RED_{C_2\theta}[\Omega] 
\end{equation}
for every $\theta$.  Since $\measure{\Pi'} < \measure{\Pi}$, by the
induction hypothesis, $\Pi'\Omega \in \RED_{C_2}[\Omega]$.
Normalizability of $\Pi\Omega$ then follows immediately from this and
Lemma~\ref{lm:red-norm}. It remains to show that
Statement~\ref{eq:ce1} holds:

Let $\Psi$ be a derivation in $\RED_{C_1\theta}[\Omega]$.  Let $\Xi_1
= mc(\Psi, \Pi'\Omega\theta)$.  Note that since parameter substitution
commutes with eigenvariable substitution, $\Pi' \Omega \theta =
\Pi'\theta\Omega$.  Since $\measure{\Pi'\theta} \leq \measure{\Pi'} <
\measure{\Pi}$ (Lemma~\ref{lm:subst-height}), by induction hypothesis,
we have $\Xi_1 \in \RED_{C_2\theta}[\Omega]$.  In other words,
Statement~\ref{eq:ce1} holds for arbitrary $\theta$, and therefore by
Definition~\ref{def:param red}, $\Pi\Omega \in \RED_C[\Omega]$.

\paragraph{\bf I.2.b} Suppose $\Pi$ ends with $\indR$,
as shown below left, where $p\,\vec x \defmu D\,p\,\vec x$. 
We can assume w.l.o.g.\  that $X^p \# \Omega$. Then $\Pi\Omega$ is
as shown below right.
$$
\infer[\indR] {\Seq{\Gamma}{p\,\vec t}} { \deduce{\Seq \Gamma
    {D\,X^p\,\vec t}}{\Pi'} }
\qquad
\infer[\indR] {\Seq{\Gamma\Omega}{p\,\vec t}} { \deduce{\Seq
    {\Gamma\Omega}{D\,X^p \,\vec t}}{\Pi'\Omega} }
$$
To show that $\Pi\Omega \in \RED_C[\Omega]$, we need to show that
$\Pi\Omega$ is normalizable (as before this easily follows from the induction
hypothesis and Lemma~\ref{lm:red-norm}) and that
\begin{equation}
  \label{eq:caseI3-a}
  mc((\Pi'\Omega)[(\Pi_S,S)/X^p], \Pi_S[\vec t/\vec x]) \in \Rscr\,\vec t
\end{equation}
for every candidate $(\Rscr : S)$ and every $\Pi_S$ that satisfies:
\begin{equation}
  \label{eq:caseI3-b}
  \Pi_S[\vec u/\vec x] \in \RED_{D\,X^p\,\vec u}\, [\Omega, (\Rscr,\Pi_S,S)/X^p]
  \Rightarrow \Rscr\,\vec u
  \hbox{ for every $\vec u$.}
\end{equation}
Let $\Omega' = [\Omega, (\Rscr, \Pi_S, S)/X^p]$.  Note that since $X^p
\# \Omega$, we have
$
\Pi'\Omega[(\Pi_S,S)/X^p] = \Pi'\Omega'.
$
So Statement~\ref{eq:caseI3-a} above can be rewritten to
\begin{equation}
  \label{eq:caseI3-c}
  mc(\Pi'\Omega', \Pi_S[\vec t/\vec x]) \in \Rscr\,\vec t. 
\end{equation}
By Lemma~\ref{lm:clo-ext-ind}, we have that $\Omega'$ is
definitionally closed.  Therefore we can apply the induction
hypothesis to $\Pi'$ and $\Omega'$, obtaining $\Pi'\Omega' \in
\RED_{D\,X^p\,\vec t}\, [\Omega']$. This, together with definitional
closure of $\Omega'$, immediately implies Statement~\ref{eq:caseI3-c}
above, hence $\Pi\Omega$ is indeed in $\RED_C[\Omega]$.

\paragraph{\bf I.2.c} Suppose $\Pi$ ends with $\coindR$,
as shown below left, where $p\,\vec y \defnu D\,p\,\vec y$. Let $S' = S\Omega$.  Then
$\Pi\Omega$ is as shown below right.
$$
\infer[\coindR] {\Seq \Gamma {p\,\vec t}} { \deduce{\Seq \Gamma
    {S\,\vec t}}{\Pi'} & \deduce{\Seq {S\,\vec x}{D\,S\,\vec
      x}}{\Pi_S} }
\qquad
\infer[\coindR] {\Seq {\Gamma\Omega} {p\,\vec t}} { \deduce{\Seq
    {\Gamma\Omega}{S'\,\vec t}}{\Pi'\Omega} & \deduce{\Seq {S'\vec
      x}{D\,S'\,\vec x}}{\Pi_S\Omega} }
$$
Note that $\Pi\Omega$ is normalizable, by the induction hypothesis and
Lemma~\ref{lm:red-norm}.  To show that $\Pi\Omega \in \RED_C[\Omega]$
it remains to show that there exists a reducibility candidate $(\Rscr
: S')$ such that
\begin{description}
\item[(a)] $\Pi'\Omega \in \Rscr$, and
\item[(b)] $\Pi_S\Omega[\vec u/\vec x] \in \Rscr\,\vec u \Rightarrow
  \RED_{D\,X^p\,\vec u}\, [\Omega, (\Rscr, \Pi_S\Omega, S')/X^p]$ for a
  new $X^p \# \Omega$.
\end{description}
Let
$
\Rscr = \{\Psi \mid \Psi \in \RED_{S\,\vec u}\, [\Omega] \}.
$
By Lemma~\ref{lm:red candidate}, $\Rscr$ is a reducibility candidate
of type $S'$.  By the induction hypothesis, we have $\Pi'\Omega \in
\Rscr$, so $\Rscr$ satisfies {\bf (a)}.  Since substitution does not
increase the height of derivations, we have that $\measure{\Pi_S[\vec
  u/ \vec x]} \leq \measure{\Pi_S}$, and therefore, by applying the
induction hypothesis to $\Pi_S[\vec x/\vec u]$, we have
$
mc(\Psi, \Pi_S\Omega[\vec u/\vec x]) \in \RED_{D\,S\,\vec u}\, [\Omega]
$
for every $\Psi \in \RED_{S\,\vec u}\, [\Omega]$. In other words,
$$
\Pi_S\Omega[\vec u/\vec x] \in \RED_{S\,\vec u}\, [\Omega] \Rightarrow
\RED_{D\,S\,\vec u}\, [\Omega].
$$
Notice that $\RED_{S\,\vec u}\, [\Omega]$ is exactly $\Rscr\,\vec u$.  So
the above statement can be rewritten to
$$
\Pi_S\Omega[\vec u/\vec x] \in \Rscr\,\vec u \Rightarrow
\RED_{D\,S\,\vec u}\, [\Omega].
$$
By Lemma~\ref{lm:red param subst}, $\RED_{D\,S\,\vec u}\, [\Omega] =
\RED_{D\,X^p\,\vec u}\, [\Omega, (\Rscr, \Pi_S\Omega, S')/X^p]$, which
means that $\Rscr$ indeed satisfies condition {\bf (b)} above, and
therefore $\Pi\Omega \in \RED_C[\Omega]$.

\paragraph{\bf CASE II: $n > 0$} 
To show that $\Xi \in \RED_C[\Omega]$ in this case, we need to show
that all its reducts are in $\RED_C[\Omega]$ and that $\Xi$ is
normalizable.  The latter follows from the former by
Lemma~\ref{lm:red-norm} and Definition~\ref{def:norm}, so in the
following we need only to show the former.

Note that in this case, we do not need to distinguish cases based on
whether $C$ is headed by a parameter or not.  To see why, suppose $C =
X^p\,\vec t$ for some parameter $X^p$.  If $X^p \not \in supp(\Omega)$
then to show $\Xi \in \RED_C[\Omega]$ we need to show that it is
normalizable, which means that we need to show that all its reducts
are normalizable.  But since all reducts of $\Xi$ has the same type
$X^p\,\vec t$, showing their normalizability amounts to the same thing
as showing that they are in $\RED_C[\Omega]$.  If $X^p \in
supp(\Omega)$, then to show $\Xi \in \RED_C[\Omega]$ we need to show
that $\Xi \in \Rscr$. Then by {\bf CR3}, it is enough to show that all
reducts of $\Xi$ are in $\Rscr$, which is the same as showing that all
reducts of $\Xi$ are in $\RED_C[\Omega]$.

Since the applicable reduction rules to $\Xi$ are driven by the shape
of $\Pi\Omega$, and since $\Pi\Omega$ is determined by $\Pi$, we shall
perform case analysis on $\Pi$ in order to determine the possible
reduction rules that apply to $\Xi$, and show in each case that the
reduct of $\Xi$ is in the same parametric reducibility set.  There are
several main cases depending on whether $\Pi$ ends with a rule acting
on a cut formula $B_i$ or not.  Again, when we refer to $B_i$, without
loss of generality, we assume $i=1$.

In the following, we say that an instance of $\coindLP$ is
\emph{trivial} if it applies to a formula $Y^q \,\vec u$ for some
$\vec u$, but $Y^q \not \in supp(\Omega)$. Otherwise, we say that it
is non-trivial. 

\paragraph{\bf II.1} Suppose $\Pi$ ends with a left rule, other than
$\cL$, $\wL$ and a non-trivial $\coindLP$, on $B_1$ and $\Pi_1$ ends
with a right-introduction rule.  There are several subcases depending
on the logical rules that are applied to $B_1$. We show here the
non-trivial cases:
\begin{trivlist}
\item[\fbox{$\oimpR/\oimpL$}] Suppose $\Pi_1$ and $\Pi$ are
$$
\infer[\oimpR] {\Seq{\Delta_1}{B_1'\Omega \oimp B_1''\Omega}} {
  \deduce{\Seq{\Delta_1,B_1'\Omega}{B_1''\Omega}}{\Pi_1'} } \qquad
\infer[\oimpL.]  {\Seq{B_1'\oimp B_1'',B_2,\dots, B_n,\Gamma}{C}} {
  \deduce{\Seq{B_2,\dots,\Gamma}{B_1'}}{\Pi'} &
  \deduce{\Seq{B_1'',B_2,\dots,\Gamma}{C}}{\Pi''} }
$$
Let $\Xi_1 = mc(\Pi_2,\ldots,\Pi_n,\Pi'\Omega$. 
Then $\Xi_1 \in \RED_{B_1'}[\Omega]$ by induction hypothesis since
$\measure{\Pi'} < \measure{\Pi}$ and therefore $\Mscr(\Xi_1) <
\Mscr(\Xi)$.  Since $\Pi_1 \in \RED_{B_1}[\Omega]$, by
Definition~\ref{def:param red}, we have
$$
\Pi_1' \in \RED_{B_1'}[\Omega] \Rightarrow \RED_{B_1''}[\Omega]
$$
and therefore the derivation $\Xi_2= mc(\Xi_1,\Pi_1')$ with end sequent 
$\Seq{\Delta_1,\ldots,\Delta_n,\Gamma\Omega}{B_1''\Omega}$
is in $\RED_{B_1''}[\Omega]$.  Let $\Xi_3 = mc(\Xi_2, \Pi_2,\ldots,
\Pi_n,\Pi''\Omega)$.

The reduct of $\Xi$ in this case is the derivation $\Xi'$:
\begin{displaymath}
  \infer=[\cL]
  {\Seq{\Delta_1,\ldots,\Delta_n,\Gamma\Omega}{C\Omega}}
  {
    \deduce
    {\Seq{\Delta_1,\ldots, \Delta_n,\Gamma\Omega,
        \Delta_2,\ldots, \Delta_n,\Gamma\Omega}
      {C\Omega}
    }
    {\Xi_3}
  }
\end{displaymath}
By the induction hypothesis, we have $\Xi_3 \in \RED_{C}[\Omega]$, and
therefore, by Lemma~\ref{lm:red-norm}, it is normalizable.  By
Definition~\ref{def:norm}, this means that $\Xi'$ is normalizable and
by Definition~\ref{def:param red}, $\Xi' \in \RED_C[\Omega]$.

\item[\fbox{$\forallL/\forallR$}] Suppose $\Pi_1$ and $\Pi$ are
$$
\infer[\forallR] {\Seq{\Delta_1}{\forall x.B_1'\Omega}} {
  \deduce{\Seq{\Delta_1}{B_1'\Omega[y/x]}}{\Pi_1'} } \qquad \qquad
\infer[\forallL] {\Seq{\forall x.B_1',B_2,\dots,B_n,\Gamma}{C}} {
  \deduce{\Seq{B_1'[t/x],B_2,\dots,B_n,\Gamma}{C}}{\Pi'} }
$$
The reduct of $\Xi$ in this case is
$$
\Xi' = mc(\Pi_1'[t/y], \Pi_2, \ldots, \Pi_n, \Pi'\Omega).
$$
Since $\Pi_1' \in \RED_{B_1'[y/x]}[\Omega]$, by
Lemma~\ref{lm:red-subst} we have
$$\Pi_1'[t/y] \in \RED_{B_1'[t/x]}[\Omega]$$
Note that $\measure{\Pi'} < \measure{\Pi}$, so we can apply the
induction hypothesis to obtain $\Xi' \in \RED_{C}[\Omega]$.

\item[\fbox{$\eqR/\eqL$}] Suppose $\Pi_1$ and $\Pi$ are
$$
\infer[\eqR] {\Seq{\Delta_1}{s = t}} {} \qquad \qquad \infer[\eqL]
{\Seq{s = t,\dots,B_n,\Gamma}{C}} {\left\{\raisebox{-1.5ex} {
      \deduce{\Seq{B_2\rho,\dots,B_n\rho,\Gamma\rho}{C\rho}}
      {\Pi^\rho} } \right\}_\rho }
$$
Note that in this case $s$ must be the same term as $t$, and therefore
both are unifiable by any substitution. 
Let $\Pi'$ be the derivation:
$$
\infer[subst]
{\Seq{B_2,\ldots,B_n,\Gamma}{C}}
{
\left\{
\raisebox{-1.3ex}{
\deduce{\Seq{B_2\rho,\ldots,B_n\rho,\Gamma\rho}{C\rho}}{\Pi^\rho}
}
\right\}_\rho
}
$$
and let $\Xi_1 = mc(\Pi_2,\ldots,\Pi_n,\Pi'\Omega$.
Since $\measure{\Pi'} = \measure{\Pi}$ and since the total size of cut 
formulas in $\Xi_1$ is smaller than in $\Xi$, by the induction hypothesis, we have
$\Xi_1 \in \RED_C[\Omega]$.
Then the reduct of $\Xi$ in this case is the derivation $\Xi'$:
$$
\infer=[\wL]
{\Seq{\Delta_1,\Delta_2,\ldots,\Delta_n,\Gamma}{C}}
{
\deduce{\Seq{\Delta_2,\ldots,\Delta_n,\Gamma}{C}}{\Xi_1}
}
$$
which is also in $\RED_C[\Omega]$, by the definition of parametric reducibility.

\item[\fbox{$\indR/\indL$}] Suppose $\Pi_1$ and $\Pi$ are the derivations
$$
\infer[\indR] {\Seq {\Delta_1} {p\,\vec t}} { \deduce{\Seq
    {\Delta_1}{D\,X^p\,\vec t}}{\Pi_1'} } \qquad \infer[\indL]
{\Seq{p\,\vec{t}, \Gamma}{C}} {
  \deduce{\Seq{D\,S\,\vec{x}}{S\,\vec{x}}}{\Pi_S} &
  \deduce{\Seq{S\,\vec{t}, \Gamma}{C}}{\Pi'} }
$$
where $p\,\vec y \defmu D\,p\,\vec y$ and $X^p$ is a new parameter not
occuring in the end sequent of $\Pi_1$ (we can assume w.l.o.g.\ that
$X^p \# \Omega$ and that it does not occur either in the end sequent
of $\Pi$).  Then $\Pi\Omega$ is the derivation
$$
\infer[\indL] {\Seq{p\,\vec{t}, \Gamma\Omega}{C\Omega}} {
  \deduce{\Seq{D\,S'\,\vec{x}}{S'\,\vec{x}}}{\Pi_S\Omega} &
  \deduce{\Seq{S'\,\vec{t}, \Gamma\Omega}{C\Omega}}{\Pi'\Omega} }
$$
where $S' = S\Omega$.  
Let $\Xi_1 = mc(\Pi_1'[(\Pi_S\Omega,S')/X^p], \Pi_S\Omega[\vec t/\vec x]$. 
Then the reduct of $\Xi$ in this case is the derivation
$$
\Xi' = mc(\Xi_1,\Pi_2,\ldots,\Pi_n, \Pi'\Omega).
$$

Since $\measure{\Pi_S[\vec u/\vec x]} \leq \measure{\Pi_S} <
\measure{\Pi}$ by the induction hypothesis, we have
\begin{equation}
  \label{eq:caseII.1-a}
  \Pi_S\Omega[\vec u/\vec x] \in \RED_{D\,S\,\vec u}\, [\Omega] \Rightarrow \RED_{S\,\vec u}\, [\Omega].
\end{equation}
Let $\Rscr = \{\Psi \mid \Psi \in \RED_{S\,\vec u}\, [\Omega] \hbox{ for
  some $\vec u$} \}$.  Then by Lemma~\ref{lm:red candidate}, $\Rscr$
is a reducibility candidate of type $S'$.  Moreover, by
Lemma~\ref{lm:red param subst}, we have
$$
\RED_{D\,S\,\vec u}\, [\Omega] = \RED_{D\,X^p\,\vec u}\, [\Omega, (\Rscr,
\Pi_S\Omega,S')/X^p].
$$
This, together with Statement~\ref{eq:caseII.1-a} above, implies that
\begin{equation}
  \label{eq:caseII.1-b}
  \Pi_S\Omega[\vec u/\vec x] \in \RED_{D\,X^p\,\vec u}\, [\Omega, (\Rscr, \Pi_S\Omega,S')/X^p] \Rightarrow \Rscr \,\vec u
\end{equation}
for every $\vec u$.

Since $\Pi_1 \in \RED_{p\,\vec t}\, [\Omega]$, it follows from
Definition~\ref{def:param red} that for every reducibility candidate
$(\Sscr : I)$ and $\Pi_I$ such that
$$
\Pi_I[\vec u/\vec x] \in \RED_{D\,X^p\,\vec u}\, [\Omega, (\Sscr,
\Pi_I,I)/X^p] \Rightarrow \Sscr\,\vec u \hbox{ for every $\vec u$, }
$$
we have
$$
mc(\Pi_1'[(\Pi_I,I)/X^p], \Pi_I[\vec t/\vec x]) \in \Sscr\,\vec t.
$$
Substituting $\Rscr$ for $\Sscr$, $\Pi_S\Omega$ for $\Pi_I$ and $S'$
for $I$, and using Statement~\ref{eq:caseII.1-b} above, we obtain:
$$
\Xi_1 = mc(\Pi_1'[(\Pi_S\Omega,S')/X^p], \Pi_S\Omega[\vec t/\vec x])
\in \Rscr\,\vec t = \RED_{S\,\vec t}\, [\Omega].
$$
Since $\measure{\Pi'} < \measure{\Pi}$, we can then apply the
induction hypothesis to conclude that $\Xi' \in \RED_{C}[\Omega]$.

\item[\fbox{$\coindR/\coindL$}] Suppose $\Pi_1$ and $\Pi$ are
$$
\infer[\coindR] {\Seq{\Delta_1}{p\,\vec{t}}} {
  \deduce{\Seq{\Delta_1}{S\,\vec{t}}}{\Pi_1'} &
  \deduce{\Seq{S\,\vec{x}}{D\,S\,\vec{x}}}{\Pi_S} } \qquad \qquad
\infer[\coindL] {\Seq{p\,\vec{t}, B_2, \dots, \Gamma}{C}}
{\deduce{\Seq{D\,X^p\,\vec{t},B_2,\dots, \Gamma}{C}}{\Pi'}}
$$
where $p\,\vec{y}\defnu D\,p\,\vec{y}$ and $X^p$ is a parameter not
already occuring in the end sequent of $\Pi$ (and w.l.o.g.\ assume also
$X^p \# \Omega$ and $X^p$ not occuring in $\Delta_i$ or $B_i$).  Then
$\Pi\Omega$ is
$$
\infer[\coindL.]  {\Seq{p\,\vec{t}, B_2\Omega, \dots,
    \Gamma\Omega}{C\Omega}}
{\deduce{\Seq{D\,X^p\,\vec{t},B_2\Omega,\dots,
      \Gamma\Omega}{C\Omega}}{\Pi'\Omega}}
$$
Since $\Pi_1 \in \RED_{p\,\vec t}\, [\Omega]$, by
Definition~\ref{def:param red}, there exists a reducibility candidate
$(\Rscr : S)$ such that $\Pi_1' \in \Rscr$ and such that for every
$\vec u$,
$$
\Pi_S[\vec u/\vec x] \in \Rscr\,\vec u \Rightarrow \RED_{D\,X^p\,\vec
  u}\, [\Omega, (\Rscr,\Pi_S,S)/X^p)].
$$
Let $\Omega' = [\Omega, (\Rscr, \Pi_S,S)/X^p]$.  Then by
Lemma~\ref{lm:clo-ext-coind}, $\Omega'$ is definitionally closed.

Let $\Xi_1 = mc(\Pi_1', \Pi_S[\vec t/\vec x])$. By the definitional
closure of $\Omega'$, we have that $\Xi_1 \in \RED_{D\,X^p\,\vec
  t}\, [\Omega']$.

The reduct of $\Xi$ in this case is the derivation
$$
\Xi' = mc(\Xi_1,\Pi_2,\ldots,\Pi_n,\Pi'\Omega').
$$
Note that since $X^p$ does not occur in $\Delta_i$ or $B_i$, by
Lemma~\ref{lm:red vacuous}, we have that
$$
\Pi_i \in \RED_{B_i}[\Omega] = \RED_{B_i}[\Omega']
$$
for every $i\in\{2,\ldots,n\}$.  Therefore, by induction hypothesis,
we have that
$$
\Xi' \in \RED_C[\Omega'].
$$
But since $X^p$ is also new for $C$, we have $\RED_C[\Omega'] =
\RED_C[\Omega]$, and therefore
$$
\Xi' \in \RED_C[\Omega].
$$
\end{trivlist}

\paragraph{\bf II.2} $\Pi$ ends with a left rule, other than $\cL$,
$\wL$ and a non-trivial instance of $\coindLP$, acting on $B_1$, and
$\Pi_1$ ends with a left-rule or $subst$.

Note that in these cases, the reducts always end with a left-rule.
The proof for the following cases abide to the same pattern: we first
establish that the premise derivations of the reduct are either
normalizable or in certain reducibility sets. We then proceed to show
that the reduct itself is reducible by applying to the closure
conditions of reducibility under applications of left-rules.  For the
latter, we need to distinguish three cases depending on $C$: If $C =
X^p\,\vec t$ for some $X^p\in supp(\Omega)$, then closure under
left-rules is guaranteed by {\bf C5};
if $X^p \not \in supp(\Omega)$ then we need to show that the reduct
is normalizable, and the closure condition under left-rules is
guaranteed by the definition of normalizability. 
  Otherwise, $C$ is not headed by any parameter, and in this case,
the closure condition follows from {\bf  P6}.
We shall explicitly do these case analysis in one of the subcases
below, but will otherwise leave them implicit.  We show the
non-trivial subcases only; other cases can be proved by
straightforward applications of the induction hypothesis.

\begin{trivlist}
\item[\fbox{$\oimpL/\circ\Lscr$}] Suppose $\Pi_1$ is
$$
\infer[\oimpL] {\Seq{D_1 \oimp D_2,\Delta_1'}{B_1\Omega}} {
  \deduce{\Seq{\Delta_1'}{D_1}}{\Pi_1'} &
  \deduce{\Seq{D_2,\Delta_1'}{B_1\Omega}}{\Pi_1''} }
$$
Since $\Pi_1 \in \RED_{B_1}[\Omega]$, it follows from
Definition~\ref{def:param red} that $\Pi_1'$ is normalizable and
$\Pi_1'' \in \RED_{B_1}[\Omega]$.

Let $\Xi_1 = mc(\Pi_1'',\Pi_2,\ldots,\Pi_n,\Pi\Omega)$.  Since
$nd(\Pi_1'') < nd(\Pi_1)$, by induction hypothesis, $\Xi_1 \in
\RED_C[\Omega]$.

The reduct of $\Xi$ in this case is the derivation $\Xi'$:
\begin{displaymath}
  \infer[\oimpL]
  {\Seq{D_1 \oimp D_2,\Delta_1',\Delta_2,\ldots,\Gamma\Omega}{C\Omega}}
  {
    \infer=[\wL]
    {\Seq{\Delta_1',\ldots,\Gamma\Omega}{D_1}}
    {
      \deduce{\Seq{\Delta_1'}{D_1}}{\Pi_1'} 
    }
    &
    \deduce{\Seq{D_2,\Delta_1',\Delta_2,\ldots,\Gamma\Omega}{C\Omega}}
    {\Xi_1}
  }
\end{displaymath}
Since $\Pi_1'$ is normalizable, by Definition~\ref{def:norm} the left
premise derivation of $\Xi'$ is normalizable, and since reducibility
implies normalizability (Lemma~\ref{lm:red-norm}), the right premise
is also normalizable, hence $\Xi'$ is normalizable.

Now to show $\Xi' \in \RED_C[\Omega]$, we need to distinguish three
cases based on $C$:
\begin{itemize}
\item Suppose $C = X^p\,\vec t$ for some $X^p \in supp(\Omega)$ and
  $\Omega(X^p) = (\Rscr, \Pi_S,S)$. Then we need to show that $\Xi'
  \in \Rscr\,\vec t$. This follows from
  Definition~\ref{def:candidates}, more specifically, from {\bf CR5}
  and the fact that $\Xi_1 \in \RED_C[\Omega] = \Rscr\,\vec t$.

\item Suppose $C = X^p \,\vec t$ but $X^p \not \in supp(\Omega)$.
  Then we need to show that $\Xi'$ is normalizable. But this follows
  immediately from the normalizability of both of its premise
  derivations.

\item Suppose $C \not = X^p\,\vec t$ for any parameter $X^p$ and any
  terms $\vec t$. Since $\Xi_1 \in \RED_C[\Omega]$, by
  Definition~\ref{def:param red}, we have $\Xi' \in \RED_C[\Omega]$.
\end{itemize}

\item[\fbox{$\eqL/\circL$}] Suppose $\Pi_1$ is as shown below left.
Then the reduct of $\Xi$ in this case is shown below right, where
$\Xi^\rho = mc(\Pi_1^\rho,\Pi_2\rho,\ldots,\Pi_n\rho,\Pi\rho\Omega$.
$$
\infer[\eqL]{\Seq{s=t,\Delta_1'}{B_1\Omega}} {\left\{\raisebox{-1.5ex}
    {\deduce{\Seq{\Delta_1'\rho}{B_1\Omega\rho}} {\Pi^{\rho}}}
  \right\}_{\rho}}
\qquad
\infer[\eqL]
{\Seq{s=t,\Delta_1',\Delta_2,\dots,\Gamma\Omega}{C\Omega}}
{\left\{\raisebox{-1.5ex}
    {\deduce{\Seq{\Delta_1'\rho,\dots,\Gamma\Omega\rho}{C\Omega\rho}}
      {\Xi^{\rho}}} \right\}_{\rho}}
$$
$\Xi^\rho \in \RED_{C\rho}[\Omega]$ by the induction hypothesis
(since $nd(\Pi_1^\rho) < nd(\Pi_1)$ and the other measures are
non-increasing). Hence, the reduct of $\Xi$ is in  $\RED_C[\Omega]$
by the definition of parametric reducibility.

\item[\fbox{$\indL/\circL$}] Suppose $\Pi_1$ is
$$
\infer[\indL.]  {\Seq{p\,\vec{t}, \Delta_1'}{B_1\Omega}} {
  \deduce{\Seq{D\,S\,\vec{x}}{S\,\vec{x}}}{\Pi_S} &
  \deduce{\Seq{S\,\vec{t}, \Delta_1'}{B_1\Omega}}{\Pi_1'} }
$$
Since $\Pi_1 \in \RED_{B_1}[\Omega]$, we have that $\Pi_S$ is
normalizable and $\Pi_1' \in \RED_{B_1}[\Omega]$.  Let $\Xi_1$ be the
derivation
$$
mc(\Pi_1',\Pi_2,\ldots,\Pi_n,\Pi\Omega).
$$
Then $\Xi_1 \in \RED_{B_1}[\Omega]$ by the induction hypothesis, since
$nd(\Pi_1') < nd(\Pi_1)$.
Therefore the reduct of $\Xi$
$$
\infer[\indL] {\Seq{p\,\vec{u},\Delta_1',
    \dots,\Delta_n,\Gamma\Omega}{C\Omega}} {
  \deduce{\Seq{D\,S\,\vec{x}}{S\,\vec{x}}}{\Pi_S} &
  \deduce{\Seq{S\,\vec{u},\Delta_1',
      \ldots,\Delta_n,\Gamma\Omega}{C\Omega}}{\Xi_1} }
$$
is also in $\RED_{C}[\Omega]$.
 
\end{trivlist}

\paragraph{\bf II.3} $\Pi$ ends with a left rule, other than $\cL$,
$\wL$ and a non-trivial instance of $\coindLP$, acting on $B_1$, and
$\Pi_1$ ends with $mc$ or $init$: These cases follow straightforwardly
from the induction hypothesis.

\paragraph{\bf II.4} Suppose $\Pi$ ends with a non-trivial application
of $\coindLP$ on $B_1$. That is, $B_1 = X^p\,\vec t$, for some $X^p
\in supp(\Omega)$ and some $\vec t$, and $\Pi$ is
$$
\infer[\coindLP] {\Seq{X^p\,\vec t, B_2, \ldots, B_n, \Gamma}{C}} {
  \deduce{\Seq{D\,X^p\,\vec t, B_2,\ldots, B_n, \Gamma}{C}}{\Pi'} }
$$
where $p\,\vec x \defnu D\,p\,\vec x$.  Suppose $\Omega(X^p) = (\Rscr,
\Pi_S, S)$.  Then $\Pi\Omega$ is
$mc(mc(\idrv_{S\,\vec t}, \Pi_S[\vec t/\vec x]), \Pi'\Omega$.
Let $\Xi_1 = mc(\Pi_1, mc(\idrv_{S\,\vec t}, \Pi_S[\vec t/\vec x])$.
Note that $\Xi_1$ has exactly one reduct, that is,
$$
\Xi_2 = mc(mc(\Pi_1,\idrv_{S\,\vec t}), \Pi_S[\vec t/\vec x]).
$$
Note that $mc(\Pi_1,\idrv_{S\,\vec t})$ also has exactly one reduct,
namely, $\Pi_1$. Since $\Pi_1\in \RED_{X^p\,\vec t}\, [\Omega] =
\Rscr\,\vec t$, this means, by {\bf CR3}, that $mc(\Pi_1,\idrv_{S\,\vec t})$ is in
$\Rscr\,\vec t$.  Since $\Omega$ is definitionally closed, we have
that $\Xi_2 \in \RED_{D\,X^p\,\vec t}\, [\Omega]$.  And since $\Xi_2$ is
the only reduct of $\Xi_1$, this also means that, by
Definition~\ref{def:param red}, $\Xi_1 \in \RED_{D\,X^p\,\vec
  t}\, [\Omega]$.

The reduct of $\Xi$, \ie the derivation $mc(\Xi_1,\Pi_2,\ldots,\Pi_n,\Pi'\Omega)$
is therefore in $\RED_C[\Omega]$ by the induction hypothesis.

\paragraph{\bf II.5} Suppose $\Pi$ ends with $\wL$ or $\cL$ acting on
$B_1$, or $init$. Then $\Pi\Omega$ also ends with the same rule.  The
cut reduction rule that applies in this case is either $-/\wL$,
$-/\cL$ or $-/init$. In these cases, parametric reducibility of the
reducts follow straightforwardly from the assumption (in case of
$init$), the induction hypothesis and Definition~\ref{def:param red}.

\paragraph{\bf II.6} Suppose $\Pi$ ends with $mc$.  Then $\Pi\Omega$
also ends with $mc$. The reduction rule that applies in this case is
the reduction $-/mc$.  Parametric reducibility of the reduct in this
case follows straightforwardly from the induction hypothesis and
Definition~\ref{def:param red}.

\paragraph{\bf II.7} Suppose $\Pi$ ends with $subst$ or a rule acting on a
formula other than a cut formula.
Most cases follow straightforwardly from the induction hypothesis,
Lemma~\ref{lm:red-norm} and Lemma~\ref{lm:red-subst} (which is needed
in the reduction case $-/\eqL$ and $-/subst$).  We show the interesting subcases
here:

\begin{trivlist}

\item[\fbox{$-/\indRP$}] Suppose $\Pi$ ends with a non-trivial $\indRP$,
  i.e., $\Pi$ is
$$
\infer[\indRP] {\Seq{B_1,\ldots,B_n,\Gamma}{X^p\,\vec t}} {
  \deduce{\Seq{B_1,\ldots,B_n,\Gamma}{D\,X^p\,\vec t}}{\Pi'} }
$$
where $p\,\vec x \defmu D\,p\,\vec x$ and $X^p \in supp(\Omega)$.
Suppose $\Omega(X^p) = (\Rscr,\Pi_S,S)$. Then $\Pi\Omega$ is
the derivation $mc(\Pi'\Omega, \Pi_S[\vec t/\vec x]$.
The reduct of $\Xi$ in this case is the derivation
$$
\Xi' = mc(mc(\Pi_1,\ldots,\Pi_n,\Pi'\Omega), \Pi_S[\vec t/\vec x]).
$$
By the induction hypothesis, we have
$
mc(\Pi_1,\ldots, \Pi_n,\Pi'\Omega) \in \RED_{D\,X^p\,\vec t}\, [\Omega].
$
This, together with the definitional closure of $\Omega$, implies that
$\Xi' \in \Rscr\,\vec t = \RED_{X^p\,\vec t}\, [\Omega]$.

\item[\fbox{$-/\indR$}] Suppose $\Pi$ is
$$
\infer[\indR] {\Seq{B_1,\ldots,B_n,\Gamma}{p\,\vec t}} {
  \deduce{\Seq{B_1,\ldots,B_n,\Gamma}{D\,X^p\,\vec t}}{\Pi'} }
$$
where $p\,\vec y \defmu D\,p\,\vec y$.  Without loss of generality we
can assume that $X^p$ is chosen to be sufficiently fresh (e.g., not
occurring in $\Omega$, $\Delta_1$, $B_1$, etc.).  
Let $\Xi_1 = mc(\Pi_1,\ldots,\Pi_n,\Pi'\Omega$. 
Then the reduct of $\Xi$ is the derivation $\Xi'$
$$
\infer[\indR .]  {\Seq{\Delta_1,\ldots,\Delta_n,\Gamma\Omega}{p\,\vec
    t}} {
  \deduce{\Seq{\Delta_1,\ldots,\Delta_n,\Gamma\Omega}{D\,X^p\,\vec
      t}}{\Xi_1} }
$$
To show that $\Xi' \in \RED_C[\Omega]$, we first need to show that it
is normalizable.  This follows straightforwardly from the induction
hypothesis (which shows that $\Xi_1 \in \RED_{D\,X^p\,\vec
  t}\, [\Omega]$) and Lemma~\ref{lm:red-norm}.  It then remains to show
that
$$
\Xi_2 = mc(\Xi_1[(\Pi_S,S)/X^p], \Pi_S[\vec t/\vec x]) \in \Rscr\,\vec
t
$$
for every reducibility candidate $(\Rscr : S)$ and every $\Pi_S$ such
that
\begin{equation}
  \label{eq:II.7.a.1}
  \Pi_S[\vec u/\vec x] \in \RED_{D\,X^p\,\vec u}\, [\Omega, (\Rscr,\Pi_S,S)/X^p] \Rightarrow \Rscr\,\vec u,
  \hbox{ for every $\vec u$.}
\end{equation}

So suppose $(\Rscr,\Pi_S,S)$ satisfies Statement~\ref{eq:II.7.a.1}
above.  Let $\Omega' = [\Omega, (\Rscr,\Pi_S,S)/X^p]$. 
By Lemma~\ref{lm:clo-ext-ind}, $\Omega'$ is definitionally closed.  Note
that since we assume that $X^p$ is a fresh parameter not occuring in
$B_i$, we have
$\RED_{B_i}[\Omega] = \RED_{B_i}[\Omega']$ 
by Lemma~\ref{lm:red vacuous}, and
$
\Pi_i[(\Pi_S,S)/X^p] = \Pi_i \in \RED_{B_i}[\Omega']
$
by Lemma~\ref{lm:param subst vacuous}, for every $i \in
\{1,\ldots,n\}$.  Therefore, by the induction hypothesis we have:
$$
\Xi_1[(\Pi_S,S)/X^p] = mc(\Pi_1,\ldots,\Pi_n,\Pi'\Omega') \in
\RED_{D\,X^p\,\vec t}\, [\Omega'].
$$
This, together with the definitional closure of $\Omega'$, implies that
$
\Xi_2 \in \Rscr\,\vec t.
$

\item[\fbox{$-/\coindLP$}] Suppose $\Pi$ ends with a non-trivial $\coindLP$,
  i.e., $\Pi$ is
$$
\infer[\coindLP] {\Seq{B_1,\ldots,B_n, X^p\,\vec t, \Gamma'}{C}} {
  \deduce{\Seq{B_1,\ldots,B_n, D\,X^p\,\vec t, \Gamma'}{C}}{\Pi'} }
$$
where $p\,\vec x \defnu D\,p\,\vec x$ and $X^p \in supp(\Omega)$.
Suppose $\Omega(X^p) = (\Rscr, \Pi_S,S)$.  Then $\Pi\Omega$ is
$$
mc(mc(\idrv_{S\,\vec t}, \Pi_S[\vec t/\vec x]),\Pi'\Omega).
$$
Let $\Xi_1 = mc(\idrv_{S\,\vec t}, \Pi_S[\vec t/\vec x])$.  By {\bf
  CR4}, $\idrv_{S\,\vec t} \in \Rscr \,\vec t$, and therefore, by
definitional closure of $\Omega$, we have $\Xi_1 \in
\RED_{D\,X^p\,\vec t}\, [\Omega]$.  The reduct of $\Xi$ in this case is
$$
mc(\Xi_1,\Pi_1,\ldots,\Pi_n,\Pi'\Omega)
$$
which is in $\RED_C[\Omega]$ by the induction hypothesis.

\item[\fbox{$-/\coindR$}] Suppose $\Pi$ is
$$
\infer[\coindR] {\Seq{B_1,\dots,B_n,\Gamma}{p\,\vec{t}}} {
  \deduce{\Seq{B_1,\dots,B_n,\Gamma}{S\,\vec{t}}}{\Pi'} &
  \deduce{\Seq{S\,\vec{x}}{D\,S\,\vec{x}}}{\Pi_S} } \enspace 
$$
where $p\,\vec{y} \defnu D\,p\,\vec{y}$.  Let $S' = S\Omega$.  The
derivation $\Pi\Omega$ in this case is 
$$
\infer[\coindR]
{\Seq{B_1\Omega,\dots,B_n\Omega,\Gamma\Omega}{p\,\vec{t}}} {
  \deduce{\Seq{B_1\Omega,\dots,B_n\Omega,\Gamma\Omega}{S'\,\vec{t}}}{\Pi'\Omega}
  & \deduce{\Seq{S'\,\vec{x}}{D\,S'\,\vec{x}}}{\Pi_S\Omega} }
$$
Let $\Xi_1$ be the derivation $mc(\Pi_1,\ldots,\Pi_n,\Pi'\Omega)$.  By
the induction hypothesis, $\Xi_1 \in \RED_{S\,\vec t}\, [\Omega]$ and
$\Pi_S\Omega \in \RED_{D\,S\,\vec x}\, [\Omega]$, hence both $\Xi_1$ and
$\Pi_S\Omega$ are also normalizable by Lemma~\ref{lm:red-norm}.  The
 reduct of $\Xi$ is the derivation $\Xi'$
$$
\infer[\coindR .]
{\Seq{\Delta_1,\dots,\Delta_n,\Gamma\Omega}{p\,\vec{t}}} {
  \deduce{\Seq{\Delta_1,\dots,\Delta_n,\Gamma\Omega}{S'\,\vec{t}}}{\Xi_1}
  & \deduce{\Seq{S'\,\vec{x}}{D\,S'\,\vec{x}}}{\Pi_S\Omega} }
$$
Let $X^p$ be a parameter fresh for  $\Omega$,
$\Gamma$, $\Delta_i$ and $B_i$.

To show that $\Xi' \in \RED_C[\Omega]$ we must first show that it is
normalizable.  This follows from immediately from normalizability of
$\Xi_1$ and $\Pi_S\Omega$.  Then we need to find a reducibility
candidate $(\Rscr : S')$ such that
\begin{description}
\item[(a)] $\Xi_1 \in \Rscr$, and
\item[(b)] $\Pi_S\Omega[\vec u/\vec x] \in \Rscr\,\vec u \Rightarrow
  \RED_{D\,X^p\,\vec u}\, [\Omega,(\Rscr,\Pi_S,S)/X^p]$.
\end{description}
Let
$
\Rscr = \{\Psi \mid \Psi \in \RED_{S\,\vec u}\, [\Omega] \}.
$
As in case {\bf I.2.c}, we show, using Lemma~\ref{lm:red candidate},
that $\Rscr$ is a reducibility candidate of type $S'$.  By the
induction hypothesis, we have $\Xi_1 \in \Rscr$, so $\Rscr$ satisfies
{\bf (a)}.  Using the same argument as in case {\bf I.2.c} we can show
that $\Rscr$ also satisfies {\bf (b)}, \ie by appealing to the
induction hypothesis, applied to $\Pi_S$.
\end{trivlist}
\qed
\end{proof}

\begin{corollary}
  Every derivation is reducible.
\end{corollary}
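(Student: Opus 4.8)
The plan is to read off the corollary as the degenerate ($n = 0$) instance of the main lemma (Lemma~\ref{lm:comp}), specialized to the empty candidate substitution. Write $\Omega_\emptyset$ for the candidate substitution with $supp(\Omega_\emptyset) = \emptyset$. First I would check that $\Omega_\emptyset$ is definitionally closed: the defining condition quantifies over $X^p \in supp(\Omega_\emptyset)$, of which there are none, so it holds vacuously. Since $Sub(\Omega_\emptyset)$ is the empty parameter substitution, applying it is the identity, i.e.\ $\Pi\Omega_\emptyset = \Pi$ and $C\Omega_\emptyset = C$ for every derivation $\Pi$ of type $C$.

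Now let $\Pi$ be an arbitrary derivation of $\Seq{\Gamma}{C}$. I would apply Lemma~\ref{lm:comp} with $n = 0$, $\Omega = \Omega_\emptyset$, and no premise derivations; this is permitted because the lemma is stated for $n \geq 0$. The lemma then asserts that the (formal, $n=0$) multicut $\Xi$ whose sole premise is $\Pi\Omega_\emptyset = \Pi$ lies in $\RED_C[\Omega_\emptyset] = \RED_C$. Indeed, inspecting \textbf{CASE I} of the proof of Lemma~\ref{lm:comp} shows that what is actually established there is precisely $\Pi\Omega_\emptyset \in \RED_C[\Omega_\emptyset]$. To be safe I would also transfer reducibility from $\Xi$ to its unique reduct $\Pi$ directly: if $C$ is not headed by a parameter, clause \textbf{P2} of Definition~\ref{def:param red} gives $\Pi \in \RED_C$ since $\Pi$ is the only reduct of the $n=0$ multicut; if $C = X^p\,\vec t$, then $X^p \notin supp(\Omega_\emptyset)$, so $\RED_C = \NM_{X^p}\,\vec t$ and normalizability descends from $\Xi$ to $\Pi$ by Definition~\ref{def:norm}. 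In either case $\Pi \in \RED_C$, that is, $\Pi$ is reducible.

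I expect no real obstacle here: the corollary is a direct specialization of Lemma~\ref{lm:comp}, and every genuine difficulty --- the interaction of parametric reducibility with the (co)induction cut reductions, the construction of co-inductive witness candidates, and the transfinite measure $\Mscr$ --- has already been discharged in that lemma. The only points needing care are bookkeeping: that the empty candidate substitution meets the definitional-closure hypothesis, that an empty parameter substitution acts as the identity on derivations, and that reducibility (or, for parameter-headed types, mere normalizability) of the formal $n=0$ multicut descends to $\Pi$ itself. Finally I would note the payoff: once every derivation is reducible, Lemma~\ref{lm:red-norm} makes every derivation normalizable, and Lemma~\ref{lm:norm-cut-free} then yields a cut-free derivation of every provable sequent, which is the cut-elimination theorem.
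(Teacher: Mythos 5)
Your proposal is correct and is exactly the paper's own proof: the corollary is obtained by instantiating Lemma~\ref{lm:comp} with $n=0$ and the empty candidate substitution, which is vacuously definitionally closed and acts as the identity on derivations and formulae. Your extra bookkeeping --- descending from the degenerate $n=0$ multicut $\Xi$ to $\Pi$ itself via \textbf{P2} (or, for parameter-headed types, via normalizability and Definition~\ref{def:norm}) --- is precisely what \textbf{CASE I} of the lemma's proof already establishes in the form $\Pi\Omega \in \RED_C[\Omega]$, so nothing further is needed.
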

\begin{proof}
  The proof follows from Lemma~\ref{lm:comp}, by setting $n=0$ and 
  $\Omega$ to the empty candidate substitution. \qed
\end{proof}
Since reducibility implies cut-elimination and since every cut-free
derivation can be turned into a $subst$-free derivation
(Lemma~\ref{lm:subst-elimination}), it follows that every proof can be
transformed into a cut-free and $subst$-free derivation.

\begin{corollary}
  \label{cor:cut-elimination}
  Given a fixed definition, a sequent has a derivation in $\Linc$ if
  and only if it has a cut-free and $subst$-free derivation.
\end{corollary}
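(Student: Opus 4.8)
The plan is to establish the two directions of the biconditional separately, observing that essentially all of the work has already been carried out in Lemma~\ref{lm:comp} and in the corollary immediately preceding this one; what remains is to assemble the existing lemmas in the correct order. The backward direction is immediate: a cut-free and $subst$-free derivation is, \emph{a fortiori}, a derivation in $\Linc$, so there is nothing to prove.

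For the forward direction I would start from an arbitrary derivation $\Pi$ of the sequent $\Seq{\Gamma}{C}$ in $\Linc$, which may contain occurrences of both $mc$ and $subst$. First I would invoke the preceding corollary (every derivation is reducible), which follows from Lemma~\ref{lm:comp} by taking $n = 0$ and $\Omega$ the empty candidate substitution, so that $\Pi \in \RED_C$. Next, by Lemma~\ref{lm:red-norm}, reducibility implies normalizability, hence $\Pi \in \NM$. Finally, Lemma~\ref{lm:norm-cut-free} produces a cut-free derivation of $\Seq{\Gamma}{C}$.

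The one subtlety to respect is that the normalization machinery eliminates the multicut rule $mc$, but the resulting cut-free derivation may still contain instances of $subst$. Indeed, such nodes are introduced by the $\eqL/\eqR$ essential cut-reduction case, which rewrites the redex to a derivation ending with $subst$. To conclude, I would apply Lemma~\ref{lm:subst-elimination}, which transforms any cut-free derivation employing $subst$ into a cut-free derivation free of $subst$. Composing these three steps delivers the desired cut-free and $subst$-free derivation, completing the forward direction.

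I do not expect a genuine obstacle here, since the hard reducibility argument is entirely encapsulated in Lemma~\ref{lm:comp}. The only point that requires care is the \emph{order} of the two elimination phases: Lemma~\ref{lm:subst-elimination} is stated only for cut-free derivations, so one must first discharge the multicut through the reducibility-and-normalizability route---which is designed to operate on derivations containing both $mc$ and $subst$---and only afterwards strip out $subst$. Performing the eliminations in the opposite order would not be directly licensed by the lemmas as formulated, so the proof must keep the two phases in this sequence.
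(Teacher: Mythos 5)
Your proposal is correct and follows exactly the paper's route: the paper discharges this corollary in one line by chaining the preceding corollary (every derivation is reducible, via Lemma~\ref{lm:comp} with $n=0$ and empty $\Omega$), Lemma~\ref{lm:red-norm}, Lemma~\ref{lm:norm-cut-free}, and finally Lemma~\ref{lm:subst-elimination}, in the same order you give and for the same reason (normalization can reintroduce $subst$ through the $\eqR/\eqL$ reduction, so $subst$-elimination must come last). One pedantic note: Lemma~\ref{lm:subst-elimination} is actually stated for both arbitrary and cut-free derivations --- the ``(cut-free)'' is parenthetical --- but this does not affect your argument, since the ordering you chose is forced anyway by the reintroduction of $subst$ during cut reduction.
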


The consistency of $\Linc$ is an immediate consequence of
cut-elimination.  By consistency we mean the following: given a fixed
definition and an arbitrary formula $C$, it is not the case
that both $C$ and $C\oimp \bot$ are provable.

\begin{corollary}
  \label{cor:consistency}
  The logic $\Linc$ is consistent.
\end{corollary}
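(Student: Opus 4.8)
The plan is to argue by contradiction using cut elimination. By the definition of consistency stated just above the corollary, it suffices to show that there is no formula $C$ for which both $\Seq{}{C}$ and $\Seq{}{C \oimp \bot}$ are provable in $\Linc$. So suppose, for some formula $C$, that there are derivations $\Pi_1$ of $\Seq{}{C}$ and $\Pi_2$ of $\Seq{}{C \oimp \bot}$. First I would combine them into a single derivation of $\Seq{}{\bot}$ that uses exactly one cut:
$$
\infer[\mc]{\Seq{}{\bot}}{
  \deduce{\Seq{}{C \oimp \bot}}{\Pi_2}
  &
  \infer[\oimpL]{\Seq{C \oimp \bot}{\bot}}{
    \deduce{\Seq{}{C}}{\Pi_1}
    &
    \infer[\botL]{\Seq{\bot}{\bot}}{}
  }
}
$$
Here the right premise of the $\mc$ is obtained by $\oimpL$ from the proof $\Pi_1$ of $C$ together with the axiom $\Seq{\bot}{\bot}$, the latter being an instance of $\botL$. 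Hence $\Seq{}{\bot}$ is derivable in $\Linc$.

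Next I would invoke Corollary~\ref{cor:cut-elimination}: since $\Seq{}{\bot}$ is derivable, it has a cut-free and $subst$-free derivation $\Pi$. The heart of the argument is then to show that no such $\Pi$ can exist, by inspecting which rules of Figure~\ref{fig:linc} could conclude the sequent $\Seq{}{\bot}$, whose antecedent is empty and whose succedent is $\bot$. The rule $init$ is excluded, since it requires the antecedent to contain the succedent formula. Every left rule, together with $\wL$ and $\cL$, acts on a formula occurring in the antecedent and so cannot apply when the antecedent is empty. The rule $\mc$ is excluded because $\Pi$ is cut-free, and $subst$ is excluded because $\Pi$ is $subst$-free (this is precisely where the $subst$-free strengthening is used, since the conclusion of $subst$ is an arbitrary sequent and could otherwise be $\Seq{}{\bot}$). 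Finally, none of the right-introduction rules $\topR$, $\landR$, $\lorR$, $\oimpR$, $\forallR$, $\existsR$, $\eqR$, $\indR$, $\indRP$, $\coindR$ has $\bot$ as its principal succedent formula. Thus no inference figure has a conclusion matching $\Seq{}{\bot}$, contradicting the existence of $\Pi$, and therefore no such $C$ can exist.

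The only genuine content of the proof is the appeal to cut elimination, which in turn rests on Lemma~\ref{lm:comp} and its corollaries; the remaining syntactic inspection is immediate once one reads off the rules. In particular, I expect the one point worth emphasising to be that $\Linc$ has \emph{no} right-introduction rule for $\bot$, so that an empty-antecedent sequent with succedent $\bot$ admits no cut-free proof. Consequently I anticipate the write-up to be very short, the substantive work having already been discharged by the cut-elimination theorem.
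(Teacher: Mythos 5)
Your proof is correct and takes essentially the same approach as the paper: the paper presents Corollary~\ref{cor:consistency} as an immediate consequence of Corollary~\ref{cor:cut-elimination} with no written proof, and your argument supplies exactly the intended details --- composing $\Pi_1$ and $\Pi_2$ via $\oimpL$ and a single $\mc$ into a derivation of $\Seq{}{\bot}$, passing to a cut-free and $subst$-free derivation, and checking that no rule of Figure~\ref{fig:linc} (in particular no right rule, since $\bot$ has no right introduction and cannot be the head of a definition) can conclude a sequent with empty antecedent and succedent $\bot$. Your remark that $subst$-freeness, not just cut-freeness, is needed to rule out the derived $subst$ rule is accurate and exactly why the paper proves Lemma~\ref{lm:subst-elimination} alongside cut elimination.
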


\section{Related work and conclusions}
\label{sec:lrel}
 
Of course, there is a long association between mathematical logic and
inductive definitions~\cite{Acz77} and in particular with
proof-theory, starting with the Takeuti's conjecture, the earliest
relevant entry for our purposes being Martin-L\"of's original
formulation of the theory of \emph{iterated inductive
  definitions}~\cite{martin-lof71sls}.  From the representation of
algebraic types~\cite{Bohm85} and the introduction of (co)inductive
types in system F~\cite{Mendler87,Geuvers92}, (co)induction/recursion
became mainstream and made it into type-theoretic proof assistants such as
Coq~\cite{PaulinMohring93}, first via a primitive recursive operator,
but eventually in the let-rec style of functional programming
languages, as in Gimenez's \emph{Calculus of Infinite
  Constructions}~\cite{Gim96phd}.  Unlike works in these
type-theoretic settings, we put less emphasis on proof terms and
strong normalization; in fact, our cut elimination procedure is
actually a form of weak normalization, in the sense that our procedure
only guarantees termination with respect to a particular strategy,
i.e, by reducing the lowest cuts in a derivation tree.  Our notion of
equality, which internalizes unification in its left introduction
rule, departs from the more traditional notion of equality.  As a
consequence of these differences, it is not at all obvious that strong
normalization proofs for term calculi with (co-)inductive types can be
adapted straightforwardly to our setting.


Baelde and Miller have recently introduced an extension of mulitplicative-additive
linear logic with least and greatest fixed points~\cite{baelde07lpar}, called $\mu$MALL. 
In that work, cut elimination is proved indirectly via a second-order 
encoding of the least and the greatest fixed point operators 
into higher-order linear logic and via an appeal to completeness of
focused proofs for higher-order linear logic.
Such an encoding can also be used for proving cut elimination
for $\Linc$, but as we noted earlier, our main concern here is
to provide a basis for cut elimination for (orthogonal) extensions
of $\Linc$ with the $\nabla$-quantifier, for which there are
currently no known encodings into higher-order (linear) logic.
Baelde has also given a direct cut-elimination proof for
$\mu$MALL~\cite{Baelde09CoRR}. The proof uses a notion of orthogonality in
the definition of reducibility, defined via classical negation, 
so it is not clear if it can be adapted straightforwardly to 
the intuitionistic setting like ours. 

Circular proofs are also connected with the proof-theory of
{fixed point logics} and {process
  calculi}~\cite{Santocanale02,Sprenger03}, as well as in traditional
sequent calculi such as in~\cite{BrotherstonS07}.  The issue is the
equivalence between systems with local vs.\ global induction, that is,
between fixed point rules vs.\ well-founded and guarded induction (\ie
circular proofs). In the traditional sequent calculus, it is unknown
whether every global inductive proof can be translated into a local
one.




In higher order logic (co)inductive definitions are usually obtained
via the Tarski set-theoretic fixed point construction, as realized for
example in Isabelle/HOL~\cite{Paulson97}.  As we mentioned before,
those approaches are at odd with HOAS even at the level of the
syntax. This issue has originated a research field in its own and we
only mention the main contenders: in the {Twelf}
system~\cite{Schurmann09}
the LF type theory is used to encode deductive systems as judgments
and to specify meta-theorems as relations (type families) among them;
a logic programming-like interpretation provides an operational
semantics to those relations, so that an external check for totality
(incorporating termination, well-modedness and
coverage~\cite{SchurmannP03,Pientka05}) verifies that the given
relation is indeed a realizer for that theorem. Coinduction is still
unaccounted for and may require a switch to a different operational
semantics for LF.  There exists a second approach to reasoning in LF
that is built on the idea of devising an explicit (meta-)meta-logic
($\Momega$) for reasoning (inductively) about the
framework~\cite{S00}. It can be seen as a constructive first-order
inductive type theory, whose quantifiers range over possibly open LF
objects.  In this calculus it is possible to express and inductively
prove meta-logical properties of an object level system.  $\Momega$
can be also seen as a dependently-typed functional programming
language, and as such it has been refined into the
\emph{Delphin} programming language~\cite{PosSch08}.  In a similar
vein \emph{Beluga}~\cite{Pientka08} is based on context modal
logic~\cite{NanevskiTOCL}, which provides a basis for a different
foundation for programming with HOAS and dependent types. Because all
of these systems are programming languages, we refrain from a deeper
discussion. We only note that systems like Delphin or Beluga separate
data from computations. This means they are always based on eager
evaluation, whereas co-recursive functions should be interpreted
lazily.  Using standard techniques such as \emph{thunks} to simulate
lazy evaluation in such a context seems problematic (Pientka, personal
communication).

\emph{Weak higher-order abstract syntax}~\cite{despeyroux94lpar} is an
approach that strives to co-exist with an inductive setting
.  The problem of negative occurrences in datatypes is handled
by replacing them with a new type.  Similarly for 
hypothetical judgments
, although  \emph{axioms} are needed to reason about
them, to mimic what is inferred by the cut rule in
our architecture.  Miculan \etal's framework~\cite{HonsellMS01}
embraces this \emph{axiomatic} approach extending Coq with the
``theory of contexts'' (ToC).  The theory includes axioms for the the
reification of key properties of names akin to
\emph{freshness}. Furthermore, higher-order induction and recursion
schemata on expressions are also assumed.
\emph{Hybrid}~\cite{Ambler02,Hybrid} is a $\lambda$-calculus on top of
Isabelle/HOL which provides the user with a \emph{Full} HOAS syntax,
compatible with a classical (co)-inductive setting. $\Linc$ improves
on the latter on several counts. First it disposes of Hybrid notion of
\emph{abstraction}, which is used to carve out the ``parametric''
function space from the full HOL function space.  Moreover it is not
restricted to second-order abstract syntax, as the current Hybrid
version is (and as ToC cannot escape from being). Finally, at higher
types, reasoning via $\eqL$ and fixed points is more powerful than
inversion, which does not exploit higher-order unification.

\emph{Nominal logic} gives a different foundation to programming and
reasoning with \emph{names}. It can be presented as a first-order
theory~\cite{pitts03ic}, which includes primitives for variable
renaming and freshness, and a (derived) ``new'' freshness
quantifier.  It is endowed of  natural principles of 
structural induction and recursion over
syntax~\cite{Pitts06}.  Urban \etal have engineered a \emph{nominal
  datatype package} inside Isabelle/HOL~\cite{Nominal}
analogous to the standard datatype package but defining equivalence
classes of term constructors. 
Co-induction/recursion on nominal datatypes is not available, but to
be fair it is also currently absent from Isabelle/HOL.



\medskip 

We have presented a proof theoretical treatment of both induction and
co-induction in a sequent calculus compatible with HOAS encodings. The
proof principle underlying the explicit proof rules is basically fixed
point (co)induction. However, the formulation of the rules is inspired
by a second-order encoding of least and greatest fixed points.  We
have developed a new cut elimination proof, radically different from
previous proofs (\cite{mcdowell00tcs,tiu04phd}), using a
reducibility-candidate technique \`a la Girard.

Consistency of the logic is an easy consequence of cut-elimination.
Our proof system is, as far as we know, the first which incorporates a
co-induction proof rule with a direct cut elimination proof. This
schema can be used as a springboard towards cut elimination procedures
for more expressive (conservative) extensions of $\Linc$, for example
in the direction of $\FOLNb$~\cite{miller05tocl}, or more recently,
the logic $LG^\omega$~\cite{Tiu07} by Tiu and the logic $\Gscr$ by
Gacek \etal~\cite{gacek08lics}.



An interesting problem is the connection with circular
proofs, which is particularly attractive from the viewpoint of
proof search, both inductively and co-inductively. This could be
realized by directly proving a cut-elimination result for a logic
where circular proofs, under termination and guardedness conditions
completely replace (co)inductive rules. Indeed, the question whether
``global'' proofs are equivalent to ``local'' proofs
\cite{BrotherstonS07} is still unsettled.

\medskip

\textbf{Acknowledgements} The $\Linc$ logic was developed in
collaboration with Dale Miller. We thank David Baelde for his comments
to a draft of this paper. 


\appendix

\section{The complete set of cut reduction rules}
\label{app:reduc}

\paragraph{Essential cases:}

  \begin{trivlist}
  \item[\fbox{$\landR/\landL$}] If $\Pi_1$ and $\Pi$ are
    \begin{displaymath}
      \infer[\landR]{\Seq{\Delta_1}{B_1' \land B_1''}}
      {\deduce{\Seq{\Delta_1}{B_1'}}
        {\Pi_1'}
	& \deduce{\Seq{\Delta_1}{B_1''}}
        {\Pi_1''}}
      \qquad\qquad
      \infer[\landL]{\Seq{B_1' \land B_1'',B_2,\ldots,B_n,\Gamma}{C}}
      {\deduce{\Seq{B_1',B_2,\ldots,B_n,\Gamma}{C}}
        {\Pi'}}
      \enspace ,
    \end{displaymath}
    then $\Xi$ reduces to $mc(\Pi_1',\Pi_2,\ldots,\Pi_n,\Pi'$.
    The case for the other $\landL$ rule is symmetric.

  \item[\fbox{$\lorR/\lorL$}] Suppose $\Pi_1$ and $\Pi$ are
    \begin{displaymath}
      \infer[\lorR]{\Seq{\Delta_1}{B_1' \lor B_1''}}
      {\deduce{\Seq{\Delta_1}{B_1'}}
        {\Pi_1'}}
      \quad
      \infer[\lorL]{\Seq{B_1' \lor B_1'',B_2,\ldots,B_n,\Gamma}{C}}
      {\deduce{\Seq{B_1',B_2,\ldots,B_n,\Gamma}{C}}
        {\Pi'}
	& \deduce{\Seq{B_1'',B_2,\ldots,B_n,\Gamma}{C}}
        {\Pi''}}      
    \end{displaymath}
    Then $\Xi$ reduces to $mc(\Pi_1',\Pi_2,\ldots,\Pi'$.
    The case for the other $\lorR$ rule is symmetric.

  \item[\fbox{$\oimpR/\oimpL$}] Suppose $\Pi_1$ and $\Pi$ are
    \begin{displaymath}
      \infer[\oimpR]{\Seq{\Delta_1}{B_1' \oimp B_1''}}
      {\deduce{\Seq{B_1',\Delta_1}{B_1''}}
        {\Pi_1'}}
      \qquad
      \infer[\oimpL]{\Seq{B_1' \oimp B_1'',B_2,\ldots,B_n,\Gamma}{C}}
      {\deduce{\Seq{B_2,\ldots,B_n,\Gamma}{B_1'}}
        {\Pi'}
	& \deduce{\Seq{B_1'',B_2,\ldots,B_n,\Gamma}{C}}
        {\Pi''}}
      \enspace 
    \end{displaymath}
    Let $\Xi_1 = mc(mc(\Pi_2,\ldots,\Pi_n,\Pi'),\Pi_1'$. 
    Then $\Xi$ reduces to \settowidth{\infwidthi}
    {$\Seq{\Delta_1,\ldots,\Delta_n,\Gamma,\Delta_2,\ldots,\Delta_n,\Gamma}{C}$}
    \begin{displaymath}
      \infer=[\cL]
      {\Seq{\Delta_1,\ldots,\Delta_n,\Gamma}{C}}
      {
        \infer[\mc]
        {\Seq{\Delta_1,\ldots,\Delta_n,\Gamma, \Delta_2,\ldots,\Delta_n,\Gamma}{C}}
        {
          \raisebox{-2.5ex}{\deduce{\Seq{\ldots}{B_1''}}{\Xi_1}}
          & 
          \left\{\raisebox{-1.5ex}{\deduce{\Seq{\Delta_i}{B_i}}{\Pi_i}}\right\}_{i \in \{2..n\}}
          & \raisebox{-2.5ex}{\deduce{\Seq{B_1'',\{B_i\}_{i \in \{2..n\}},\Gamma}{C}}{\Pi''}}
        }
      }
      \enspace 
    \end{displaymath}

  \item[\fbox{$\forallR/\forallL$}] If $\Pi_1$ and $\Pi$ are
    \begin{displaymath}
      \infer[\forallR]{\Seq{\Delta_1}{\forall x.B_1'}}
      {\deduce{\Seq{\Delta_1}{B_1'[y/x]}}
        {\Pi_1'}}
      \qquad\qquad\qquad
      \infer[\forallL]{\Seq{\forall x.B_1',B_2,\ldots,B_n,\Gamma}{C}}
      {\deduce{\Seq{B_1'[t/x],B_2,\ldots,B_n,\Gamma}{C}}
        {\Pi'}}
      \enspace ,
    \end{displaymath}
    then $\Xi$ reduces to $mc(\Pi_1'[t/y],\Pi_2,\ldots,\Pi_n,\Pi'$.

  \item[\fbox{$\existsR/\existsL$}] If $\Pi_1$ and $\Pi$ are
    \begin{displaymath}
      \infer[\existsR]{\Seq{\Delta_1}{\exists x.B_1'}}
      {\deduce{\Seq{\Delta_1}{B_1'[t/x]}}
        {\Pi_1'}}
      \qquad\qquad\qquad
      \infer[\existsL]{\Seq{\exists x.B_1',B_2,\ldots,
          B_n,\Gamma}{C}}
      {\deduce{\Seq{B_1'[y/x],B_2,\ldots,B_n,
	    \Gamma}{C}}
        {\Pi'}}
      \enspace ,
    \end{displaymath}
    then $\Xi$ reduces to $mc(\Pi_1',\Pi_2,\ldots,\Pi'[t/y]$.

  \item[\fbox{$\indR/\indL$}] 
Suppose $\Pi_1$ and $\Pi$ are, respectively,  
$$
\infer[\indR]
{\Seq {\Delta_1}{p\,\vec t}}
{
 \deduce{\Seq{\Delta_1}{D\,X^p\,\vec t}}{\Pi_1'}
}
\qquad
\infer[\indL] {\Seq{p\,\vec{t}, B_2,\dots,B_n,\Gamma}{C}} {
  \deduce{\Seq{D\,S\,\vec{y}}{S\,\vec{y}}}{\Pi_S} &
  \deduce{\Seq{S\,\vec{t}, B_2,\dots,B_n, \Gamma} {C}}{\Pi'} }
$$
where $p\,\vec{x} \defmu D\,p\,\vec{x}$ and $X^p$ is a new
parameter.  Then $\Xi$ reduces to
$$
mc(mc(\Pi_1'p[(\Pi_S,S)/X^p], \Pi_S[\vec t/\vec y]), \Pi_2,\ldots,\Pi_n,\Pi').
$$


\item[\fbox{$\coindR/\coindL$}] Suppose $\Pi_1$ and $\Pi$ are 
$$
\infer[\coindR] 
{\Seq{\Delta_1}{p\,\vec{t}}} 
{
  \deduce{\Seq{\Delta_1}{S\,\vec{t}}}{\Pi_1'} &
  \deduce{\Seq{S\,\vec{y}}{D\,S\,\vec{y}}}{\Pi_S} 
} 
\qquad \qquad
\infer[\coindL] 
{\Seq{p\,\vec{t}, \dots, \Gamma}{C}}
{
  \deduce{\Seq{D\,X^p\,\vec{t},\dots, \Gamma}{C}}{\Pi'}
}
$$
where $p\,\vec y \defnu D\,p\,\vec y$ and $X^p$ is a new parameter.
Then $\Xi$ reduces to
$$
mc(mc(\Pi_1', \Pi_S[\vec t/\vec y]), \Pi_2,\ldots,\Pi_n,\Pi'[(\Pi_S,S)/X^p]).
$$


\item[\fbox{$\eqR/\eqL$}] 
Suppose $\Pi_1$ and $\Pi$ are
  \begin{displaymath}
    \infer[\eqR]{\Seq{\Delta_1}{s = t}}
    {}
    \qquad\qquad\qquad
    \infer[\eqL]{\Seq{s=t,B_2,\ldots,B_n,\Gamma}{C}}
    {\left\{\raisebox{-1.5ex}
        {\deduce{\Seq{B_2\rho,\ldots,B_n\rho,\Gamma\rho}
            {C\rho}}
          {\Pi^\rho}}
      \right\}_\rho}
    \enspace 
  \end{displaymath}
  Note that in this case, $\rho$ in $\Pi$ ranges over all
  substitution, as any substitution is a unifier of $s$ and $t$.  Let
  $\Xi_1$ be the derivation
  $mc(\Pi_2,\ldots,\Pi_n,subst(\{\Pi^\rho\}_\rho)$. Then $\Xi$
  reduces to
  $$
  \infer=[\wL]
  {\Seq{\Delta_1,\Delta_2,\ldots,\Delta_n,\Gamma}{C}}
  {\deduce{\Seq{\Delta_2,\ldots,\Delta_n,\Gamma}{C}}{\Xi_1}}
  $$


\end{trivlist}

\paragraph{Left-commutative cases:}

In the following cases, we suppose that $\Pi$ ends with a left rule,
other than $\{\cL, \wL\}$, acting on $B_1$.

\begin{trivlist}

\item[\fbox{$\bulletL/\circL$}] Suppose $\Pi_1$ is as below left, 
  where $\bulletL$ is any left rule except $\oimpL$, $\eqL$, or
  $\indL$.  Let $\Xi^i = mc(\Pi_1^i,\Pi_2,\ldots,\Pi_n,\Pi$.
  Then $\Xi$ reduces to the derivation given below right. 
  \begin{displaymath}
    \infer[\bulletL]
          {\Seq{\Delta_1}{B_1}}
          {\left\{\raisebox{-1.5ex}{\deduce{\Seq{\Delta_1^i}{B_1}}
              {\Pi_1^i}}\right\}_i
          }
    \qquad
    \infer[\bulletL]
          {\Seq{\Delta_1,\Delta_2,\ldots,\Delta_n,\Gamma}{C}}
          {
            \left\{
            \raisebox{-1.3ex}{
              \deduce{\Seq{\Delta_1^i,\Delta_2,\ldots,\Delta_n,\Gamma}{C}}{\Xi^i}
            }
            \right\}_i
          }
  \end{displaymath}

\item[\fbox{$\oimpL/\circL$}] Suppose $\Pi_1$ is
  \begin{displaymath}
    \infer[\oimpL]{\Seq{D_1' \oimp D_1'',\Delta_1'}{B_1}}
    {\deduce{\Seq{\Delta_1'}{D_1'}}
      {\Pi_1'}
      & \deduce{\Seq{D_1'',\Delta_1'}{B_1}}
      {\Pi_1''}}
    \enspace 
  \end{displaymath}
  Let $\Xi_1 = mc(\Pi_1'',\Pi_2,\ldots,\Pi_n, \Pi$. 
  Then $\Xi$ reduces to
  \begin{displaymath}
    \infer[\oimpL]
    {\Seq{D_1' \oimp D_1'',\Delta_1',\Delta_2,\ldots,\Delta_n,\Gamma}{C}}
    {
      \infer=[\wL]
      {\Seq{\Delta_1',\Delta_2,\ldots,\Delta_n,\Gamma}{D_1'}}
      {\deduce{\Seq{\Delta_1'}{D_1'}}{\Pi_1'} }
      & 
      \deduce{\Seq{D_1'',\Delta_1',\Delta_2,\ldots,\Delta_n,\Gamma}{C}}{\Xi_1}
    }
    \enspace 
  \end{displaymath}

\item[\fbox{$\indL/\circL$}] Suppose $\Pi_1$ is 
$$
\infer[\indL] {\Seq{p\,\vec{t}, \Delta_1'}{B_1}} {
  \deduce{\Seq{D\,S\,\vec{y}}{S\,\vec{y}}}{\Pi_S} &
  \deduce{\Seq{S\,\vec{t}, \Delta_1'}{B_1}}{\Pi_1'} }
$$      
where $p\,\vec{y} \defmu D\,p\,\vec{y}$.  
Let $\Xi_1 = mc(\Pi_1',\Pi_2,\ldots,\Pi_n,\Pi$.
Then $\Xi$ reduces to
$$
\infer[\indL] {\Seq{p\,\vec{t}, \Delta_1',\dots,\Delta_n}{C}} {
  \deduce{\Seq{D\,S\,\vec{y}}{S\,\vec{y}}}{\Pi_S} &
  \deduce{\Seq{S\,\vec{t}, \Delta_1',\dots,\Delta_n,\Gamma}{C}}{\Xi_1}
}
$$

\item[\fbox{$\eqL/\circL$}] Suppose $\Pi_1$ is as below left. 
  Let $\Xi^\rho = mc(\Pi_1^\rho, \Pi_2\rho,\ldots,\Pi_n\rho, \Pi\rho$.
  Then $\Xi$ reduces to the derivation given below right. 
  \begin{displaymath}
    \infer[\eqL .]{\Seq{s=t,\Delta_1'}{B_1}}
    {\left\{\raisebox{-1.5ex}
        {\deduce{\Seq{\Delta_1'\rho}{B_1\rho}}
          {\Pi_1^{\rho}}}
      \right\}}
    \qquad
    \infer[\eqL .]
        {\Seq{s=t,\Delta_1',\Delta_2,\ldots,\Delta_n,\Gamma}{C}}
        {
          \left\{
          \raisebox{-1.3ex}{
          \deduce{\Seq{\Delta_1'\rho,\Delta_2\rho,\ldots,\Delta_n\rho,\Gamma\rho}{C\rho}}{\Xi^\rho}
          }
          \right\}
        }
  \end{displaymath}


\item[\fbox{$subst/\circL$}] Suppose $\Pi_1$ is $subst(\{\Pi_1^\rho\}_\rho$.
  Then $\Xi$ reduces to $$subst(\{mc(\Pi_1^\rho,\Pi_2\rho,\ldots,\Pi_n\rho,\Pi\rho)\}_\rho$$

\end{trivlist}

\paragraph{Right-commutative cases:}

\begin{trivlist}

\item[\fbox{$-/\circL$}] Suppose $\Pi$ is as given below left, where 
  where $\circL$ is any left rule other than $\oimpL$, $\eqL$, or
  $\indL$ acting on a formula other than $B_1, \ldots, B_n$.
  Let $\Xi^i = mc(\Pi_1,\ldots,\Pi_n,\Pi^i$.
  Then $\Xi$ reduces to the derivation given below right. 
  \begin{displaymath}
    \infer[\circL]{\Seq{B_1,\ldots,B_n,\Gamma}{C}}
    {\left\{\raisebox{-1.5ex}{\deduce{\Seq{B_1,\ldots,B_n,\Gamma^i}{C}}
          {\Pi^i}}\right\}_i}
    \qquad
    \infer[\circL]
          {\Seq{\Delta_1,\ldots,\Delta_n,\Gamma}{C}}
          {
            \left\{
            \raisebox{-1.3ex}{
              \deduce{\Seq{\Delta_1,\ldots,\Delta_n,\Gamma^i}{C}}{\Xi^i}
            }
            \right\}_i
          }
  \end{displaymath}

\item[\fbox{$-/\oimpL$}] Suppose $\Pi$ is
  \begin{displaymath}
    \infer[\oimpL]{\Seq{B_1,\ldots,B_n, D' \oimp D'',\Gamma'}{C}}
    {\deduce{\Seq{B_1,\ldots,B_n,\Gamma'}{D'}}
      {\Pi'}
      & \deduce{\Seq{B_1,\ldots,B_n,D'',\Gamma'}{C}}
      {\Pi''}}
    \enspace 
  \end{displaymath}
  Let $\Xi_1 = mc(\Pi_1,\ldots,\Pi_n,\Pi')$ and let
  $\Xi_2 = mc(\Pi_1,\ldots,\Pi_n,\Pi''$.
  Then $\Xi$ reduces to
  \begin{displaymath}
    \infer[\oimpL]{\Seq{\Delta_1,\ldots,\Delta_n,D' \oimp D'',\Gamma'}{C}}
    {\deduce{\Seq{\Delta_1,\ldots,\Delta_n,\Gamma'}{D'}}
      {\Xi_1}
      & \deduce{\Seq{\Delta_1,\ldots,\Delta_n,D'',\Gamma'}{C}}
      {\Xi_2}}
    \enspace 
  \end{displaymath}


\item[\fbox{$-/\indL$}] Suppose $\Pi$ is 
$$
\infer[\indL] {\Seq{B_1,\dots,B_n, p\,\vec{t},\Gamma'} {C}} {
  \deduce{\Seq{D\,S\,\vec{y}}{S\,\vec{y}}}{\Pi_S} &
  \deduce{\Seq{B_1,\dots,B_n, S\,\vec{t}, \Gamma'}{C}}{\Pi'} }
\enspace ,
$$      
where $p\,\vec{y} \defmu D\,p\,\vec{y}$.  
Let $\Xi_1 = mc(\Pi_1,\ldots,\Pi_n,\Pi'$. 
Then $\Xi$ reduces to
$$
\infer[\indL] {\Seq{\Delta_1,\dots,\Delta_n, p\,\vec{t},\Gamma'}{C}} {
  \deduce{\Seq{D\,S\,\vec{y}}{S\,\vec{y}}}{\Pi_S} &
  \deduce{\Seq{\Delta_1,\dots,\Delta_n, S\,\vec{t}, \Gamma'}{C}}{\Xi_1}
} \enspace 
$$

\item[\fbox{$-/\eqL$}] Suppose $\Pi$ is as shown below left.
Let $\Xi^\rho = mc(\Pi_1\rho,\ldots,\Pi_n\rho,\Pi^\rho$.
Then $\Xi$ reduces to the derivation below right.
  \begin{displaymath}
    \infer[\eqL]{\Seq{B_1,\ldots,B_n,s=t,\Gamma'}{C}}
    {\left\{\raisebox{-1.5ex}
        {\deduce{\Seq{B_1\rho,\ldots, B_n\rho,\Gamma'\rho}{C\rho}}
          {\Pi^{\rho}}}\right\}}
    \qquad
    \infer[\eqL]
    {\Seq{\Delta_1,\ldots,\Delta_n,s=t,\Gamma'}{C}}
    {
      \left\{
      \raisebox{-1.3ex}
      {
        \deduce{\Seq{\Delta_1\rho,\ldots,\Delta_n\rho,\Gamma'\rho}{C\rho}}{\Xi^\rho}
      }
      \right\}
    }
  \end{displaymath}

\item[\fbox{$-/ subst$}] If $\Pi = subst(\{\Pi^\rho\}_\rho)$ 
then $\Xi$ reduces to $subst(\{mc(\Pi_1\rho,\ldots,\Pi_n\rho,\Pi^\rho)\}_\rho$.

\item[\fbox{$-/\circR$}] If $\Pi$ is as below left, where 
  where $\circR$ is any right rule except $\coindR$, then $\Xi$ reduces
  to the derivation below right, where $\Xi^i = mc(\Pi_1,\ldots,\Pi_n,\Pi^i$. 
  \begin{displaymath}
    \infer[\circR]{\Seq{B_1,\ldots,B_n,\Gamma}{C}}
    {\left\{\raisebox{-1.5ex}{
          \deduce{\Seq{B_1,\ldots,B_n,\Gamma^i}{C^i}}
          {\Pi^i}}\right\}_i}
    \qquad
    \infer[\circR]
          {\Seq{\Delta_1,\ldots,\Delta_n,\Gamma}{C}}
          {
            \left\{
            \raisebox{-1.3ex}{
              \deduce{\Seq{\Delta_1,\ldots,\Delta_n,\Gamma^i}{C^i}}{\Xi^i}
              }
            \right\}_i
          }
  \end{displaymath}

\item[\fbox{$-/\coindR$}] Suppose $\Pi$ is 
$$
\infer[\coindR] {\Seq{B_1,\dots,B_n,\Gamma}{p\,\vec{t}}} {
  \deduce{\Seq{B_1,\dots,B_n,\Gamma} {S\,\vec{t}}}{\Pi'} &
  \deduce{\Seq{S\,\vec{y}}{D\,S\,\vec{y}}}{\Pi_S} } \enspace ,
$$
where $p\,\vec{y} \defnu D\,p\,\vec{y}$.  
Let $\Xi_1 = mc(\Pi_1,\ldots,\Pi_n,\Pi'$. 
Then $\Xi$ reduces to
$$
\infer[\coindR] {\Seq{\Delta_1,\dots,\Delta_n,\Gamma}{p\,\vec{t}}} {
  \deduce{\Seq{\Delta_1,\dots,\Delta_n,\Gamma} {S\,\vec{t}}}{\Xi_1} &
  \deduce{\Seq{S\,\vec{y}}{D\,S\,\vec{y}}}{\Pi_S} } \enspace
$$
\end{trivlist}

\paragraph{Multicut cases:}

\begin{trivlist}

\item[\fbox{$\mc/\circL$}] If $\Pi$ ends with a left rule, other than $\cL$ and
  $\wL$, acting on $B_1$ and $\Pi_1$ ends with a multicut
  and reduces to $\Pi_1'$, then $\Xi$ reduces to $mc(\Pi_1',\Pi_2,\ldots,\Pi_n,\Pi$.

\item[\fbox{$-/\mc$}] Suppose $\Pi$ is
  \begin{displaymath}
    \infer[\mc]{\Seq{B_1,\ldots,B_n,\Gamma^1,\ldots,\Gamma^m,\Gamma'}{C}}
    {\left\{\raisebox{-1.5ex}{\deduce{\Seq{\{B_i\}_{i \in I^j},\Gamma^j}{D^j}}
          {\Pi^j}}\right\}_{j \in \{1..m\}}
      & \raisebox{-2.5ex}{\deduce{\Seq{\{D^j\}_{j \in \{1..m\}},\{B_i\}_{i \in I'},\Gamma'}{C}}
        {\Pi'}}}
    \enspace ,
  \end{displaymath}
  where $I^1,\ldots,I^m,I'$ partition the formulas $\{B_i\}_{i \in
    \{1..n\}}$ among the premise derivations $\Pi_1$, \ldots,
  $\Pi_m$,$\Pi'$.  For $1 \leq j \leq m$ let $\Xi^j$ be
  \begin{displaymath}
    \infer[\mc]{\Seq{\{\Delta_i\}_{i \in I^j},\Gamma^j}{D^j}}
    {\left\{\raisebox{-1.5ex}{\deduce{\Seq{\Delta_i}{B_i}}
          {\Pi_i}}\right\}_{i \in I^j}
      & \raisebox{-2.5ex}{\deduce{\Seq{\{B_i\}_{i \in I^j},\Gamma^j}{D^j}}
        {\Pi^j}}}
    \enspace 
  \end{displaymath}
  Then $\Xi$ reduces to
  \begin{displaymath}
    \infer[\mc]{\Seq{\Delta_1,\ldots,\Delta_n,\Gamma^1,\ldots\Gamma^m,\Gamma'}{C}}
    {\left\{\raisebox{-1.5ex}{\deduce{\Seq{\ldots}{D^j}}
          {\Xi^j}}\right\}_{j \in \{1..m\}}
      & \left\{\raisebox{-1.5ex}{\deduce{\Seq{\Delta_i}{B_i}}
          {\Pi_i}}\right\}_{i \in I'}
      & \raisebox{-2.5ex}{\deduce{\Seq{\ldots}{C}}
        {\Pi'}}}
    \enspace 
  \end{displaymath}

\end{trivlist}

\paragraph{Structural cases:}

\begin{trivlist}
\item[\fbox{$-/\cL$}] If $\Pi$ is as shown below left, then
$\Xi$ reduces to the derivation shown below right, 
where $\Xi_1 = mc(\Pi_1,\Pi_1,\Pi_2,\ldots,\Pi_n,\Pi')$. 
  \begin{displaymath}
    \infer[\cL]{\Seq{B_1,B_2,\ldots,B_n,\Gamma}{C}}
    {\deduce{\Seq{B_1,B_1,B_2,\ldots,B_n,\Gamma}{C}}
      {\Pi'}}
    \qquad
    \infer=[\cL]
    {\Seq{\Delta_1,\Delta_2,\ldots,\Delta_n,\Gamma}{C}}
    {
      \deduce{\Seq{\Delta_1,\Delta_1,\Delta_2,\ldots,\Delta_n,\Delta_n,\Gamma}{C}}{\Xi_1}
    }
  \end{displaymath}

\item[\fbox{$-/\wL$}] If $\Pi$ is as shown below left,
then $\Xi$ reduces to the derivation shown below right, where
$\Xi_1 = mc(\Pi_2,\ldots,\Pi_n,\Pi'$.
  \begin{displaymath}
    \infer[\wL]{\Seq{B_1,B_2,\ldots,B_n,\Gamma}{C}}
    {\deduce{\Seq{B_2,\ldots,B_n,\Gamma}{C}}
      {\Pi'}}
    \qquad
    \infer[\wL]
    {\Seq{\Delta_1,\Delta_2,\ldots,\Delta_n,\Gamma}{C}}
    {
      \deduce{\Seq{\Delta_2,\ldots,\Delta_n,\Gamma}{C}}{\Xi_1}
    }
  \end{displaymath}

\end{trivlist}

\paragraph{Axiom cases:}

\begin{trivlist}
\item[\fbox{$\init/\circL$}] Suppose $\Pi$ ends with a left-rule acting on
  $B_1$ and $\Pi_1$ ends with the $\init$ rule. Then it must be the
  case that $\Delta_1 = \{B_1\}$ and $\Xi$ reduces to
  $mc(\Pi_2,\ldots,\Pi_n,\Pi$.

\item[\fbox{$-/\init$}] If $\Pi$ ends with the $\init$ rule, then $n = 1$,
  $\Gamma$ is the empty multiset, and $C$ must be a cut formula, i.e.,
  $C = B_1$. Therefore $\Xi$ reduces to $\Pi_1$.
\end{trivlist}

\section{Proofs for Section~\ref{sec:reduc} and Section~\ref{sec:red}}
\label{app:red}

\begin{lemmacp}{lm:reduct_subst}
  Let $\Pi$ be a derivation ending with a $\mc$
  and let $\theta$ be a substitution.  If $\Pi\theta$ reduces to $\Xi$
  then there exists a derivation $\Pi'$ such that $\Xi = \Pi'\theta$
  and $\Pi$ reduces to $\Pi'$.
\end{lemmacp}
\begin{proof}
  Observe that the redexes of a derivation are not affected by
  eigenvariable substitution, since the cut reduction rules are
  determined by the last rules of the premise derivations, which are
  not changed by substitution. Therefore, any cut reduction rule that
  is applied to $\Pi\theta$ to get $\Xi$ can also be applied to
  $\Pi$. Suppose that $\Pi'$ is the reduct of $\Pi$ obtained this
  way. In all cases, except for the cases where the reduction rule
  applied is either $\indR/\indL$, $\coindL/ \coindR$, or those
  involving $\eqL$, it is a matter of routine to check that
  $\Pi'\theta = \Xi$. For the reduction rules $\indR/\indL$ and
  $\coindL/ \coindR$, we need Lemma~\ref{lm:param subst} which shows
  that eigenvariable substitution commutes with parameter
  substitution.  We show here the case involving $\eqL$. The only
  interesting case is the reduction $\eqL/\eqR$. For simplicity, we
  show the case where $\Pi$ ends with $mc$ with three premises; it is
  straightforward to adapt the following analysis to the more general
  case.  So suppose $\Pi$ is the derivation:
$$
\infer[mc]
{\Seq{\Delta_1, \Delta_2, \Gamma}{C} }
{
\infer[\eqR]
{\Seq{\Delta_1}{t = t}}{}
&
\deduce{\Seq{\Delta_2}{B}}{\Pi_2}
& 
\infer[\eqL]
{\Seq{t=t, B, \Gamma}{C}}
{
\left\{
\raisebox{-1.5ex}{ \deduce{\Seq{B\rho, \Gamma\rho}{C\rho}}{\Pi^\rho} }
\right\}_\rho
}
}
$$
According to Definition~\ref{def:subst}, the derivation $\Pi\theta$ is 
$$
\infer[mc]
{\Seq{\Delta_1\theta,\Delta_2\theta, \Gamma\theta}{C\theta} }
{
\infer[\eqR]
{\Seq{\Delta_1\theta}{t\theta = t\theta}}{}
& 
\deduce{\Seq{\Delta_2\theta}{B\theta}}{\Pi_2\theta}
&
\infer[\eqL]
{\Seq{t\theta=t\theta, B\theta, \Gamma\theta}{C\theta}}
{
\left\{
\raisebox{-1.5ex}{ \deduce{\Seq{B\theta\rho', \Gamma\theta\rho'}{C\theta\rho'}}{\Pi^{(\theta\circ \rho')} }}
\right\}_{\rho'}
}
}
$$
Let $\Psi = mc(\Pi_2\theta, subst(\{\Pi^{(\theta\circ \rho)} \}_\rho)$.
The reduct of $\Pi\theta$ in this case (modulo the different
order in which the weakening steps are applied) is:  
$$
\infer=[\wL]
{\Seq{\Delta_1\theta, \Delta_2\theta, \Gamma\theta}{C\theta}}
{
\deduce{\Seq{\Delta_2\theta,\Gamma\theta}{C\theta}}{\Psi}
}
$$
Let us call this derivation $\Xi$.

Let $\Psi' = mc(\Pi_2, subst(\{\Pi^\rho\}_\rho)$. 
The above reduct can be matched by the following reduct of $\Pi$ (using the same
order of applications of the weakening steps): 
$$
\infer=[\wL]
{\Seq{\Delta_1, \Delta_2, \Gamma}{C}}
{
\deduce{\Seq{\Delta_2,\Gamma}{C}}{\Psi'}
}
$$
Let us call this derivation $\Pi'$. 
By Definition~\ref{def:subst}, we have $\Psi' = \Psi\theta$, and
obviously, also $\Xi = \Pi'\theta$.
\qed
\end{proof}

\begin{lemmacp}{lm:red-norm}
If $\Pi \in \RED_C[\Omega]$ then $\Pi$ is normalizable.
\end{lemmacp}
\begin{proof}
By case analysis on $C$.  
If $C = X^p\,\vec u$ for some $\vec u$ and $X^p \in supp(\Omega)$
then 
$\Pi \in \Rscr$, where $\Omega(X^p) = (\Rscr, \Pi_S, S)$, 
hence it is normalizable by Definition~\ref{def:candidates}
(specifically, condition {\bf CR1}). 
Otherwise, $\Pi$ is normalizable by Definition~\ref{def:param red}. \qed
\end{proof}


\begin{lemmacp}{lm:red-subst}
  If $\Pi \in \RED_C[\Omega]$ then for every substitution $\rho$, 
  $\Pi\rho \in \RED_{C\rho}[\Omega]$. 
\end{lemmacp}
\begin{proof}
By induction on $|C|$ with sub-induction on $nd(\Pi)$. 

Suppose $C = X^q\,\vec u$, for some $\vec u$ and some
$X^q \in supp(\Omega)$, and suppose $\Omega(X^q) = (\Rscr, \Pi_S, S)$. 
Then $\Pi \in \Rscr$ by Definition~\ref{def:param red}. 
By Definition~\ref{def:candidates} ({\bf CR0}) we also have
$\Pi\rho \in \Rscr$. 
Otherwise, suppose $X^q \not \in supp(\Omega)$. 
Then $\Pi \in \NM_{X^q}$ by Definition~\ref{def:param red}.
By Lemma~\ref{lm:subst-norm}, we have $\Pi\rho \in \NM_{X^q}$,
therefore $\Pi\rho \in \RED_{C\rho}[\Omega]$. 

Otherwise, $C \not = X^q \,\vec u$ for any $\vec u$ and any parameter
$X^q$.  In this case, to apply the inner induction hypothesis, we need
to show that $\Pi\rho$ is normalizable, which follows immediately from
Lemma~\ref{lm:red-norm} and Lemma~\ref{lm:subst-norm}. We distinguish
several cases based on the last rule of $\Pi$:

\begin{itemize}
\item Suppose $\Pi$ ends with $mc$, i.e., 
$\Pi = mc(\Pi_1,\ldots,\Pi_n,\Pi')$ for some $\Pi_1,\ldots,\Pi_n$
and $\Pi'$. 
By Lemma~\ref{lm:reduct_subst}, every reduct of $\Pi\rho$, say $\Xi$, 
is the result of applying $\rho$ to a reduct of $\Pi$.  
By the inner induction hypothesis (on
the normalization degree), every reduct of $\Pi\rho$ 
is in $\RED_{C\rho}[\Omega]$, and therefore $\Pi\rho$ is 
also in $\RED_{C\rho}[\Omega]$ by Definition~\ref{def:param red} ({\bf P2}).

\item Suppose $\Pi$ ends with $\oimpR$, with the premise derivation
$\Pi'$. In this case, $C = B \oimp D$ for some $B$ and $D$.
Since $\Pi \in \RED_C[\Omega]$, we have that (\textbf{P3})
\begin{equation}
\label{eq:param1}
\Pi'\theta \in (\RED_{B\theta}[\Omega] \Rightarrow \RED_{D\theta}[\Omega])
\end{equation}
for every $\theta$. 
We need to show that 
$\Pi'\rho\delta \in (\RED_{B\rho\delta}[\Omega] \Rightarrow
\RED_{D\rho\delta}[\Omega])$ 
for every $\delta$. Note that by Lemma~\ref{lm:subst-drv-comp},
$\Pi'\rho\delta = \Pi'(\rho \circ \delta)$, so 
this is just an instance of Statement~\ref{eq:param1} above.

\item $\Pi$ ends with $\indR$ or $\coindR$: This follows from Definition~\ref{def:param red}
and the fact that reducibility candidates are closed under substitution
(condition {\bf CR0} in Definition~\ref{def:candidates}). In the case where 
$\Pi$ ends with $\indR$, we also need the fact
that eigenvariable substitution commutes with parameter substitution
(Lemma~\ref{lm:param subst commutes}). In the case where $\Pi$ ends 
with $\coindR$, to establish $\Pi\rho \in \RED_{C\rho}[\Omega]$,
we can use the same reducibility candidate which is used to establish
$\Pi \in \RED_C[\Omega]$.

\item $\Pi$ ends with a rule other than $mc$, $\oimpR$, $\indR$ or $\coindR$:
This case follows straightforwardly from the induction hypothesis.
\end{itemize}
\qed
\end{proof}

\begin{lemmacp}{lm:red vacuous}
Let $\Omega = [\Omega', (\Rscr, \Pi_S, S)/ X^p]$.
Let $C$ be a formula such that $X^p \# C$. 
Then for every $\Pi$,  $\Pi \in \RED_C[\Omega]$ if and only if
$\Pi \in \RED_C[\Omega']$. 
\end{lemmacp}
\begin{proof}
By induction on $|C|$ with sub-induction on $nd(\Pi)$. 

Suppose $C = Y^q\,\vec u$ for some $Y^q \in supp(\Omega)$ 
and suppose $\Omega(Y^q) = (\Rscr', \Pi_I, I)$. Since $X^p \# C$,
this means that $Y^q \in supp(\Omega')$ and $\Omega'(Y^q) = \Omega(Y^q)$. 
Then obviously, $\Pi \in \RED_C[\Omega]$ iff $\Pi \in \RED_C[\Omega']$. 
If $Y^q \not \in supp(\Omega)$, then
obviously $\RED_C[\Omega] = \NM_{Y^q}\,\vec u = \RED_C[\Omega']$. 

Otherwise, suppose $C \not = Y^q\,\vec u$, and  $\Pi \in \RED_C[\Omega]$. 
The latter implies that $\Pi$ is normalizable.
We show, by induction on $nd(\Pi)$ that $\Pi \in \RED_C[\Omega']$. 
In most cases, this follows straightforwardly from the induction
hypothesis. We show the interesting cases here:
\begin{itemize}
\item Suppose $\Pi$ ends with $\oimpR$, i.e., $C = B \oimp D$ for some $B$ and $D$
and $\Pi$ is of the form:
$$
\infer[\oimpR]
{\Seq{\Gamma}{B\Omega \oimp D\Omega}}
{
 \deduce{\Seq{\Gamma, B\Omega}{D\Omega}}{\Pi'}
}
$$
Note that since $X^p \# C$, we have that $B\Omega=B\Omega'$ and
$D\Omega = D\Omega'$. 
Since $\Pi \in \RED_C[\Omega]$, we have
$$
\Pi'\rho \in (\RED_{B\rho}[\Omega] \Rightarrow \RED_{D\rho}[\Omega])
$$
for every $\rho$. 
Since $|B| < |C|$ and $|D| < |D|$, by the (outer) induction hypothesis,
we have $\RED_{B\rho}[\Omega] = \RED_{B\rho}[\Omega']$
and $\RED_{D\rho}[\Omega] = \RED_{D\rho}[\Omega']$. 
Therefore, we also have that 
$$
\Pi'\rho \in (\RED_{B\rho}[\Omega'] \Rightarrow \RED_{D\rho}[\Omega'])
$$
for every $\rho$. This means, by Definition~\ref{def:param red}, that 
$\Pi \in (\RED_C[\Omega']$. 

\item Suppose $\Pi$ ends with $\indR$:
$$
\infer[\indR]
{\Seq {\Gamma} {q\,\vec t}}
{\deduce{\Seq \Gamma {D\,Y^q\,\vec t}}{\Pi'}}
$$
where $q\,\vec x \defmu D\,q\,\vec x$ and $Y^q$ is a new parameter.
Since we identify derivations which differ only in the choice of
internal variables and parameters, we can assume without loss of generality
that $Y^q \# \Omega$. Note that since the body of a definition cannot
contain occurrences of parameters, we also have $X^p \# D\,Y^q\,\vec t$. 
Suppose $\Sscr$ is a reducibility candidate of type $I$, for some
closed term $I$ of the same syntactic type as $q$, and suppose
$\Pi_I$ is a normalizable derivation of $\Seq {D\,I\,\vec y}{I\,\vec y}$
such that
\begin{equation}
\label{eq:vac1}
\Pi_I[\vec u/\vec y] \in (
\RED_{(D\,Y^q\,\vec u)}[\Omega', (\Sscr,\Pi_I, I)/Y^q] \Rightarrow \Sscr
\,\vec u) 
\end{equation}
for every $\vec u$ of the appropriate types. To show that $\Pi \in (\RED_C[\Omega']$
we need to show that 
$$mc(\Pi'[(\Pi_I,I)/Y^q], \Pi_I[\vec t/ \vec y]) \in \Sscr\,\vec t$$ 

Since $|(D\,Y^q\,\vec u)| < |p\,\vec t|$ by Lemma~\ref{lm:level}, we have, by the outer induction hypothesis, 
$$
\RED_{(D\,Y^q\,\vec u)}[\Omega', (\Sscr,\Pi_I, I)/Y^q] 
=
\RED_{(D\,Y^q\,\vec u)}[\Omega, (\Sscr,\Pi_I, I)/Y^q] 
$$
Hence, by Statement~\ref{eq:vac1}, we also have
$$
\Pi_I[\vec u/\vec y] \in (
\RED_{(D\,Y^q\,\vec u)}[\Omega, (\Sscr,\Pi_I, I)/Y^q] \Rightarrow \Sscr \,\vec u)
$$
for arbitrary $\vec u$.  
Now since $\Pi \in (\RED_C[\Omega]$ (from the assumption), this means
that 
$$
mc(\Pi'[(\Pi_I,I)/Y^q], \Pi_I[\vec t/ \vec y]) \in \Sscr\,\vec t
$$
and therefore $\Pi$ is indeed in $\RED_C[\Omega']$. 

\item Suppose $\Pi$ ends with $\coindR$:
$$
\infer[\coindR]
{\Seq \Gamma {q\,\vec t}}
{
  \deduce{\Seq \Gamma {I\,\vec t}}{\Pi'}
  &
  \deduce
  {\Seq {I\,\vec y}{B\,I\,\vec y}}
  {\Pi_I}
}
$$
where $q\,\vec x \defnu B\,q\,\vec x$. 
Since $\Pi \in \RED_C[\Omega]$, by Definition~\ref{def:param red} ({\bf P4}),
there exist a parameter $Y^q$ such
that $Y^q \# \Omega$ and a reducibility candidate $(\Sscr : I)$
such that $\Pi' \in \Sscr$ and 
\begin{equation}
\label{eq:vac2}
\Pi'[\vec u/\vec y] \in (\Sscr\,\vec u \Rightarrow 
\RED_{B\,Y^q\,\vec u}[\Omega, (\Sscr,\Pi_I,I)/Y^q])
\end{equation}
for every $\vec u$. 
To show $\Pi \in \RED_C[\Omega']$ we need to find a reducibility
candidate satisfying {\bf P4}. We simply use $\Sscr$ as that candidate.
It remains to show that
$$
\Pi'[\vec u/\vec y] \in (\Sscr\,\vec u \Rightarrow 
\RED_{B\,Y^q\,\vec u}[\Omega', (\Sscr,\Pi_I,I)/Y^q])
$$
This follows from Statement (\ref{eq:vac2}) above and
the outer induction hypothesis, since 
$$\RED_{B\,Y^q\,\vec u}[\Omega, (\Sscr,\Pi_I,I)/Y^q]
= \RED_{B\,Y^q\,\vec u}[\Omega', (\Sscr,\Pi_I,I)/Y^q]$$
\end{itemize}

The converse, i.e., $\Pi \in \RED_C[\Omega']$
implies $\Pi \in \RED_C[\Omega]$, can be proved analogously.
In particular, in the case where $\Pi$ ends with $\coindR$,
we rely on the fact that the choice of the new parameter
$Y^q$ is immaterial, as long as it is new, so
we can assume without loss of generality that $Y^q \not = X^p$. 
\qed
\end{proof}

\begin{lemmacp}{lm:red candidate}
  Let $\Omega$ be a candidate substitution and $F$  a closed term of
  type $\alpha_1 \ra \cdots \ra \alpha_n \ra
  o$. Then the set
$ 
\Rscr = \{\Pi \mid \Pi \in \RED_{F\,\vec u}[\Omega] \ \hbox{for some
  $\vec u$} \} 
$
is a reducibility candidate of type $F\Omega$.
\end{lemmacp}
\begin{proof}
  Suppose $F = X^p$ for some $X^p \in supp(\Omega)$ and suppose
  $\Omega(X^p) = (\Sscr, \Pi, F)$.  Then in this case, we have $\Rscr
  = \Sscr$, so $\Rscr$ is a reducibility candidate of type $F$ by
  assumption.  If $F = X^p$ but $X^p \not \in supp(\Omega)$ then in
  this case $\Rscr = \NM_{X^p}$, and by Lemma~\ref{lm:norm red},
  $\Rscr$ is also a reducibility candidate.

  Otherwise, $F \not = X^p$ for any parameter $X^p$.  We need to show
  that $\Rscr$ satisfies {\bf CR0} - {\bf CR5}.  {\bf CR0} follows
  from Lemma~\ref{lm:red-subst}.  {\bf CR1} follows from
  Lemma~\ref{lm:red-norm}, and the rest follow from
  Definition~\ref{def:param red}.  \qed
\end{proof}

\begin{lemmacp}{lm:red param subst}
Let $\Omega$ be a candidate substitution and let $X^p$
be a parameter such that $X^p \# \Omega$. 
Let $S$ be a closed term of the same type as $p$ and let
$$
\Rscr = \{\Pi \mid \Pi \in \RED_{S\,\vec u}[\Omega] \ \hbox{for some $\vec u$} \}.
$$
Suppose $[\Omega, (\Rscr, \Psi, S\Omega)/ X^p]$ is a candidate
substitution, for some $\Psi$. 
Then
$$
\RED_{C[S/X^p]}[\Omega] = \RED_C[\Omega, (\Rscr, \Psi, S\Omega)/X^p].
$$
\end{lemmacp}
\begin{proof}
By induction on $|C|$. 
If $C = X^p \,\vec u$, then
$$\RED_C[\Omega, (\Rscr, \Psi,S\Omega)/X^p] = \Rscr\,\vec u = \RED_{S\,\vec u}[\Omega]$$
by assumption. The other cases where $C$ is $Y^q\,\vec u$
for some parameter $Y^q \not = X^p$ are straightforward. 
So suppose $C \not = Y^q\,\vec u$ for any $\vec u$ and any parameter $Y^q$. 
We show that for every $\Pi$, $\Pi \in \RED_{C[S/X^p]}[\Omega]$
iff $\Pi \in \RED_C[\Omega, (\Rscr, \Psi, S\Omega)/X^p]$. 
Note that if $X^p$ does not occur in $C$ then
$C[S/X^p] = C$, and by Lemma~\ref{lm:red vacuous} we have
$$
\RED_{C[S/X^p]}[\Omega] = \RED_C[\Omega]=\RED_C[\Omega, (\Rscr, \Psi, S\Omega)/X^p].
$$
So assume that $X^p$ is not vacuous in $C$. 
Let $\Omega' = [\Omega, (\Rscr, \Psi, S\Omega)/X^p]$. 
\begin{itemize}
\item Suppose $\Pi \in \RED_{C[S/X^p]}[\Omega]$. 
Then $\Pi$ is normalizable. We show, by induction on $nd(\Pi)$,
that $\Pi \in \RED_C[\Omega']$. 
Most cases follow immediately from the induction hypothesis.
The only interesting case is when $\Pi$ ends with $\oimpR$,
where $C = B \oimp D$, for some $B$ and $D$,  and $\Pi$ takes
the form:
$$
\infer[\oimpR]
{\Seq \Gamma {B[S/X^p]\Omega \oimp D[S/X^p]\Omega}}
{
  \deduce{\Seq {\Gamma, B[S/X^p]\Omega}{D[S/X^p]\Omega}}{\Pi'}
}
$$
Since $\Pi \in \RED_{C[S/X^p]}[\Omega]$, we have that
$$
\Pi'\rho \in (\RED_{B[S/X^p]\rho}[\Omega] \Rightarrow \RED_{D[S/X^p]\rho}[\Omega])
$$
for every $\rho$.
By the outer induction hypothesis (on the size of $C$), we have 
$$
\Pi'\rho \in (\RED_{B\rho}[\Omega'] \Rightarrow \RED_{D\rho}[\Omega'])
$$
hence $\Pi \in \RED_{C}[\Omega']$. 

\item The converse, i.e., $\Pi \in \RED_C[\Omega']$
implies $\Pi \in \RED_{C[S/X^p]}[\Omega]$, can be proved analogously.
\end{itemize}
\qed
\end{proof}

\end{document}